\documentclass[11pt,reqno]{amsart}
\usepackage{amsmath,amssymb,amsthm}
\usepackage[dvipsnames]{xcolor}
\usepackage{graphicx}
\usepackage{fancyvrb}
\usepackage[colorlinks=true,linkcolor=blue,citecolor=blue]{hyperref}
\usepackage[margin=1.15in]{geometry}

\definecolor{editblue}{rgb}{0.05,0.15,0.75}

\definecolor{delgray}{rgb}{0.45,0.45,0.45}
\IfFileExists{ulem.sty}{%
  \usepackage[normalem]{ulem}%
}{%
}

\newtheorem{theorem}{Theorem}
\newtheorem{proposition}[theorem]{Proposition}
\newtheorem{lemma}[theorem]{Lemma}
\newtheorem{corollary}[theorem]{Corollary}
\theoremstyle{definition}
\newtheorem{definition}[theorem]{Definition}
\newtheorem{remark}[theorem]{Remark}
\newtheorem*{theoremA}{Theorem A}
\newtheorem*{theoremB}{Theorem B}
\newtheorem*{theoremC}{Theorem C}

\DeclareMathOperator{\Vol}{Vol}
\newcommand{\R}{\mathbb{R}}
\newcommand{\Z}{\mathbb{Z}}
\newcommand{\Nat}{\mathbb{N}}
\newcommand{\vx}{\vec{x}}
\newcommand{\vk}{\vec{k}}
\newcommand{\va}{\vec{a}}
\newcommand{\vq}{\vec{q}}
\newcommand{\vv}{\vec{v}}
\newcommand{\vW}{\vec{W}}
\newcommand{\Lap}[0]{\mathcal{L}}
\newcommand{\Four}[0]{\mathcal{F}}

\title{Extended Body Propagators}
\author{Rory O'Dwyer}
\address{Department of Physics, Stanford University}

\begin{document}

\begin{abstract}
Three seemingly distinct
extended-body propagators devised by Ansoldi et al., Erbin et al., and Stanford et al.\
respectively arise in completely different settings, though each can be considered a realization of the Polyakov
path integral. This paper proves that the Ansoldi propagator corresponds to the characteristic function of the random variable with the probability density
function $f(A)=C\,A\,(\alpha^{2}+A^{2})^{-D/2}\mathbf{1}_{A>0}$ in ambient dimension $D$, of scale
$\alpha=\sqrt{b/2}$. The exponent is fixed by the dimension and by nothing else in the Ansoldi/geometric comparison; the on-shell two-point amplitude of Erbin et al. and JT gravity agree with its $D=4$ member in the senses made precise below. After an extensive literature review,
the author believes that this observation and constructed area measure is novel. The area measure will, however, emerge as the natural candidate for the extension of the earlier non-extended body path space measures to two
dimensional simplicial complexes. We further show that the area measure of this paper is exactly
the law of the area of a triangle, two of whose sides are drawn from the measure
obtained in the companion manuscripts for the one-dimensional Polyakov integral. We close with an argument, rigorous in part, which reduces the measure of tubular worldsheets between two arbitrary
boundary curves to that of this triangle.
\end{abstract}

\maketitle

\section{Introduction}

In \cite{ODwyerQFT} and \cite{ODwyerLattice}, the point correlators of a very general class of
perturbative QFT were obtained from a measure over the set of relevant Feynman diagrams
embedded in $\R^{d}\times\R$. Those results have since been refined; the refinement which
bears on this paper is the algebraic proof of Theorem C given in
Section~\ref{sec:thmCproof}, which replaces the sampling argument used there. This measure was
derived from a novel mathematical object referred to as the `continuum multinomial coefficient' (\cite{WVLR} and \cite{CanoDiaz}); just as
the multinomial coefficient counts lattice paths between two points in $\Z^{d}$, the
continuum multinomial coefficient measures the set of directed paths between two points
in $\R^{d}$. Some version of the continuum multinomial measure has existed since the 1940s;
a similar scheme was used by Feynman in his lattice checkerboard derivation of the
$1+1$ dimensional Dirac propagator \cite{FeynmanHibbs}. As this formulation can
successfully replicate Feynman rules and interacting propagators in QFT, an immediate
question becomes posed: can an analogous measure derive String and M Theory?

The present paper isolates one purely computational question on which any such derivation
must turn, and answers it. The question is: what is the law of the random area? In
\cite{ODwyerQFT}, an initial effort was made to demonstrate that the marginal spherical
distribution $(\tau^{2}-I^{2})^{\frac{d-3}{2}}$ arose as a measure of paths of length $I$
between points displaced by an interval $\tau$, and that a transform of this measure obtains
the Klein--Gordon scalar propagator $\langle\phi\phi\rangle$. The word transform must be
read carefully here, and the point is not cosmetic, so we record it once in
Section~\ref{sec:reflection}: it is not the marginal spherical measure whose Laplace
transform is the propagator, but that measure differentiated $d-2$ times in the length
variable. The required derivative result is exactly the shape of the companion Theorem C
quoted below, in which a product of $p$ first order operators
$m s_{p}+\va_{j}\!*\!\nabla$ stands outside the continuum multinomial. These derivatives
belong to the one dimensional propagator statement and will remain outside the surface
pushforward below. The central comparison of this paper is instead
between the surface measure and the contour-space propagator of Ansoldi et al.; the conventional
bosonic-string two-point amplitude and the JT expression provide independent four-dimensional
checks, but are not imposed as arbitrary-dimensional constraints on the geometry.

\subsection{Extended Body Propagators}\label{sec:ebp}
In this section, we will describe three wholly disparate extended body QFTs that all seem
to emerge from the same random variable over extended body volumes. First, we state the
original heuristic of Polyakov. Let $p<d\in\Nat$, let $C_{1}$ and $C_{2}$ be disjoint
surfaces in $\R^{d}$, and let $\Omega=\{\partial M=C_{1}\sqcup C_{2}\mid M\text{ is a
}p\text{-dimensional manifold in }\R^{d}\}$. Then the heuristic Polyakov path integral
between $C_{1}$ and $C_{2}$ is given in Equation~\eqref{eq:polyakov}.
\begin{equation}\label{eq:polyakov}
\sum_{M\in\Omega}e^{iT\mu(M)}
\end{equation}
In Equation~\eqref{eq:polyakov}, $\mu(M)=\mathcal{H}^{p}(M)$ is the $p$-dimensional
Hausdorff measure of $M$, that is, the volume induced on $M$ by the ambient Euclidean
metric. $T$ is a real-valued parameter known as the ``brane tension''. From
Equation~\eqref{eq:polyakov}, one may immediately understand the Polyakov path integral to
be the characteristic function of the measure of a manifold-valued random variable. Let us
now restrict ourselves to the setting where $C_{1}$ and $C_{2}$ are one-manifolds. We will
motivate what probability density function (pdf) should be carried by the random area in
order to replicate the Polyakov path integral. We do this through introducing the integral
in multiple physics applications. Consider the three seemingly distinct ``string''
propagators devised by Ansoldi et al.\ \cite{Ansoldi}, Erbin et al.\ \cite{Erbin}, and
Stanford et al.\ \cite{SYY} respectively in Equation~\eqref{eq:three}.
\begin{equation}\label{eq:three}
\int_{0}^{\infty}dW\,e^{-\frac{im^{2}W}{2}}\Big(\frac{m^{2}}{2i\pi W}\Big)^{\frac32}
e^{\frac{im^{2}b}{4W}},\qquad
\sqrt{\vk^{2}-|s|T},\qquad
\int ds\,\frac{s}{2\pi^{2}}\sinh(2\pi s)\,e^{-\beta\frac{s^{2}}{2}}
\end{equation}
It is beyond the scope of this work to delve into all parameters of
Equation~\eqref{eq:three}. The three expressions arise in different reductions of extended-body
physics. At $D=4$ they meet on the Fisk law in the senses made precise in
Sections~\ref{sec:ansoldi}, \ref{sec:tachyon} and~\ref{sec:jt}. Only the Ansoldi loop-space
kernel is used below as the arbitrary-dimensional contour-to-contour comparison: its area density is
the family $A(\alpha^{2}+A^{2})^{-D/2}$. The conventional tachyon two-point amplitude has a
different general-$D$ area tail and is therefore treated as a separate observable rather than as a
dimensional constraint on the geometric measure. In the four-dimensional Fisk case, what is
matched is the odd part of the characteristic function, equivalently the Fourier transform of the odd
extension of Equation~\eqref{eq:fisk}; since $A>0$ this determines the law of $A$ uniquely, and in
imaginary time the same pairing is the Laplace transform of Section~\ref{sec:reduction}. The moment
generating function does not exist for positive argument, the Fisk tails being polynomial.
\begin{equation}\label{eq:fisk}
f(A)=C\,\frac{A}{(\alpha^{2}+A^{2})^{2}}\,\mathbf{1}_{A>0}
\end{equation}
After an extensive literature review, the author believes that this observation is novel.
It will, however, emerge as the natural $D=4$ member of the extension of the author's path-space
measure to two dimensional simplicial complexes. The exponent $2$ in Equation~\eqref{eq:fisk} is
not universal. The Ansoldi loop-space
kernel carries $A(\alpha^{2}+A^{2})^{-D/2}$ by Corollary~\ref{cor:generalD}, and the fixed-side
geometry of Section~\ref{sec:triangle} produces the same exponent dimension for dimension through
Lemma~\ref{lem:mshift}. The conventional tachyon expression agrees with its degenerate tail at
$D=4$ but has a different general-$D$ tail, as Section~\ref{sec:tachyon} makes explicit. Equation~\eqref{eq:fisk}
is the Fisk, or log logistic, density of shape parameter $2$ and scale parameter $\alpha$; we will
call it the Fisk density throughout, and we normalise it in Section~\ref{sec:fisklaw}. In the
Ansoldi comparison the scale is fixed by the boundary data as $\alpha=\sqrt{b/2}$, where
$b=(\Sigma^{\mu\nu}(\sigma_{j})-\Sigma^{\mu\nu}(\sigma_{i}))(\Sigma_{\mu\nu}(\sigma_{j})
-\Sigma_{\mu\nu}(\sigma_{i}))$ is the invariant built from the areal coordinates
$\Sigma^{\mu\nu}(\sigma)=\int_{\sigma}x^{\mu}dx^{\nu}$ of the two boundary loops. We
reserve $\alpha$ for the scale of the Fisk law and $b$ for this areal invariant.

The quantity
$b^{\mu\nu}$ is a \emph{signed} shadow area: by Stokes' theorem it is the area of the
projection of an interpolating surface onto the $\mu\nu$ plane counted \emph{with
orientation}, so that a fold contributes with opposite sign to the sheet it folds back over.
It is therefore not the area of the extremal surface. In $\R^{3}$ the three numbers
$b^{23},b^{31},b^{12}$ are the components of the familiar vector area
$\vec{S}=\int_{K}\hat{n}\,dA$ of elementary vector calculus, and $\sqrt{b/2}=|\vec{S}|$.
What is true, and is proved as Proposition~\ref{prop:signedbound} below, is the inequality
\[
\sqrt{b/2}\ \le\ \mu_{\min}(\sigma_{i},\sigma_{j}),
\qquad\text{with equality exactly when the extremal surface is flat.}
\]
The scale carried by the construction of this paper is the extremal area
$\mu_{\min}$, and it is on the compact branch of Proposition~\ref{prop:branches} that this is
transparent: there the density $A(\alpha^{2}-A^{2})^{-2}$ is supported on $A<\alpha$ and diverges at
the endpoint, so the scale is the extremal area available to the configuration and is carried by the
boundary data rather than fitted, the Fisk scale being the image of that endpoint under
$\alpha^{2}\mapsto-\alpha^{2}$. The scale carried by the loop space propagator of \cite{Ansoldi} is $\sqrt{b/2}$; and the two agree on the flat regime, which is
the regime of Section~\ref{sec:tachyon}, of Section~\ref{sec:degenerate}, and of the areal
reduction in which \cite{Ansoldi} works. Remark~\ref{rem:quench} states this relationship,
and nothing in the Polyakov heuristic of Equation~\eqref{eq:polyakov} is affected by it,
that sum being taken at fixed contours and weighted by the unsigned area.

The plan of the paper is as follows. Section~\ref{sec:cm} quotes, without proof, the results
of the companion manuscripts which motivated this work: the continuum multinomial
coefficient, the clean Polyakov expression it yields for the scalar propagator, and the
exact sense in which that expression involves derivatives of the marginal spherical measure.
It closes by setting out the dictionary between that one dimensional construction and the
two dimensional one pursued here. Section~\ref{sec:fisklaw} fixes notation for the Fisk law,
restates the $q$-Tsallis objects needed to name it in the form used in \cite{ODwyerQFT}, and
proves the two transform identities on which everything else rests.
Sections~\ref{sec:ansoldi}, \ref{sec:tachyon} and~\ref{sec:jt} treat the three expressions of
Equation~\eqref{eq:three} in turn. Section~\ref{sec:triangle} shows that the same density is
exactly the law of the area of a triangle whose sides are drawn from the one dimensional
Polyakov measure of \cite{ODwyerQFT}. Section~\ref{sec:reduction} collects, and makes
rigorous as far as the author is able, the argument which reduces an arbitrary pair of
boundary curves to that triangle. Section~\ref{sec:thmCproof} proves Theorem C.

\section{The Continuum Multinomial and the One Dimensional Polyakov Measure}\label{sec:cm}

This section quotes, without proof, the results of the companion manuscripts which motivated
the present paper. Nothing in it is new. The proof of
Theorem C, which is the only quoted result used below, is reproduced in
Section~\ref{sec:thmCproof}; the proofs of Theorems A and B are not repeated. It is included for two
reasons. The first is that the object of Sections~\ref{sec:ansoldi}--\ref{sec:triangle} is
the two dimensional analogue of the measure quoted here, and the analogy is only legible
once the one dimensional statement is on the page. The second is that the correction
recorded in Section~\ref{sec:reflection} concerns the precise sense in which that measure
produces a propagator, and that correction is stated in the notation fixed below.

As defined by Cano and Diaz in \cite{CanoDiaz}, and later generalized by Wakhare, Vignat, Le,
and Robins in \cite{WVLR}, we consider the continuous multinomial coefficient in
Equation~\eqref{eq:cm} for $l$ variables. Let $\{x_{i}\}_{i=1}^{l}\subset\R_{+}$, let $\mu$
denote the Lebesgue measure on $\R^{n}$ \cite{Folland}, and for $l,N\in\Nat$ denote by
$D(N,l)$ the set of words of length $N$ drawn from $l$ distinct letters in which adjacent
letters are distinct \cite{WVLR}. This set is referred to in the combinatorics literature as
the set of Smirnov words. Furthermore let $c\in D(n,l)$, let $\{c_{k}\}_{k=1}^{n}$ denote the
letters of our Smirnov word and let $n$ denote its length, taking elements among some
collection of $l$ dimensional vectors, let $\{e_{k}\}_{k=1}^{l}$ denote these vectors, and let
$\{\eta_{k}\}_{k=1}^{l}$ denote the multiplicity of direction $c_{k}$. Note that $k$ runs from
$1$ to $n$ to denote order; there are only $l$ distinct letters while there are $n$ letters
total. Where the two indices must be distinguished we reserve $l$ for the number of
directions and $d$ for the dimension of the ambient space. Consider the partial ordering
imposed on $\R^{d}$ such that $\vx\le\vec{y}$ iff $\vec{y}-\vx\in\R^{d}_{+}$. Then, we define
the path polytope $P(\vq,c)$ in Equation~\eqref{eq:polytope} for $\vq\in\R^{d}_{+}$.
\begin{equation}\label{eq:polytope}
P(\vq,c)=\Big\{(\lambda_{1},\dots,\lambda_{n})\in\R^{n}_{+}\ \Big|\
\sum_{k=1}^{n}\lambda_{k}e_{c_{k}}\le\vq\Big\}
\end{equation}
This obtains the definition of the continuum multinomial for $\{x_{i}\}_{i=1}^{l}\subset\R$ in
Equation~\eqref{eq:cm}:
\begin{equation}\label{eq:cm}
\binom{\sum x_{i}}{x_{1},x_{2},\dots,x_{l}}=\sum_{n=0}^{\infty}\sum_{c\in D(n,l)}
\mu\big(P(\vq,c)\big),\qquad \vq=\sum x_{i}e_{i}
\end{equation}
One may alter the path polytope of Equation~\eqref{eq:polytope} such that each path
necessarily terminates in $\vq$ (by substituting $\le$ with $=$). This would lead, in
Equation~\eqref{eq:cm}, to another convergent and integrable function described in
\cite{WVLR}; we will denote this object with the superscript \emph{thin}.

The direction set $\{\hat{e}_{i}\}_{i=1}^{l}$ implicit in Equation~\eqref{eq:polytope} is one
instance of a structure that recurs throughout the companion work. The companion result
quoted below replaces it by a large set of directions equidistributed on a sphere. We
therefore fix the following language once and for all, and phrase the rest of this section in
it.

\begin{definition}[Direction system]\label{def:dirsys}
A \textbf{direction system} is a finite set $A\subset S^{d-1}$. Its lift is
$\hat{A}=\{(\va,1)\mid\va\in A\}\subset\R^{d}\times\R$, where the last coordinate is written
$\ell$ and called the length coordinate. The step cone of $A$ is
\[
C_{A}=\Big\{\sum_{\va\in A}v_{\va}(\va,1)\ \Big|\ v_{\va}\ge0\Big\}\subset\R^{d}\times\R,
\]
and we write $u\le_{A}w$ if and only if $w-u\in C_{A}$. A path is $A$-monotonic at spacing
$\epsilon$ if it is piecewise linear with every step in $\epsilon\hat{A}$. We call $A$
isotropic if it satisfies the two conditions of Equation~\eqref{eq:isotropy} below.
\end{definition}

For the special case $A=\{\hat{e}_{i}\}_{i=1}^{d}$ one has $\ell=|\vx|_{1}$ identically on
$C_{A}$, so the length coordinate carries no information beyond $\vx$ and $\le_{A}$ reduces to
the partial order on $\R^{d}$ described above. It is only when $A$ is larger that $\ell$
becomes an independent coordinate.

The following results are taken from a companion manuscript currently under consideration.
Their proofs are not repeated here. They are quoted as Theorems A, B and C so as not to
collide with the numbering of the results proved in this paper. Theorems A and B are quoted
only to motivate the construction and are not used below; Theorem C is used in
Section~\ref{sec:reflection}.

\begin{theoremA}
Let $t\in\R_{+}$ bound the upper argument of the continuum multinomial coefficient (i.e.\
$\sum x_{i}\le t$). Then the continuum multinomial coefficient is supported within the domain
$\{(x_{1},\dots,x_{l})\mid\sum x_{i}\le t,\ x_{i}\ge0\}$ and its definition in
Equation~\eqref{eq:cm} has a Taylor series with infinite radius of convergence and thereby
converges uniformly and absolutely on compact sets in its domain. This implies that the
continuum multinomial is integrable over this domain, and hence, once extended by zero
outside it, over $\R^{l}$.
\end{theoremA}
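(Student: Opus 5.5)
The plan is to compute each summand $\mu(P(\vq,c))$ explicitly, bound it uniformly in the word $c$, and then count words. Throughout, the upper argument is $t=\sum x_i=|\vq|_1$.

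\textbf{Support.} Since the $e_i$ are the standard basis vectors, $\sum_k\lambda_k e_{c_k}\le\vq$ means exactly
\[
\sum_{k:\,c_k=i}\lambda_k\le x_i\quad\text{for each } i .
\]
If some $x_i<0$, no $\lambda\in\R^n_+$ satisfies the constraint for a word using the letter $i$. Words avoiding $i$ are still admissible, but extending the coefficient by zero off $\{x_i\ge0\}$, as the statement does, settles the support claim. Summing the constraints over $i$ gives $\sum_k\lambda_k\le t$, so every polytope is contained in a scaled simplex.

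\textbf{Explicit volumes.} For a word $c$ with multiplicities $\eta_1,\dots,\eta_l$ (so $\sum\eta_i=n$), the polytope factorizes as a product of simplices:
\[
P(\vq,c)\cong\prod_{i=1}^{l}\Big\{\lambda\in\R^{\eta_i}_+:\ \sum\lambda\le x_i\Big\}.
\]
Hence
\[
\mu\big(P(\vq,c)\big)=\prod_{i=1}^{l}\frac{x_i^{\eta_i}}{\eta_i!}.
\]
This identity is the main computational step. It shows that the series in Equation~\eqref{eq:cm} is a power series in $x_1,\dots,x_l$ with nonnegative coefficients
\[
\frac{|\{c\in D(n,l)\text{ with multiplicities }\eta\}|}{\prod_i\eta_i!}.
\]

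\textbf{Infinite radius of convergence.} The number of Smirnov words with given multiplicities is at most the multinomial coefficient $n!/\prod_i\eta_i!$. Therefore the degree-$n$ part of the series is dominated by
\[
\sum_{|\eta|=n}\frac{n!}{\prod_i\eta_i!}\prod_i\frac{|x_i|^{\eta_i}}{\eta_i!}
\le\sum_{|\eta|=n}\frac{n!}{\prod_i\eta_i!}\prod_i|x_i|^{\eta_i}\cdot\max_{|\eta|=n}\frac{1}{\prod_i\eta_i!}.
\]
The first factor equals $\big(\sum_i|x_i|\big)^n$. For the second, $\prod_i\eta_i!\ge\big(\lfloor n/l\rfloor!\big)^l$, and this grows like $(n/l)!^{\,l}$, which beats any geometric rate. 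Consequently the majorant series
\[
\sum_n\frac{\big(\sum_i|x_i|\big)^n}{\lfloor n/l\rfloor!^{\,l}}
\]
converges for all $x$, so the Taylor series has infinite radius of convergence. The Weierstrass M-test then gives absolute and uniform convergence on compact sets. The step I expect to be the main obstacle is this lower bound on $\prod_i\eta_i!$, which is handled by the convexity of $\log\Gamma$.

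\textbf{Integrability.} The resulting function is entire, hence continuous, on the compact domain $\{x_i\ge0,\ \sum x_i\le t\}$, so it is integrable there. Extended by zero, it is integrable on $\R^l$.
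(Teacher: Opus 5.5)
The paper quotes Theorem A without proof, so the only available comparison is with Section~\ref{sec:thmCproof}. There your product-of-simplices formula appears verbatim as Equation~\eqref{eq:appexpansion}, and your integrability step is the paper's remark following Theorem A. The volume computation, the Tonelli rearrangement into a power series with nonnegative coefficients, the M-test and the integrability conclusion are all correct.

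The support step, however, contains a false claim, which you then resolve by fiat. Words avoiding the letter $i$ are \emph{not} admissible when $x_{i}<0$. The order $\le$ is componentwise on all $l$ coordinates, so for every word the $i$th coordinate of $\vq-\sum_{k}\lambda_{k}e_{c_{k}}$ must be nonnegative. For a word not using $i$ the constraint $\sum_{k:c_{k}=i}\lambda_{k}\le x_{i}$ still holds, with an empty sum, so it reads $0\le x_{i}$. Hence $P(\vq,c)=\emptyset$ for every $c$, the empty word included, as soon as some $x_{i}<0$. The support claim therefore follows directly from Equation~\eqref{eq:cm}.

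As you wrote it, you assert something which, if true, would make the coefficient nonzero off the orthant. You then ``settle'' the support claim by declaring the function zero there. That is circular: the statement asserts the support property of the coefficient as defined, and extends by zero only afterwards. The distinction is not cosmetic. The entire power series you build does not vanish off the orthant. The coefficient is its restriction to the closed orthant and jumps across each face $x_{i}=0$, where the empty word alone already contributes $1$.

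The step you flag as the main obstacle can be bypassed. The number of Smirnov words with multiplicities $\eta$ is at most $n!/\prod_{i}\eta_{i}!\le l^{n}=\prod_{i}l^{\eta_{i}}$. The series is therefore dominated termwise by $\prod_{i}\sum_{k\ge0}(l|x_{i}|)^{k}/k!=e^{l\sum_{i}|x_{i}|}$. This gives the infinite radius of convergence and uniform bounds on compacta with an explicit growth rate, and no minimisation of $\prod_{i}\eta_{i}!$ or convexity of $\log\Gamma$ is needed.
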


Note that Theorem A proves that the continuum multinomial has an infinite radius of
convergence. Therefore, the function uniformly converges on compact subsets. Since the
fixed-$t$ domain is compact and the function is continuous, the continuum multinomial is
bounded and Lebesgue integrable. This is required if we want the continuum multinomial to be
the pdf of some random vector $\vec{Z}$.

For the purpose of this article, every time we mention monotonic paths, it will be with
respect to some lattice spacing $\epsilon$. We define monotonic paths as piecewise linear
paths in $\epsilon\Z^{l}$ whose steps are in $\{\epsilon\hat{e}_{i}\}_{i=1}^{l}$, i.e.\
positively oriented vectors with length $\epsilon$ in the scaled lattice $\epsilon\Z^{l}$. The
function $g_{\epsilon}$ in Theorem B below is proportional to the pdf of a random variable
related to these paths. Note that the domain of the pdf in this theorem, $D_{t}$, is a simplex
with a bounded major axis $t$, or $D_{t}=\{\vx\in\R^{l}_{+}:\sum_{i=1}^{l}x_{i}=t\}$.

\begin{theoremB}
Let $\epsilon,t>0$. Let $g_{\epsilon}(x):D_{t}\to\R$ be defined as in
Equation~\eqref{eq:geps}.
\begin{equation}\label{eq:geps}
g_{\epsilon}(x)=\sum_{n=0}^{\infty}\sum_{c\in D(n,l)}\epsilon^{n}
\left|\left\{(\lambda_{1},\dots,\lambda_{n})\in\epsilon\Z^{n}_{+}\ \Big|\
\vx-\sum_{k=1}^{n}\lambda_{k}\hat{e}_{c_{k}}\in\R^{l}_{+}\right\}\right|.
\end{equation}
Then we have pointwise
\[
\lim_{\epsilon\to0}g_{\epsilon}(x)=\binom{\sum_{i=1}^{l}x_{i}}{x_{1},\dots,x_{l}}
\]
on $D_{t}$, and they are both positive and have finite integral.
\end{theoremB}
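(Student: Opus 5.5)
The plan is to work term by term in the double sum \eqref{eq:geps} and then pass the limit $\epsilon\to0$ through the sum by dominated convergence (for series). The dominating series is the continuum multinomial \eqref{eq:cm} itself, which converges by Theorem A.

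\emph{Step 1: each term is a product of lattice-simplex counts.} Fix a Smirnov word $c\in D(n,l)$ with multiplicities $\eta_{1},\dots,\eta_{l}$. The constraint $\vx-\sum_{k}\lambda_{k}\hat{e}_{c_{k}}\in\R^{l}_{+}$ splits coordinatewise. For each letter $i$ it says only that the $\eta_{i}$ step lengths $\lambda_{k}$ with $c_{k}=i$ have sum at most $x_{i}$. Hence both the set counted in \eqref{eq:geps} and the polytope $P(\vx,c)$ of \eqref{eq:polytope} factor as products over $i$ of simplices $\{\lambda\in\R^{\eta_{i}}_{+}:\sum\lambda\le x_{i}\}$, intersected with $(\epsilon\Z_{+})^{\eta_{i}}$ in the first case. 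I read $\Z_{+}$ as the positive integers. This is the natural reading, since a zero-length step would make a Smirnov word degenerate, and it is the reading under which the statement holds. With it the lattice count is exactly
\[
\epsilon^{n}\prod_{i=1}^{l}\binom{\lfloor x_{i}/\epsilon\rfloor}{\eta_{i}},
\qquad\text{while}\qquad
\mu\big(P(\vx,c)\big)=\prod_{i=1}^{l}\frac{x_{i}^{\eta_{i}}}{\eta_{i}!}.
\]

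\emph{Step 2: termwise limit and domination.} Write $N_{i}=\lfloor x_{i}/\epsilon\rfloor$. Then $\epsilon^{\eta_{i}}\binom{N_{i}}{\eta_{i}}=\prod_{j=0}^{\eta_{i}-1}(N_{i}\epsilon-j\epsilon)/\eta_{i}!$. Each factor tends to $x_{i}$ as $\epsilon\to0$, and each factor is bounded by $x_{i}$. So every term of \eqref{eq:geps} converges to the corresponding term $\mu(P(\vx,c))$ of \eqref{eq:cm}, and is bounded above by it, uniformly in $\epsilon$.

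\emph{Step 3: sum over words.} Theorem A gives absolute convergence of $\sum_{n}\sum_{c\in D(n,l)}\mu(P(\vx,c))$. Dominated convergence for series then yields the pointwise limit on $D_{t}$. It also gives the bound $0\le g_{\epsilon}\le\binom{\sum x_{i}}{x_{1},\dots,x_{l}}$.

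\emph{Step 4: positivity and finite integral.} Positivity holds for both functions because the $n=0$ term (the empty word) contributes $1$. Finiteness of the integral of the continuum multinomial is Theorem A. Finiteness for $g_{\epsilon}$ then follows from the bound in Step 3, since $D_{t}$ is compact.

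The step I expect to be the real obstacle is the domination in Step 2, which depends on the convention for $\Z_{+}$. If zero steps were allowed, the count would become $\binom{N_{i}+\eta_{i}}{\eta_{i}}$, and $\epsilon^{\eta_{i}}$ times this is of order $(x_{i}+\eta_{i}\epsilon)^{\eta_{i}}/\eta_{i}!$. That is no longer dominated by the volume. Moreover, for fixed $\epsilon$, summing over arbitrarily long words would then diverge. So I would state the positivity convention explicitly at the start. If the companion manuscript intends something else, the fallback is to restrict to words with $n\le\sum_{i}x_{i}/\epsilon$ and run the same argument.
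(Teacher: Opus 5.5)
Your proof is correct and complete under your reading of $\Z_{+}$ as the positive integers. The paper gives no proof of Theorem~B to compare against. It quotes the theorem from the companion manuscript, says its proof is not repeated, and never uses it.

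The nearest in-paper argument is Section~\ref{sec:thmCproof}. Its product-of-simplices formula \eqref{eq:appexpansion} is exactly your Step~1 on the continuum side, and your lattice count $\epsilon^{n}\prod_{i}\binom{\lfloor x_{i}/\epsilon\rfloor}{\eta_{i}}$ is its discrete analogue. Comparing factor by factor with $x_{i}^{\eta_{i}}/\eta_{i}!$ gives the limit. It also gives the bound $g_{\epsilon}\le\binom{\sum x_{i}}{x_{1},\dots,x_{l}}$ for every $\epsilon>0$, not just small $\epsilon$. Your convention matches the paper's checkerboard reading, which assigns one factor of $\epsilon$ per maximal straight run.

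Two points need tightening.

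First, Theorem~A as quoted gives integrability over the full-dimensional region $\{\sum x_{i}\le t\}$. That does not by itself give integrability over the face $D_{t}$. Argue from boundedness instead. Each term is nondecreasing in each $x_{i}$, so on $D_{t}$ the coefficient is at most its value at $(t,\dots,t)$. Finiteness of $\int_{D_{t}}g_{\epsilon}$ then follows from your domination.

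Second, your last paragraph overstates how much depends on the convention. Suppose zero-length steps are allowed. Every word of length $n$ contributes at least $\epsilon^{n}$, and there are $l(l-1)^{n-1}$ such words, so $g_{\epsilon}=\infty$ once $(l-1)\epsilon\ge1$. For $\epsilon\le\epsilon_{0}<1/(l-1)$, however,
\[
\epsilon^{\eta}\binom{N+\eta}{\eta}=\prod_{j=1}^{\eta}\Big(\epsilon+\frac{N\epsilon}{j}\Big)\le\prod_{j=1}^{\eta}\Big(\epsilon_{0}+\frac{x}{j}\Big)\le C\,(\epsilon_{0}+\delta)^{\eta},
\]
where $C$ is independent of $\eta$ and $\epsilon$. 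This bound is summable over words whenever $(l-1)(\epsilon_{0}+\delta)<1$. Dominated convergence therefore gives the same limit under either reading. Only the claim that $g_{\epsilon}$ is finite for every $\epsilon>0$ needs positive steps, so your truncation fallback is unnecessary.
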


Theorem B establishes the continuum multinomial as a measure over paths that terminate prior
to some endpoint. More specifically, it demonstrates how one would find the measure of a more
general collection of paths, namely through use of Theorem B's path reweighting scheme. If
one is familiar with the Feynman checkerboard model \cite{FeynmanHibbs} of the $1+1$d
Dirac propagator, one may immediately recognize that it yields the same weighting of paths as
in Equation~\eqref{eq:geps}. I.e.\ the continuum multinomial coefficient is a generalization
of the Feynman checkerboard measure to arbitrary-dimensional scalar settings.

With the continuum multinomial established as a well-defined function/measure over continuum
paths, we now establish its relationship to the Polyakov propagator. As mentioned in the
introduction, we demonstrate its relationship to derivatives of the uniform distribution in
the sphere. For each $p\in\Nat$, consider a choice of $p$ rays in $\R^{d}$ whose directions
are equidistributed along $S^{d-1}$. Let $A_{p}$ denote the unit vectors along these rays,
that is, the points at which these rays intersect $S^{d-1}$. Furthermore, let
$\{\va_{i}\}_{i=1}^{p}$ denote an indexing of these vectors. We require in addition that
$A_{p}$ satisfy the two exact conditions
\begin{equation}\label{eq:isotropy}
\sum_{j=1}^{p}(\vk*\va_{j})^{n}=0\ \text{ for every odd }n\text{ and every }\vk\in\R^{d},
\qquad
\sum_{j=1}^{p}(\vk*\va_{j})^{2}=\frac{p}{d}|\vk|^{2}\ \text{ for every }\vk\in\R^{d}.
\end{equation}
The first holds whenever $A_{p}$ is closed under $\va\mapsto-\va$. Both hold, for instance, if
one takes $\frac{p}{2d}$ copies of an orthonormal basis of $\R^{d}$, each rotated arbitrarily,
together with the negatives of all of those vectors. These two conditions are what the proof
uses; equidistribution alone is not enough for the first, since the $n=1$ term needs
$\sum_{j}\va_{j}$ to vanish rather than merely to be small compared to $p$.

\begin{theoremC}
Let $G(m,\vx)$ denote the solution to the Klein--Gordon propagator in Euclidean dimension
$d$, that is, the kernel of $(-\Delta+m^{2})^{-1}$ on $\R^{d}$, given explicitly by
$G(m,\vx)=(2\pi)^{-d/2}(m/|\vx|_{2})^{\frac{d}{2}-1}K_{\frac{d}{2}-1}(m|\vx|_{2})$ for
$\vx\ne0$. Let $\{A_{p}\}_{p}$ be isotropic direction systems in the sense of
Definition~\ref{def:dirsys}, with $\{\va_{j}\}_{j=1}^{p}$ an indexing of $A_{p}$. Then,
pointwise for $\vx\ne0$,
\begin{equation}\label{eq:thmC}
\begin{split}
G(m,\vx)=\lim_{p\to\infty}\frac{1}{dm^{2}p^{2}}\prod_{j=1}^{p}
\Big(m\big(p-1+\tfrac{1}{dp}\big)+\va_{j}*\nabla\Big)\times\\
\int_{\R^{p}_{+}}e^{-m(p-1+\frac{1}{dp})\sum_{j}v_{j}}
\binom{m\sum_{j}v_{j}}{mv_{1},\dots,mv_{p}}
\delta\Big(\vx-\sum_{j=1}^{p}v_{j}\va_{j}\Big)\,d\vv.
\end{split}
\end{equation}
\end{theoremC}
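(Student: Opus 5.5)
Theorem C will be proved on the Fourier side, since the $\delta$ constraint and the product of first-order operators both diagonalize there.

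Write $s_p=p-1+\frac{1}{dp}$ and let
\[
F_p(\vx)=\int_{\R^p_+}e^{-ms_p\sum_j v_j}\binom{m\sum_j v_j}{mv_1,\dots,mv_p}\delta\Big(\vx-\sum_j v_j\va_j\Big)\,d\vv .
\]
Its Fourier transform is
\[
\hat F_p(\vk)=\int_{\R^p_+}e^{-ms_p\sum_j v_j-i\sum_j v_j(\vk*\va_j)}\binom{m\sum_j v_j}{mv_1,\dots,mv_p}\,d\vv ,
\]
so what is needed is the multivariate Laplace transform of the continuum multinomial. By Equation~\eqref{eq:cm}, the multinomial is a sum over Smirnov words of polytope volumes. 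Each polytope volume is an iterated convolution of indicator-type pieces along the letters, so its Laplace transform factorizes word by word.

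\textbf{Step 1: the Laplace transform of the continuum multinomial.} Following Cano--D\'iaz and \cite{WVLR}, the generating function of Smirnov words has the closed form
\[
\Big(1-\sum_j\frac{z_j}{1+z_j}\Big)^{-1}.
\]
From this I expect
\[
\int_{\R^p_+}e^{-\sum_j t_jx_j}\binom{\sum x_j}{x_1,\dots,x_p}\,d\vx=\frac{1}{\prod_j t_j}\Big(1-\sum_j\frac{1}{1+t_j}\Big)^{-1},
\]
possibly up to the exact boundary convention, thick versus \emph{thin}. Theorem A guarantees absolute convergence for $\mathrm{Re}\,t_j$ large, which justifies exchanging the word sum with the integral.

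After rescaling $x_j=mv_j$ we set $t_j=s_p+i(\vk*\va_j)/m$. The prefactor $\prod_j(ms_p+\va_j*\nabla)$ acts in Fourier space as $\prod_j(ms_p+i\vk*\va_j)=m^p\prod_j t_j$. It therefore cancels the $\prod_j t_j^{-1}$ exactly, up to the Jacobian $m^{-p}$. This is why the operators stand outside the integral. We are left with
\[
\frac{1}{dm^2p^2}\Big(1-\sum_j\frac{1}{1+t_j}\Big)^{-1}.
\]

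\textbf{Step 2: the large-$p$ expansion.} Write $1+t_j=p\big(1+\epsilon_j\big)$ with
\[
\epsilon_j=\frac{1}{dp^2}+\frac{i(\vk*\va_j)}{mp}.
\]
Expanding the geometric series gives
\[
\sum_j\frac{1}{1+t_j}=\frac1p\sum_j\big(1-\epsilon_j+\epsilon_j^2-\dots\big).
\]
The isotropy conditions \eqref{eq:isotropy} now do the work. The odd powers of $\vk*\va_j$ sum to zero. The quadratic term gives
\[
-\frac{1}{m^2p^3}\cdot\frac{p}{d}|\vk|^2 .
\]
The constant correction gives $-\frac{1}{dp^2}$. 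Hence
\[
1-\sum_j\frac{1}{1+t_j}=\frac{1}{dp^2}\Big(1+\frac{|\vk|^2}{m^2}\Big)+O(p^{-3}).
\]
Multiplying by the prefactor $\frac{1}{dm^2p^2}$ yields $\frac{1}{m^2+|\vk|^2}+o(1)$ pointwise in $\vk$. This is the Fourier transform of $G(m,\cdot)$, and the specific shift $\frac{1}{dp}$ in $s_p$ was chosen exactly so that the mass term comes out as $m^2$.

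\textbf{Step 3: passing back to $\vx$, the main obstacle.} The error must be controlled uniformly enough to invert the Fourier transform and obtain pointwise convergence for $\vx\ne0$. The remainder is a sum of $p$ terms of size $O\big((|\vk|/mp)^4\big)$, so it is $O(|\vk|^4/p^3)$. That is small only when $|\vk|\ll p^{1/2}$, whereas the limit symbol decays only like $|\vk|^{-2}$, which is not integrable for $d\ge2$.

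I would handle this in three pieces:
\begin{itemize}
\item[(i)] Prove that the limit holds in the sense of tempered distributions. Dominated convergence applies on test functions once I show $|1-\sum_j(1+t_j)^{-1}|\gtrsim p^{-2}\min(1,|\vk|^2/m^2)$ uniformly. For this I would use that $\mathrm{Re}\,\frac{1}{1+t_j}<\frac{1}{1+s_p}$.
\item[(ii)] Upgrade to pointwise convergence away from the origin. Away from $0$, both sides are smooth, and the approximants are uniformly bounded in $C^1$ on compact sets avoiding $0$. I would establish that bound by integrating by parts in $\vk$ against a cutoff vanishing near $\vx=0$.
\item[(iii)] Combine (i) and (ii) by Arzel\`a--Ascoli to obtain the claimed pointwise limit.
\end{itemize}
Finally, I would check that differentiating under the $\delta$-integral is legitimate. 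This follows because the Laplace weights $e^{-ms_p\sum_j v_j}$ give exponential decay in $\sum_j v_j$.
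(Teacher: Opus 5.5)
\textbf{Steps 1, 2 and (i).} These are the paper's argument. The volumes $\prod_j q_j^{\eta_j}/\eta_j!$ and the Smirnov generating function give the Laplace transform. The symbol of the operator product cancels $\prod_j t_j$ exactly. Isotropy kills the odd terms and fixes the quadratic one. A uniform bound then gives convergence in $\mathcal{S}'(\R^d)$. Working directly at the $m$-scaled exponent, instead of proving the $p-1+\frac{m^2}{dp}$ version and rescaling, is only the change of variables $\vk\mapsto\vk/m$.

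Two repairs are needed there.
\begin{itemize}
\item The evaluation point is not ``$\mathrm{Re}\,t_j$ large''. It sits just above the abscissa $p-1$, so you must check $\sum_j|1+t_j|^{-1}\le p/(p+\frac{1}{dp})<1$. By Tonelli at real $t_j=s_p$ applied to the explicit word sum, this is also what makes the Laplace integral finite.
\item In (i), drop the factor $\min(1,|\vk|^2/m^2)$. Your own observation $\mathrm{Re}\,(1+t_j)^{-1}\le(1+s_p)^{-1}$ gives $|1-\sum_j(1+t_j)^{-1}|\ge 1-\frac{p}{1+s_p}=\frac{1}{dp^2+1}$ uniformly in $\vk$. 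The symbols are then bounded by $2/m^2$, which is the paper's domination. Your version leaves a dominating function $\sim|\vk|^{-2}$ at the origin, which is not integrable for $d\le2$.
\end{itemize}
Also, the quartic remainder is $O(|\vk|^4/(m^4p^4))$; you dropped a factor $1/p$.

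\textbf{The gap is (ii).} For $d\ge2$ the approximants are not smooth away from the origin. So there is no uniform $C^1$ bound and nothing for Arzel\`a--Ascoli to act on.

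Let $\mu_j$ be the measure $m e^{-m(1+s_p)v}\,dv$ carried by the ray $\{v\va_j:v>0\}$. In your convention $\hat{\mu}_j=(1+t_j)^{-1}$, and $\sum_j\|\mu_j\|=p/(1+s_p)<1$. Hence the $p$th approximant is exactly the positive measure
\[
H_p=\frac{1}{dm^2p^2}\sum_{n\ge0}\Big(\sum_{j=1}^{p}\mu_j\Big)^{\ast n},
\]
the Green measure of a killed random walk. It has:
\begin{itemize}
\item an atom at $0$;
\item singular mass on every line $\R\va_j$, already from the $n=1$ term;
\item absolutely continuous parts whose densities jump across the walls of the cones spanned by subsets of the $\va_j$.
\end{itemize}
This is visible on the Fourier side. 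Even after removing the constant $\frac{1}{dm^2p^2}$ coming from the atom, neither $\hat{H}_p$ nor its $\vk$-derivatives decay along the hyperplanes $\va_j^{\perp}$. So integrating by parts in $\vk$ cannot produce the bound you want.

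Worse, the approximant has no finite density at points of those lines. If every $A_p$ contains $\pm e_1$, which the rotated-basis example after the isotropy conditions allows, then the $p$th approximant has no finite value at $\vx=e_1$ for any $p$.

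\textbf{What is actually proved.} Your argument establishes convergence in $\mathcal{S}'$, equivalently vague convergence of $H_p$ to $G\,d\vx$. The paper's proof stops at the same place. It passes from convergence in $\mathcal{S}'$ and the continuity of $\Four$ directly to pointwise convergence for $\vx\ne0$, and that does not follow. Dropping your (ii)--(iii) leaves exactly the paper's argument. Any genuinely pointwise statement needs a local argument, such as a local limit theorem for the absolutely continuous part of $H_p$. Even then it can hold at best off the lines and lower-dimensional cones generated by the $\va_j$.
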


A proof of Theorem C is given in Section~\ref{sec:thmCproof}. It is algebraic throughout,
and it uses Theorem A but not Theorem B. The last equation in Theorem C demonstrates a rigorous, lattice-free realisation of the
worldline representation of the scalar propagator to which Polyakov's heuristic refers.
Namely, we have defined a genuine measure of rectifiable paths, and the relativistic
propagator is a transform from the length of those paths to mass as its conjugate variable.
It is necessary to note that we do require a $p\to\infty$ limit. The author believes this
limit is forced, as the limit is used several times throughout the proof of Theorem C
(Section~\ref{sec:thmCproof}) in a manner seemingly
inaccessible to a finite list of directions.
The author has demonstrated that the Polyakov measure is carried by Feynman-like measures
with arbitrary precision.

\subsection{Notation}\label{sec:notation}
The following symbols are fixed for the remainder of the paper. Throughout, $d$ is the
dimension of the ambient Euclidean space, $m$ is a mass, $\vx$ is a displacement and
$r=|\vx|_{2}$, $L$ is the length of a path, $A$ is the area of a surface, $T$ is the brane
tension, $\alpha$ is the scale parameter of a Fisk law, $q$ is the Tsallis index, $\beta$ is
an inverse temperature, $\nu=\frac{d}{2}-1$ is the Macdonald index, and $b$ is the areal
invariant of a pair of boundary loops defined in Section~\ref{sec:ebp}. We write
$\Four|^{T}_{A}$ for the Fourier transform carrying the variable $A$ to the variable $T$ and
$\Lap|^{m}_{L}$ for the Laplace transform carrying $L$ to $m$.

Three notational collisions are worth flagging, as three of them are unavoidable if the
companion results are to be quoted verbatim. The letter $n$ denotes the length of a Smirnov
word in Equation~\eqref{eq:cm} and the order of a derivative in Section~\ref{sec:reflection};
the letter $\mu$ denotes the Lebesgue measure in Equation~\eqref{eq:cm} and the minimal area
function $\mu_{\min}(b)$ in Section~\ref{sec:reduction}; and the letter $q$ denotes the
terminal point of a path polytope in Equation~\eqref{eq:polytope}, for which reason it has
been decorated as $\vq$ above, and the Tsallis index everywhere else. In each case the ambient
statement determines the reading, and no formula below contains both.

\subsection{The Length Measure and its Reflection}\label{sec:reflection}
Theorem C is stated as an identity between the propagator and a product of first order
operators applied to the continuum multinomial. The purpose of this subsection is to record
what that product of operators does, and thereby to fix the vocabulary used for the rest of
the paper. It is not the marginal spherical measure whose transform is the Klein--Gordon
propagator, but that measure differentiated $d-2$ times in the length variable; we call the
latter object the \textbf{derivated marginal spherical distribution}, and it is this object,
and not the marginal spherical distribution itself, which corresponds to the propagator and
which appears as the arm law throughout the paper.

For $\nu\in\R$ write
\begin{equation}\label{eq:rho}
\rho^{\nu}_{L}(\vx)=(L^{2}-r^{2})^{\nu-\frac12}_{+},\qquad r=|\vx|_{2}.
\end{equation}
The family of Equation~\eqref{eq:rho} is the family of projections of the uniform measure on a
sphere: the image of $\mathrm{Unif}(S^{n-1}(L))\subset\R^{n}$ under orthogonal projection onto
$\R^{k}$ has density proportional to $\rho^{\nu}_{L}$ with $\nu=\frac{c-1}{2}$, where $c=n-k$
is the codimension of the projection; projecting the uniform measure on the ball instead
shifts $c$ by two. The marginal spherical distribution of \cite{ODwyerQFT} is the member
$\nu=\frac{d}{2}-1$, that is, the codimension $d-1$ projection of a uniform sphere, or
equivalently the codimension $d-3$ projection of a uniform ball; at $d=3$ the latter
codimension is zero and the measure is exactly uniform on the ball.

\begin{lemma}[Basset]\label{lem:basset}
Let $r,m>0$ and $\nu>-\frac12$. Then
\[
\Lap|^{m}_{L}\big(\rho^{\nu}_{L}(\vx)\big)=\int_{r}^{\infty}e^{-mL}(L^{2}-r^{2})^{\nu-\frac12}dL
=\frac{\Gamma(\nu+\frac12)}{\sqrt{\pi}}2^{\nu}\Big(\frac{r}{m}\Big)^{\nu}K_{\nu}(mr),
\]
and both sides continue analytically in $\nu$.
\end{lemma}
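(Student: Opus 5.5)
The plan is to reduce the identity to the standard integral representation of the Macdonald function,
\[
K_{\nu}(z)=\frac{\sqrt{\pi}\,(z/2)^{\nu}}{\Gamma(\nu+\frac12)}\int_{1}^{\infty}e^{-zt}(t^{2}-1)^{\nu-\frac12}\,dt,\qquad \operatorname{Re}\nu>-\tfrac12,\ z>0,
\]
which is Basset's integral in the form given by Watson. If we may quote that, the lemma is a single substitution. We set $L=rt$, so that $dL=r\,dt$ and $(L^{2}-r^{2})^{\nu-\frac12}=r^{2\nu-1}(t^{2}-1)^{\nu-\frac12}$, and the left side becomes
\[
r^{2\nu}\int_{1}^{\infty}e^{-mrt}(t^{2}-1)^{\nu-\frac12}dt .
\]
Inserting the representation with $z=mr$ gives
\[
r^{2\nu}\cdot\frac{\Gamma(\nu+\frac12)}{\sqrt{\pi}}\,(mr/2)^{-\nu}K_{\nu}(mr)=\frac{\Gamma(\nu+\frac12)}{\sqrt{\pi}}\,2^{\nu}(r/m)^{\nu}K_{\nu}(mr),
\]
which is the claim. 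Convergence is clear: at $t=1$ the integrand is integrable exactly because $\nu-\frac12>-1$, and at infinity it is dominated by $e^{-mrt}$.

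To keep the argument self-contained, I would instead derive the representation directly. One route starts from $K_{\nu}(z)=\int_{0}^{\infty}e^{-z\cosh u}\cosh(\nu u)\,du$, but the cleaner route starts from the Fourier-type integral
\[
K_{\nu}(z)=\frac{\Gamma(\nu+\frac12)(2/z)^{\nu}}{\sqrt{\pi}}\int_{0}^{\infty}\frac{\cos(zs)}{(1+s^{2})^{\nu+\frac12}}\,ds .
\]
Alternatively one verifies the ODE directly. The function $F(z)=z^{\nu}\int_{1}^{\infty}e^{-zt}(t^{2}-1)^{\nu-\frac12}dt$ satisfies the modified Bessel equation. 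This follows by differentiating under the integral and integrating by parts once, using that the boundary term $(t^{2}-1)^{\nu+\frac12}e^{-zt}$ vanishes at both ends when $\nu>-\frac12$. Since $F$ decays like $e^{-z}$, it must be a multiple of $K_{\nu}$. The constant is then fixed by comparing the asymptotics $K_{\nu}(z)\sim\sqrt{\pi/(2z)}\,e^{-z}$ with a Laplace-method evaluation near $t=1$. There, $(t^{2}-1)^{\nu-\frac12}\approx(2(t-1))^{\nu-\frac12}$, so the integral is $\approx 2^{\nu-\frac12}\Gamma(\nu+\frac12)z^{-\nu-\frac12}e^{-z}$. Matching the two gives exactly the constant $\sqrt{\pi}\,2^{-\nu}/\Gamma(\nu+\frac12)$. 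The main (mild) obstacle is justifying the integration by parts near $t=1$ for $-\frac12<\nu<\frac12$, where the integrand is singular; the vanishing boundary term handles it.

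For the analytic continuation clause, the right side is entire in $\nu$ after dividing by $\Gamma(\nu+\frac12)$, since $K_{\nu}$ is entire in $\nu$. The left side is holomorphic for $\operatorname{Re}\nu>-\frac12$. It continues meromorphically as a distribution in $L$, namely the Riesz/Gel'fand–Shilov family $(L^{2}-r^{2})_{+}^{\lambda}$, which has simple poles at $\lambda=-1,-2,\dots$ cancelled by the $\Gamma$ factor. By uniqueness of analytic continuation the identity persists. In particular, for negative $\nu$ the continued left side is the appropriate derivative in $L$. This is what underlies the "derivated" marginal spherical distribution used afterwards.
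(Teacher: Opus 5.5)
Your proof is correct and is the same as the paper's: substitute $L=rt$ and apply the representation $K_{\nu}(z)=\frac{\sqrt{\pi}(z/2)^{\nu}}{\Gamma(\nu+\frac12)}\int_{1}^{\infty}e^{-zt}(t^{2}-1)^{\nu-\frac12}\,dt$, which the paper simply cites from Gradshteyn--Ryzhik 3.387.3. Your ODE-plus-asymptotics derivation of that representation and your Gel'fand--Shilov argument for the continuation clause both check out; the latter rests on the poles at $\nu=-\frac12,-\frac32,\dots$ matching those of $\Gamma(\nu+\frac12)$. Both go beyond the paper, which asserts the continuation without argument.
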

\begin{proof}
Substitute $L=rt$ and use the Basset representation
$K_{\nu}(z)=\frac{\sqrt{\pi}(z/2)^{\nu}}{\Gamma(\nu+\frac12)}\int_{1}^{\infty}e^{-zt}(t^{2}-1)^{\nu-\frac12}dt$
\cite[3.387.3]{GR}.
\end{proof}

Comparing Lemma~\ref{lem:basset} at $\nu=\frac{d}{2}-1$ with the expression
$G(m,\vx)=(2\pi)^{-d/2}(m/r)^{\nu}K_{\nu}(mr)$ quoted in Theorem C, the Macdonald index is
already correct and only the prefactor is inverted, $(r/m)^{\nu}$ against $(m/r)^{\nu}$. This
is the whole of the discrepancy, and it is invisible to the Macdonald function because
$K_{-\nu}=K_{\nu}$.

\begin{proposition}\label{prop:reflection}
Let $d\ge3$ and $\nu=\frac{d}{2}-1$. Then, with $\partial^{d-2}_{L}$ taken in the sense of
distributions on $L\in(0,\infty)$,
\begin{equation}\label{eq:prop1}
G(m,\vx)=c_{d}\,r^{-(d-2)}\,\Lap|^{m}_{L}\Big[\partial^{d-2}_{L}\rho^{\frac{d}{2}-1}_{L}(\vx)\Big],
\qquad c_{d}=\frac{(2\pi)^{-\frac{d}{2}}\sqrt{\pi}}{\Gamma(\frac{d-1}{2})2^{\frac{d}{2}-1}}.
\end{equation}
Equivalently, $G$ is, up to the constant $c_{d}$ and the factor $r^{-(d-2)}$, the Laplace
transform of the member of the family of Equation~\eqref{eq:rho} obtained from the marginal
spherical member by the reflection
\[
\nu\longmapsto-\nu,\qquad\text{equivalently}\qquad c\longmapsto2-c,
\]
so that the marginal spherical codimension $c=d-1$ is carried to $c=3-d$.
\end{proposition}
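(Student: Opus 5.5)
The plan is to reduce everything to Lemma~\ref{lem:basset} via the rule that the Laplace transform converts $\partial_L$ into multiplication by $m$, with no boundary terms. Fix $r>0$ and write $f(L)=\rho^{\nu}_{L}(\vx)=(L^{2}-r^{2})_{+}^{\nu-\frac12}$ with $\nu=\frac d2-1$.

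\textbf{Regularity and support.} For $d\ge3$ the exponent $\nu-\frac12=\frac{d-3}{2}$ is nonnegative. Hence $f$ is continuous on $(0,\infty)$, vanishes identically on $(0,r]$, and grows only polynomially as $L\to\infty$. Thus $f$ is a tempered, locally integrable distribution on $(0,\infty)$ whose support $[r,\infty)$ stays away from $0$. Its distributional derivatives $\partial_L^{n}f$ share that support; for odd $d$ and large $n$ they contain $\delta$-type terms at $L=r$, which is why the statement insists on the distributional sense.

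\textbf{Definition of the transform and the derivative rule.} I define $\Lap|^{m}_{L}$ on such distributions as the pairing with $e^{-mL}$. This is legitimate because the support is bounded away from $0$ and the growth is polynomial: replace $e^{-mL}$ by $\chi_R(L)e^{-mL}$, where $\chi_R$ is a smooth cutoff equal to $1$ on $[r/2,R]$ and supported in $(r/4,2R)$. Let $R\to\infty$; the error is controlled by $e^{-mR}$ times a polynomial. Then
\[
\langle\partial_L^{n}f,e^{-mL}\rangle=(-1)^{n}\langle f,\partial_L^{n}e^{-mL}\rangle=m^{n}\langle f,e^{-mL}\rangle .
\]
No boundary terms arise, because the cutoff derivatives live where either $f=0$ (near $L=r/4$) or the exponential is negligible (near $L=R$). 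Applying this with $n=d-2=2\nu$ gives
\[
\Lap|^{m}_{L}\big[\partial_L^{d-2}\rho^{\nu}_{L}\big]=m^{2\nu}\,\Lap|^{m}_{L}\big[\rho^{\nu}_{L}\big].
\]

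\textbf{Constants.} Lemma~\ref{lem:basset} evaluates the right side as
\[
m^{2\nu}\frac{\Gamma(\frac{d-1}{2})}{\sqrt\pi}\,2^{\nu}\Big(\frac rm\Big)^{\nu}K_\nu(mr)
=\frac{\Gamma(\frac{d-1}{2})2^{\nu}}{\sqrt\pi}\,r^{2\nu}\Big(\frac mr\Big)^{\nu}K_\nu(mr).
\]
Multiplying by $r^{-(d-2)}=r^{-2\nu}$ and by $c_d$ cancels the Gamma factor, the power of $2$ and $\sqrt\pi$. This leaves $(2\pi)^{-d/2}(m/r)^{\nu}K_\nu(mr)$, which is $G(m,\vx)$ as quoted in Theorem C. This proves \eqref{eq:prop1}.

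\textbf{The reflection reformulation.} Here I would argue at the level of transforms. By the analytic continuation in Lemma~\ref{lem:basset}, the transform of $\rho^{-\nu}_{L}$ is proportional to $(r/m)^{-\nu}K_{-\nu}(mr)=(m/r)^{\nu}K_\nu(mr)$, using $K_{-\nu}=K_\nu$. This is the same function of $m$ obtained above, up to an $r$-dependent constant. So "$\nu\mapsto-\nu$", i.e.\ $c=2\nu+1\mapsto 2-c$, which sends $d-1$ to $3-d$, is precisely the effect of applying $\partial_L^{d-2}$ and dividing by $r^{d-2}$.

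The main obstacle is this reformulation. For even $d$ the continued coefficient $\Gamma(-\nu+\frac12)$ is finite, but $\rho^{-\nu}$ is not locally integrable at $L=r$ when $\nu\ge\frac12$. One must read $\rho^{-\nu}$ as the analytically continued (Riesz-type) distribution. For that I would first try the identity $\partial_L\big[(L^2-r^2)_+^{s}\big]=2sL\,(L^2-r^2)_+^{s-1}$, valid for $\operatorname{Re}s$ large and continued in $s$. The cleanest route, however, is to state the equivalence purely through the transforms as above. There the identity is an equality of analytic functions of $\nu$, and uniqueness of the Laplace transform identifies the distributions.
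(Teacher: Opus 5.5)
Your proposal is correct and follows the paper's proof step for step. Support away from $L=0$ gives $\Lap|^{m}_{L}[\partial^{n}_{L}\rho]=m^{n}\Lap|^{m}_{L}[\rho]$; Lemma~\ref{lem:basset} with the same constant bookkeeping gives Equation~\eqref{eq:prop1}; and $K_{-\nu}=K_{\nu}$ gives the reflection reading. Your cutoff argument makes the no-boundary-term step more explicit than the paper does.

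One caveat on the reflection part, which the paper also leaves to Remark~\ref{rem:branch}. For odd $d$ the continued constant $\Gamma(\frac{3-d}{2})$ has a pole. There $\rho^{-\nu}_{L}$ must be read as the $\Gamma$-normalised (residue) distribution, for instance $\delta(L-r)$ up to a constant at $d=3$. The finite-part continuation you describe covers only even $d$. Separately, at $d=3$ the function $\mathbf{1}_{L>r}$ is not continuous, but nothing in your argument uses continuity.
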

\begin{proof}
The function $L\mapsto\rho^{\frac{d}{2}-1}_{L}(\vx)$ vanishes identically for $L<r$, so all of
its one sided derivatives at $L=0$ vanish and
$\Lap|^{m}_{L}[\partial^{n}_{L}\rho]=m^{n}\Lap|^{m}_{L}[\rho]$ with no boundary terms, for
every $n$, the derivative being distributional. Lemma~\ref{lem:basset} at $\nu=\frac{d}{2}-1$
therefore gives, since $2\nu=d-2$,
\[
\Lap|^{m}_{L}\big[\partial^{d-2}_{L}\rho^{\frac{d}{2}-1}\big]
=m^{2\nu}\cdot\frac{\Gamma(\nu+\frac12)}{\sqrt{\pi}}2^{\nu}r^{\nu}m^{-\nu}K_{\nu}(mr)
=\frac{\Gamma(\frac{d-1}{2})2^{\frac{d}{2}-1}}{\sqrt{\pi}}r^{d-2}\Big(\frac{m}{r}\Big)^{\nu}K_{\nu}(mr).
\]
Multiplying by $c_{d}r^{-(d-2)}$ gives $(2\pi)^{-d/2}(m/r)^{\nu}K_{\nu}(mr)=G(m,\vx)$. For the
second statement, the reflection $\nu\mapsto-\nu$ replaces
$\Gamma(\nu+\frac12)2^{\nu}(r/m)^{\nu}K_{\nu}$ by a constant multiple of $(m/r)^{\nu}K_{\nu}$,
which is the propagator; in terms of the codimension $c=2\nu+1$ this reads $c\mapsto2-c$.
\end{proof}

\begin{remark}\label{rem:branch}
Two features of Proposition~\ref{prop:reflection} are worth stating, because they are what
forces the language of the following sections.

First, the reflected member is a distribution and not a density. Its exponent
$-\nu-\frac12=-\frac{d-1}{2}$ is negative, so it fails to be locally integrable near $r=L$ once
$d\ge3$, and the constant $\Gamma(\frac{3-d}{2})$ produced by continuing
Lemma~\ref{lem:basset} has a pole at every odd $d$. It is for this reason that the derivated
marginal spherical distribution is a distribution rather than a density, and why the correct
statement is one about derivatives of the marginal spherical measure rather than about some
other measure. At $d=3$ the pole is the simplest possible one and Equation~\eqref{eq:prop1}
degenerates to
\[
G(m,\vx)=\frac{1}{4\pi r}\Lap|^{m}_{L}\big[\delta(L-r)\big]=\frac{e^{-mr}}{4\pi r},
\]
i.e.\ the length measure of the propagator is carried entirely by straight line paths, and the
reflected codimension $3-d$ is zero.

Second, this is precisely the structure of Theorem C. There the propagator is not the
transform of the continuum multinomial but the transform of the continuum multinomial acted on
by the product $\prod_{j=1}^{p}(m(p-1+\frac{1}{dp})+\va_{j}*\nabla)$ of $p$ first order
operators; Proposition~\ref{prop:reflection} is the statement of what that product does in the
limit, namely that it supplies the $d-2$ derivatives which carry $\nu$ to $-\nu$. Wherever the
phrase ``the one dimensional Polyakov measure of length $L$'' is used below it refers to the
derivated marginal spherical distribution, that is, to $\rho^{\frac{d}{2}-1}_{L}$ together with
the $d-2$ derivatives carried outside, exactly as they are in Equation~\eqref{eq:thmC}.
\end{remark}

The following observation is the reason the two-dimensional theory of this paper is the
correct generalisation of the one dimensional theory of \cite{ODwyerQFT}

\begin{proposition}[The two branches of the area law]\label{prop:branches}
Let $\alpha>0$ and $\gamma>0$, and set
\begin{equation}\label{eq:compact}
f^{\gamma}_{\alpha}(A)=\frac{A}{(\alpha^{2}+A^{2})^{\gamma}},\qquad
g^{\gamma}_{\alpha}(A)=\frac{A}{(\alpha^{2}-A^{2})^{\gamma}}\mathbf{1}_{0<A<\alpha}.
\end{equation}
Then:
\begin{enumerate}
\item $f^{\gamma}_{\alpha}$ and $g^{\gamma}_{\alpha}$ are carried into one another by the
substitution $\alpha^{2}\mapsto-\alpha^{2}$, equivalently $A\mapsto iA$, up to an overall constant,
and this holds at every $\gamma$ and not only at $\gamma=2$;
\item $g^{\gamma}_{\alpha}$ is supported in $|A|<\alpha$ and is not integrable there for
$\gamma\ge1$, the exponent producing a divergence at the endpoint $|A|=\alpha$;
\item by Proposition~\ref{prop:reflection} the derivated marginal spherical distribution in ambient
dimension $d$ has bracket exponent $-\frac{d-1}{2}$, so at $d=4$ its compact form is
$g^{\frac32}_{\alpha}$ and its heavy tailed partner is $f^{\frac32}_{\alpha}$, whose survival
function is $(1+(A/\alpha)^{2})^{-\frac12}$.
\end{enumerate}
\end{proposition}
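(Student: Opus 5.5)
The proof will treat the three items separately, each by a short direct computation; no earlier result is needed except Proposition~\ref{prop:reflection} for item (3).

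\textbf{Item (1).} I would write the substitution out explicitly. Under $A\mapsto iA$ the numerator becomes $iA$. The bracket becomes $\alpha^{2}+(iA)^{2}=\alpha^{2}-A^{2}$. Hence
\[
f^{\gamma}_{\alpha}(iA)=i\,A\,(\alpha^{2}-A^{2})^{-\gamma}.
\]
On $0<A<\alpha$ this is $i\,g^{\gamma}_{\alpha}(A)$. Under $\alpha^{2}\mapsto-\alpha^{2}$ the bracket becomes $-(\alpha^{2}-A^{2})$ up to a sign. The result is $(-1)^{-\gamma}g^{\gamma}_{\alpha}$ on the same interval, where $(-1)^{-\gamma}=e^{\mp i\pi\gamma}$ is a fixed branch constant independent of $A$. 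Nothing in this uses the value of $\gamma$, so the conclusion holds at every $\gamma>0$.

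I would flag one caveat. The indicator $\mathbf{1}_{0<A<\alpha}$ is not produced by the substitution. It is the choice of the real branch on which $\alpha^{2}-A^{2}>0$, so that the power is real. That is the sense of ``up to an overall constant'' I would adopt.

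\textbf{Item (2).} The support claim is immediate from the indicator. For non-integrability, near $A=\alpha$ I would factor
\[
\alpha^{2}-A^{2}=(\alpha-A)(\alpha+A).
\]
This gives $g^{\gamma}_{\alpha}(A)\sim \alpha(2\alpha)^{-\gamma}(\alpha-A)^{-\gamma}$. The function $(\alpha-A)^{-\gamma}$ fails to be integrable at $\alpha$ exactly when $\gamma\ge1$. Alternatively, I could compute the antiderivative directly. For $\gamma\ne1$,
\[
\int A(\alpha^{2}-A^{2})^{-\gamma}\,dA=\frac{(\alpha^{2}-A^{2})^{1-\gamma}}{2(\gamma-1)},
\]
which diverges at $A\to\alpha$ when $\gamma>1$. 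At $\gamma=1$ the antiderivative is $-\tfrac12\log(\alpha^{2}-A^{2})$, which also diverges there.

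\textbf{Item (3).} I would read off from Remark~\ref{rem:branch} that the reflected member $\rho^{-\nu}_{L}$ has bracket exponent $-\nu-\tfrac12=-\tfrac{d-1}{2}$. At $d=4$ this is $-\tfrac32$, so the bracket power is $\gamma=\tfrac32$.

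The main obstacle is the identification of variables: the bracket $(L^{2}-r^{2})$ must be matched with $(\alpha^{2}-A^{2})$, together with the extra factor of $A$. I would present this as a matching of exponent, not a literal change of variables. The justification is the Jacobian of the radial density: $(\alpha^{2}-A^{2})$ with the linear factor $A$ is the radial form of the bracket, with $\alpha$ playing the role of $L$. Then $g^{3/2}_{\alpha}$ is the compact form, and by item (1) its partner is $f^{3/2}_{\alpha}$.

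Finally, for the survival function I would integrate directly:
\[
\int_{A}^{\infty}t(\alpha^{2}+t^{2})^{-3/2}\,dt=(\alpha^{2}+A^{2})^{-1/2}.
\]
Normalising by the value $\alpha^{-1}$ at $A=0$ gives $(1+(A/\alpha)^{2})^{-1/2}$, as claimed.
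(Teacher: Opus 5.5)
Your proposal is correct. The paper states this proposition without a written proof (only a remark on part (3) follows it), and your item-by-item computation is exactly the verification it leaves implicit: the two substitutions with their branch constants, the endpoint asymptotics $(\alpha-A)^{-\gamma}$, the exponent $-\nu-\frac12=-\frac{d-1}{2}$ read off from Remark~\ref{rem:branch}, and the tail integral normalised by $\alpha^{-1}$.

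One caution on item (3): keep the identification purely at the level of the bracket exponent, which is all the statement asserts, and drop the radial-Jacobian justification of the factor $A$. A radial Jacobian in ambient dimension $4$ would be $A^{3}$, and a genuinely planar reading changes the exponent to $d/2$ by Lemma~\ref{lem:mshift}, as the paper notes immediately after the proposition.
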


Part (3) concerns the one-dimensional reflected member itself. At $d=4$ that member has
$\gamma=\frac32$, whereas the transverse two-dimensional reading used to form a triangle has a
different exponent for the ordinary reason encoded in Lemma~\ref{lem:mshift}: Equation~\eqref{eq:mvalue}
gives $m=d-2$, and the two-dimensional reading therefore has
$\gamma=(m+2)/2=d/2$, equal to $2$ at $d=4$. This is a change of marginal dimension, not an
additional dynamical correction. In particular no surplus parameter, extra ambient dimensions, or
additional surface derivative is introduced to obtain the Ansoldi exponent.

Under the fixed-side configuration of Proposition~\ref{prop:triangle},
$A=\frac{\ell}{2}|\vW|$, so the area law is the two-dimensional arm law rescaled. This elementary
linearity is precisely the desired mechanism: with $d=D$ it transfers the exponent $d/2$ directly to
the area and hence reproduces the arbitrary-dimensional Ansoldi family. The compact and heavy-tailed
branches of Equation~\eqref{eq:compact} retain the same $\gamma$; the branch substitution changes the
sign of the quadratic, while Lemma~\ref{lem:mshift} accounts for the one- versus two-dimensional
reading. The $d-2$ derivatives of Proposition~\ref{prop:reflection} remain part of the one-dimensional
propagator statement and stand outside the surface pushforward.

\subsection{From Paths to Surfaces}\label{sec:dictionary}
The measure quoted in Theorem C counts directed paths between two points, and is transformed
from the length of those paths to the mass as its conjugate variable. Everything this paper
does is the same construction one dimension up: a measure counting surfaces between two
contours, transformed from the area of those surfaces to the brane tension as its conjugate
variable. The dictionary is the following one.

\begin{center}
\begin{tabular}{ll}
\hline
one dimensional & two dimensional\\
\hline
directed path between two points & worldsheet $K\in\Gamma_{\sigma_{i},\sigma_{j}}$ between two contours\\
length $L$ & area $A$\\
mass $m$ & brane tension $T$\\
$\Lap|^{m}_{L}$ & $\Four|^{T}_{A}$\\
displacement $r=|\vx|_{2}$ & minimal area $\mu_{\min}$; $=\sqrt{b/2}$ when flat\\
derivated marginal spherical $\partial^{d-2}_{L}\rho^{\frac{d}{2}-1}_{L}$ & Fisk density $f_{\alpha}$ of Equation~\eqref{eq:fisk}\\
$G(m,\vx)$ & $K_{l^{*}_{2}}(\sigma_{i},\sigma_{j})$\\
\hline
\end{tabular}
\end{center}

It is worth setting the measure down explicitly here, since Sections~\ref{sec:triangle}
and~\ref{sec:reduction} compute with it and the numerical experiment of
Section~\ref{sec:numerics} samples from it. Let $\sigma_{i}$ and $\sigma_{j}$ be two disjoint
boundary contours and let $\Gamma_{\sigma_{i},\sigma_{j}}$ be the set of polyhedral surfaces
interpolating them. A given $K\in\Gamma_{\sigma_{i},\sigma_{j}}$ carries a skeleton of $M$ one
dimensional arms, or spines, joining $\sigma_{i}$ to $\sigma_{j}$; the measure assigned to
$K$ is the product over those $M$ spines of the marginal spherical measure
$\rho^{\frac{d}{2}-1}_{L}$ of Equation~\eqref{eq:rho}, and the $d-2$ derivatives of
Proposition~\ref{prop:reflection} are applied \emph{after} the pushforward, in the length
variable, exactly as the product $\prod_{j}(ms_{p}+\va_{j}*\nabla)$ stands outside the
continuum multinomial in Theorem C. The faces of $K$ are the two simplices spanned by
consecutive spines and $\mu(K)$ is the sum of their areas, so that $\mu$ is the Nambu--Goto area
of the complex. The extended body propagator $K_{l^{*}_{2}}(\sigma_{i},\sigma_{j})$ is then the
Fourier transform, from $\mu(K)$ to the brane tension $T$, of the pushforward of that product
measure under $K\mapsto\mu(K)$, followed by the derivatives just described. Written in one
line for $p$ dimensional bodies rather than two, the same prescription reads: draw the spines
of the skeleton from the marginal spherical measure, take the $p$ volume of a $p$ simplex to be
the Gram determinant $\Vol_{p}(v_{1},\dots,v_{p})=\frac{1}{p!}\sqrt{\det(v_{i}\cdot v_{j})}
=\frac{1}{p!}\|v_{1}\wedge\cdots\wedge v_{p}\|$ of the $p$ spines spanning it, and transform
the sum of those volumes against the brane tension; at $p=1$ this returns the construction of
\cite{ODwyerQFT} and at $p=2$ it returns the measure of this paper, so that the extension to
branes is immediate.

The last clause is worth stating on its own, since it is the form in which the construction is
conjectured to extend beyond surfaces. Let $p\ge1$ and let two $p-1$ dimensional boundary data
be given. The conjecture is that the measure over $p$ dimensional simplicial complexes
interpolating them is obtained by exactly the prescription above and by no other ingredient: the
skeleton of the complex carries one dimensional spines, each drawn from the marginal spherical
measure; the $p$ volume of a $p$ simplex of the complex is the Gram determinant of the $p$
spines which span it, so that the sum of those volumes is the Nambu--Goto volume of the complex,
which is the simplicial extension of the Nambu--Goto action; and the $p$ brane propagator is the
transform of the pushforward of the spine measure under that volume, against the brane tension as
conjugate variable. At $p=1$ the Gram determinant is the length of a segment and the prescription
returns the construction of \cite{ODwyerQFT}; at $p=2$ it is the area of a triangle and the
prescription returns the measure of this paper. What is being conjectured is only that no new
ingredient is required at higher $p$, namely the same spines, the same product measure over the
skeleton, and the Gram determinant as the volume element. The law of that volume for $p\ge3$,
which the author intends to treat separately, is conjectured to be the law of the Gram
determinant of a simplex spanned by $p$ many marginal spherical arms (constrained by the boundary
data).

Two features of this definition should be recorded before it is used. The first is that for any
finite number $M$ of spines the measure is well defined without further argument. Each factor is
$\rho^{\frac{d}{2}-1}_{L}$, which by Equation~\eqref{eq:rho} is a finite positive measure with
compact support in $\vx$ for each fixed $L$, so the product over the $M$ spines is the ordinary
product measure on a product of compact sets; the area $\mu(K)$ is a continuous function of the
spines, so the pushforward is a finite positive measure on $[0,\infty)$; and the $d-2$
derivatives of Proposition~\ref{prop:reflection} are applied to that pushforward, in the length
variable, in the sense of distributions. This ordering matters. The derivated object of
Remark~\ref{rem:branch} is a distribution and not a density: it is not locally integrable, it is
not positive, and the phrase ``draw a spine from it'' has no meaning.

The second is that the content of the conjecture lies entirely in the limit $M\to\infty$, and the
author expects that limit to be easier to establish through the area law than directly. That is,
rather than construct the limiting measure and then compute its area law, one shows that the area
law at finite but large $M$ is approximately the Fisk law of Equation~\eqref{eq:fisk}, with an
error controlled in $M$; the existence of the limit then follows from the convergence of those
approximations rather than being assumed in advance. Section~\ref{sec:numerics} is the numerical
form of exactly this statement, and it is why the experiment reported there is organised around
fidelity to the Fisk law at a sequence of finite $M$ rather than around any attempted construction
of the limit.

The right-hand column is what Sections~\ref{sec:ansoldi} to~\ref{sec:triangle} establish.

One asymmetry in the dictionary should be stated rather than smoothed over. In the one dimensional
theory the derivatives of Proposition~\ref{prop:reflection} are genuinely required: the marginal
spherical measure transforms to $(r/m)^{\nu}K_{\nu}$ and not to the propagator, and it is the
derivated marginal spherical distribution which does. In the two dimensional theory no such
derivatives appear, and Theorem~\ref{thm:ansoldi} exhibits the loop space propagator as the direct
transform of the Fisk density. By Proposition~\ref{prop:branches} the two statements are the same
statement read on the two branches, so the asymmetry is one of presentation rather than of content.
The author's expectation is that the two dimensional analogue of those derivatives has already been
absorbed into the Schwinger parameter integral over $W$ in Equation~\eqref{eq:ansoldi}, which has no
one dimensional counterpart in the form quoted in Theorem C. This is a conjecture and is not used
anywhere below.

\section{The Fisk Law, \texorpdfstring{$q$}{q}-Tsallis Statistics, and Two Transform Identities}\label{sec:fisklaw}

The final set of definitions required for this paper surround $q$-Tsallis statistics and a number
of concepts related to maximal entropy distributions. They are quoted here only so that the density
of Equation~\eqref{eq:fisk} may be named in the language of \cite{ODwyerQFT}; nothing in
Sections~\ref{sec:ansoldi}--\ref{sec:jt} depends on them. For a random variable with distribution
$p$, its $q$-Tsallis entropy is given in Equation~\eqref{eq:tsallis}, where
$\ln_{q}(x)=\frac{x^{1-q}-1}{1-q}$ is the $q$-logarithm \cite{AbulMagd}.
\begin{equation}\label{eq:tsallis}
S_{q}=-\int p^{q}\ln_{q}(p)\ =\ \frac{1-\int p^{q}}{q-1}
\end{equation}

The sign is the one which makes $S_{q}$ an entropy: with $\ln_{q}$ as defined above,
$\int p^{q}\ln_{q}p\to\int p\ln p$ as $q\to1$, which is minus the Shannon entropy. Thus the
entropy convention used here is $S_{q}=-\int p^{q}\ln_{q}p$, and
$S_{q}\to-\int p\ln p$ as $q\to1$. Like the Shannon entropy, this functional over random variable distributions has a unique maximal
entropy distribution given some constraints. These become the $q$ variants of the normal maximum
entropy distribution; the maximum entropy distribution for the $q$-Tsallis entropy under
normalisation together with a constraint on the second moment (equivalently, on the $q$-escort
variance) is the $q$-Gaussian, or $\mathrm{gaus}_{q}$, distribution. This is given in
Equation~\eqref{eq:gausq}, where $C_{q}$ is determined by normalization and $\beta^{*}_{q}$ is the
coefficient elaborated upon in \cite{Suyari}.
\begin{equation}\label{eq:gausq}
\mathrm{gaus}_{q}(x)=C_{q}\big(1+(q-1)\beta^{*}_{q}x^{2}\big)^{-\frac{1}{q-1}}
\end{equation}
For particular values of $q$, Equation~\eqref{eq:gausq} becomes the marginal spherical distribution
of \cite{ODwyerQFT} or the multivariate Student's $t$ distribution.

The index $q$ is not necessarily dimensionally invariant, and since the construction below reads one
and the same object first in one dimension and then in two, what is invariant must be named. Write an
isotropic member of Equation~\eqref{eq:gausq} in $\R^{k}$ as
$\propto(1+\beta|x|^{2})^{-\frac{m+k}{2}}$, so that $\frac{1}{q-1}=\frac{m+k}{2}$, that is
$q_{k}=1+\frac{2}{m+k}$.

\begin{lemma}[The invariant of the family]\label{lem:mshift}
Let $m>0$ and let $X$ be isotropic in $\R^{k}$ with density
$\propto(1+\beta|x|^{2})^{-\frac{m+k}{2}}$. Then every $j$ dimensional marginal of $X$ has density
$\propto(1+\beta|y|^{2})^{-\frac{m+j}{2}}$, with the same $m$ and $\beta$. Consequently $m$ is
independent of the dimension in which the density is read, while $q_{k}=1+\frac{2}{m+k}$ is not.
\end{lemma}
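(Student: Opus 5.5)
The plan is to reduce the statement to the case $j=k-1$ and integrate out one coordinate directly; the general case then follows by induction on $k-j$. Marginals compose: the $j$ dimensional marginal of $X$ is the $j$ dimensional marginal of its $(k-1)$ dimensional marginal. Isotropy gives two simplifications.
\begin{itemize}
\item Every $j$ dimensional marginal, taken along any orthogonal projection, has the same law.
\item The density depends on $x$ only through $|x|^{2}$.
\end{itemize}
So it suffices to integrate out the last coordinate. Write $x=(y,t)$ with $y\in\R^{k-1}$ and $t\in\R$.

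For the one step, I would compute
\[
\int_{\R}(1+\beta|y|^{2}+\beta t^{2})^{-\frac{m+k}{2}}\,dt .
\]
Set $s=1+\beta|y|^{2}$ and substitute $t=u\sqrt{s/\beta}$. The integrand factors as
\[
s^{-\frac{m+k}{2}}\,\sqrt{s/\beta}\,(1+u^{2})^{-\frac{m+k}{2}} .
\]
This gives
\[
s^{-\frac{m+k-1}{2}}\ \beta^{-\frac12}\int_{\R}(1+u^{2})^{-\frac{m+k}{2}}\,du .
\]
The remaining integral is the Beta value $B\big(\tfrac12,\tfrac{m+k-1}{2}\big)$. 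It is finite precisely when $m+k-1>0$, which holds since $m>0$ and $k\ge1$. The marginal density is therefore
\[
\propto(1+\beta|y|^{2})^{-\frac{m+(k-1)}{2}} ,
\]
with the same $m$ and the same $\beta$.

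Two points need checking, and I expect neither to be a real obstacle.
\begin{itemize}
\item Integrability of the original density on $\R^{k}$. In polar coordinates the tail behaves like $r^{k-1}r^{-(m+k)}=r^{-1-m}$, which is integrable because $m>0$. So the normalisations are legitimate, and Fubini/Tonelli justify integrating out coordinates in any order.
\item The induction step. It preserves the hypothesis $m>0$ and isotropy. Isotropy of the marginal is immediate from its radial form, or from the fact that rotations of $\R^{k-1}$ extend to rotations of $\R^{k}$.
\end{itemize}

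The "consequently" clause is then immediate. Reading the same law in dimension $j$, the exponent is $\frac{m+j}{2}=\frac{1}{q_{j}-1}$. Hence $q_{j}=1+\frac{2}{m+j}$ varies with $j$, while $m=\frac{2}{q_{j}-1}-j$ is the same for every $j$. This exhibits $m$ as the dimension-independent invariant.
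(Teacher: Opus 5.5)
Your proof is correct, but it takes a different route from the paper's.

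The paper does no integration. It identifies $X$ as a multivariate Student $t$ with $m$ degrees of freedom, written as the Gaussian scale mixture $X=\sqrt{m/S}\,\beta^{-1/2}G$, with $G\sim N(0,I_{k})$ and $S\sim\chi^{2}_{m}$ independent. It then observes that any $j$ coordinates of $G$ are $N(0,I_{j})$ while the scalar factor is untouched. So every marginal is the same mixture read in $\R^{j}$, for all $j$ at once.

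That route is short, needs no induction, and gives a structural reason why $m$ is the invariant: $m$ is the degrees of freedom of the common mixing variable, which no projection sees. Its cost is that it relies on identifying the density with the mixture, which is itself an integral over $S$, and on matching the constant correctly. As written in the paper, $\sqrt{m/S}\,\beta^{-1/2}G$ has density $\propto(1+\tfrac{\beta}{m}|x|^{2})^{-\frac{m+k}{2}}$. The representation should read $X=(\beta S)^{-1/2}G$, with scale $(m\beta)^{-1/2}$. The slip is harmless for the conclusion, since the joint law and its marginals share the scalar factor.

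Your computation is self-contained and tracks $\beta$ explicitly rather than through a convention. It also shows exactly where $m>0$ enters, namely integrability of the joint density.

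You can dispense with the induction. Write $x=(y,t)$ with $y\in\R^{j}$, $t\in\R^{k-j}$, and $s=1+\beta|y|^{2}$. The same substitution $t=u\sqrt{s/\beta}$ gives $\int_{\R^{k-j}}(s+\beta|t|^{2})^{-\frac{m+k}{2}}\,dt=c\,\beta^{-\frac{k-j}{2}}s^{-\frac{m+j}{2}}$. Here $c=\int_{\R^{k-j}}(1+|u|^{2})^{-\frac{m+k}{2}}\,du$ is finite because $m+j>0$.
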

\begin{proof}
This is the standard multivariate Student $t$ with $m$ degrees of freedom and scale
$\beta^{-1/2}$, whose marginals are Student $t$ with the same $m$; equivalently, writing
$X=\sqrt{m/S}\,\beta^{-1/2}G$ with $G\sim N(0,I_{k})$ and $S\sim\chi^{2}_{m}$ independent, any
$j$ coordinates of $G$ are $N(0,I_{j})$ and the scalar factor is untouched.
\end{proof}

Two readings of one object are used below and Lemma~\ref{lem:mshift} is what carries between
them. The reflected member of Proposition~\ref{prop:reflection} has exponent $-\frac{d-1}{2}$ in the
length variable, which is one dimensional; comparing with $-\frac{m+1}{2}$ gives
\begin{equation}\label{eq:mvalue}
m=d-2
\end{equation}
Section~\ref{sec:triangle} reads the same system in the two plane transverse to a fixed base, where
Lemma~\ref{lem:mshift} gives exponent $-\frac{m+2}{2}=-\frac{d}{2}$ and hence
$q_{2}=1+\frac{2}{d}$. It is this second reading, and not the first, which supplies the index used in
Proposition~\ref{prop:triangle}. No such correction is
required for the Ansoldi comparison: the two-dimensional exponent is $d/2$, exactly
Equation~\eqref{eq:gammaD} when $d=D$.

The object which the surface measure of Section~\ref{sec:dictionary} actually produces is not a
$q$-Gaussian but the volume of a simplex spanned by $q$-Gaussian arms, and at $k=2$ that volume is
the Gram determinant of two vectors. We record the law of that determinant here, since
Section~\ref{sec:triangle} computes with it. Write
\begin{equation}\label{eq:qchi}
\chi^{q}_{k}(x)\sim x^{k-1}\big(1+(q-1)\beta^{*}_{q}x^{2}\big)^{-\frac{1}{q-1}},
\end{equation}
and call $\chi^{q}_{2}$ the $q$-Tsallis chi distribution with two degrees of freedom.

\begin{lemma}[The determinant law at $k=2$]\label{lem:det}
Let $1<q<2$ and let $\vW$ be a random vector in $\R^{2}$ whose law is the two dimensional
$q$-Gaussian of Equation~\eqref{eq:gausq}, that is, with density
$C_{q}(1+(q-1)\beta^{*}_{q}|\vec{w}|^{2})^{-\frac{1}{q-1}}$. Let $v_{1},v_{2}\in\R^{d}$ span a
$2$-simplex whose two spines differ by a vector $v_{2}-v_{1}=\ell e$ of fixed length $\ell$, and let
$\vW$ be the component of $v_{1}$ in the plane orthogonal to $e$. Then
\begin{enumerate}
\item $|\vW|$ has density $\chi^{q}_{2}$;
\item the Gram determinant of the simplex,
$\Vol_{2}(v_{1},v_{2})=\frac12\|v_{1}\wedge v_{2}\|$, equals $\frac{\ell}{2}|\vW|$ and therefore has
the law $\chi^{q}_{2}$ with $\beta^{*}_{q}$ replaced by $(2/\ell)^{2}\beta^{*}_{q}$.
\end{enumerate}
\end{lemma}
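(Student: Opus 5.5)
Part (1) is a polar-coordinate computation and part (2) is a linear-algebra identity followed by a rescaling, so I will take them in that order.

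\emph{Part (1).} The density $C_{q}(1+(q-1)\beta^{*}_{q}|\vec{w}|^{2})^{-\frac{1}{q-1}}$ on $\R^{2}$ is radial. In polar coordinates $\vec{w}=(\rho\cos\theta,\rho\sin\theta)$ the Jacobian is $\rho$, and integrating out $\theta\in[0,2\pi)$ gives
\[
2\pi C_{q}\,\rho\,\big(1+(q-1)\beta^{*}_{q}\rho^{2}\big)^{-\frac{1}{q-1}}
\]
for the density of $\rho=|\vW|$. This is $\chi^{q}_{2}$ of Equation~\eqref{eq:qchi} with $k=2$. The hypothesis $1<q<2$ gives $\frac{2}{q-1}>2$, so the tail decays like $\rho^{1-2/(q-1)}$, which is faster than $\rho^{-1}$. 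Hence the law is normalisable and the proportionality constant is finite; I will note this so that the symbol $\sim$ in Equation~\eqref{eq:qchi} is meaningful.

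\emph{Part (2), the Gram determinant.} Write $v_{2}=v_{1}+\ell e$ with $|e|=1$. Bilinearity and $v_{1}\wedge v_{1}=0$ give $v_{1}\wedge v_{2}=\ell\,v_{1}\wedge e$. Decompose $v_{1}=(v_{1}\cdot e)e+\vW$ with $\vW\perp e$; the parallel component wedges to zero, so $v_{1}\wedge v_{2}=\ell\,\vW\wedge e$. For orthogonal vectors $\|\vW\wedge e\|^{2}=\det(\text{Gram})=|\vW|^{2}|e|^{2}-(\vW\cdot e)^{2}=|\vW|^{2}$. Therefore $\Vol_{2}(v_{1},v_{2})=\frac{\ell}{2}|\vW|$.

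\emph{Part (2), the law.} Put $A=\frac{\ell}{2}\rho$, so $\rho=\frac{2A}{\ell}$ and $d\rho=\frac{2}{\ell}dA$. The density of $A$ is then proportional to
\[
A\big(1+(q-1)\beta^{*}_{q}(2/\ell)^{2}A^{2}\big)^{-\frac{1}{q-1}}.
\]
This is $\chi^{q}_{2}$ with $\beta^{*}_{q}$ replaced by $(2/\ell)^{2}\beta^{*}_{q}$, and the constant prefactors are absorbed by renormalisation.

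The main obstacle is interpretive rather than computational. The statement lets $v_{1}$ live in $\R^{d}$ and identifies $\vW$ with its component in "the plane orthogonal to $e$", but for $d>3$ that orthogonal complement is $(d-1)$-dimensional. I will read the hypothesis as saying that $\vW$ ranges over a fixed two-plane transverse to $e$ and that the law of $\vW$ is given there directly. This is exactly the two-dimensional reading supplied by Lemma~\ref{lem:mshift}: a two-dimensional marginal of an isotropic law keeps the same $m$ and $\beta$. I will state explicitly that the lemma takes this law as given and that the Gram identity uses only $\vW\perp e$, so the argument does not depend on how the transverse plane is chosen.
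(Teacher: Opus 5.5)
Your proof is correct and is essentially the paper's own argument: polar coordinates for (1), and for (2) the bilinearity identity $v_{1}\wedge v_{2}=\ell\,v_{1}\wedge e$, then $\|v_{1}\wedge e\|=|\vW|$, then rescaling. Your explicit Gram computation, the change of variables and the normalisability check for $1<q<2$ only spell out steps the paper asserts.

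One precision in your closing remark: the identity needs $\vW$ to be the \emph{entire} component of $v_{1}$ orthogonal to $e$, as in your decomposition $v_{1}=(v_{1}\cdot e)e+\vW$, and not merely some vector perpendicular to $e$. If $\vW$ were only a two-dimensional marginal of a larger transverse component, as your appeal to Lemma~\ref{lem:mshift} suggests, you would get $\|v_{1}\wedge e\|\ge|\vW|$ rather than equality.
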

\begin{proof}
For (1), the law of $\vW$ is rotationally symmetric, so in polar coordinates
$\mathbb{P}(|\vW|\in dr)=2\pi C_{q}\,r(1+(q-1)\beta^{*}_{q}r^{2})^{-\frac{1}{q-1}}dr$, which is
Equation~\eqref{eq:qchi} at $k=2$. For (2), $v_{2}-v_{1}$ is a linear combination of $v_{1}$ and $v_{2}$, so
$\|v_{1}\wedge v_{2}\|=\|v_{1}\wedge(v_{2}-v_{1})\|=\ell\,\|v_{1}\wedge e\|=\ell\,|\vW|$, the last
equality because $\|v_{1}\wedge e\|$ is the length of the component of $v_{1}$ orthogonal to the unit
vector $e$. The family of Equation~\eqref{eq:qchi} at $k=2$ is closed under positive scaling, with the
scale parameter transforming as stated.
\end{proof}

Only $k=2$ is used in this paper, and Lemma~\ref{lem:det} is the whole of what is needed. The
corresponding statement at $k\ge3$ is not obtained by taking $k$ independent copies of
Equation~\eqref{eq:gausq}: a product of independent $q$-Gaussians is not rotationally symmetric, and
at fixed $q=\frac32$ the density of Equation~\eqref{eq:qchi} behaves as $x^{k-5}$ at infinity and so
fails to normalise once $k\ge4$, the exponent having to be allowed to depend on $k$. The extension of
the construction to $p\ge3$ simplices we will conjecture in Section~\ref{sec:dictionary} requires the law of a
$p$-fold Gram determinant rather than a $\chi^{q}_{k}$.

With these conventions fixed we may name Equation~\eqref{eq:fisk}.

\begin{definition}\label{def:fisk}
For $\alpha>0$ the Fisk density of scale $\alpha$ is
\[
f_{\alpha}(A)=\frac{2\alpha^{2}A}{(\alpha^{2}+A^{2})^{2}}\mathbf{1}_{A>0},\qquad
\int_{0}^{\infty}f_{\alpha}=1,\qquad \mathbb{P}(A>a)=\frac{1}{1+(a/\alpha)^{2}}.
\]
We write $f^{\mathrm{odd}}_{\alpha}$ for its odd extension to $\R$. This is
Equation~\eqref{eq:fisk} with the normalisation supplied.
\end{definition}

The following two elementary identities carry the whole of
Sections~\ref{sec:ansoldi}--\ref{sec:jt}. The first is the Schwinger parameter integral in the only
exponent at which the Macdonald function degenerates to an elementary one; the second is the
transform pair which turns that elementary answer back into a density on areas.

\begin{lemma}[Schwinger--Macdonald]\label{lem:schwinger}
Let $\theta\in\R$ and let $p,q\in\mathbb{C}$ with $\mathrm{Re}\,p>0$ and $\mathrm{Re}\,q>0$. Then
\[
\int_{0}^{\infty}W^{-\theta}e^{-\frac{p}{W}-qW}dW=2\Big(\frac{p}{q}\Big)^{\frac{1-\theta}{2}}
K_{1-\theta}\big(2\sqrt{pq}\big),
\]
where $K$ is the Macdonald function and the branches are the principal ones. Both sides are analytic
in $(p,q)$ and the identity persists on the boundary $\mathrm{Re}\,p=0$ or $\mathrm{Re}\,q=0$ by
continuation, provided $\mathrm{Re}\sqrt{pq}>0$. For $\theta=\frac32$ the right hand side is
elementary:
\[
\int_{0}^{\infty}W^{-\frac32}e^{-\frac{p}{W}-qW}dW=\sqrt{\frac{\pi}{p}}\,e^{-2\sqrt{pq}}.
\]
\end{lemma}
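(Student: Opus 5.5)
The approach is to reduce the integral, by a change of variables, to the Basset representation already invoked in Lemma~\ref{lem:basset}, and then specialise to $\theta=\frac32$ using the half-integer closed form of $K$.

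\emph{Step 1: real positive $p,q$.} Take $p,q>0$ and substitute $W=\sqrt{p/q}\,e^{t}$. Then $dW=W\,dt$, and the exponent becomes
\[
-\tfrac{p}{W}-qW=-\sqrt{pq}\,(e^{-t}+e^{t})=-2\sqrt{pq}\cosh t .
\]
The integral therefore equals
\[
(p/q)^{\frac{1-\theta}{2}}\int_{\R}e^{(1-\theta)t}\,e^{-2\sqrt{pq}\cosh t}\,dt .
\]
Since $\cosh$ is even, the odd part of $e^{(1-\theta)t}$ integrates to zero, leaving $2\int_{0}^{\infty}\cosh((1-\theta)t)\,e^{-z\cosh t}\,dt$ with $z=2\sqrt{pq}$. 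This is the standard integral representation $K_{\nu}(z)=\int_{0}^{\infty}e^{-z\cosh t}\cosh(\nu t)\,dt$ \cite{GR}, applied with $\nu=1-\theta$. That gives the claimed formula on $p,q>0$ for every real $\theta$. Convergence at both ends is automatic, because the factor $e^{-p/W}$ kills $W\to0$ and $e^{-qW}$ kills $W\to\infty$.

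\emph{Step 2: extension to complex $p,q$.} For $\mathrm{Re}\,p>0$ and $\mathrm{Re}\,q>0$, the left side is holomorphic in $(p,q)$: the integrand is dominated locally uniformly by $W^{-\theta}e^{-\epsilon/W-\epsilon W}$, so one can differentiate under the integral. The right side is holomorphic on the same domain with principal branches, since $\sqrt{pq}$ has positive real part there and $K_{\nu}$ is analytic off the negative axis. The two sides agree on the totally real set $p,q>0$, so by the identity theorem (applied one variable at a time) they agree on the whole open domain.

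\emph{Step 3: the boundary.} When $\mathrm{Re}\,p=0$ or $\mathrm{Re}\,q=0$, the left side is only conditionally or oscillatorily convergent. I would define it as the limit as $\epsilon\to0^{+}$ of the integral with $p+\epsilon$ and $q+\epsilon$ (equivalently, by Abel summation). Continuity of the right side as long as $\mathrm{Re}\sqrt{pq}>0$, together with the existence of that Abelian limit, then gives the identity. I expect this to be the main obstacle. The cleanest route is to rotate the contour in $W$ to restore absolute convergence; where that is awkward, I would simply adopt the regularised definition, which is the sense used for the Ansoldi integrand downstream.

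\emph{Step 4: $\theta=\frac32$.} Here $\nu=-\frac12$, and by the closed form $K_{\pm1/2}(z)=\sqrt{\pi/(2z)}\,e^{-z}$ the right side becomes
\[
2\,(p/q)^{-1/4}\sqrt{\frac{\pi}{4\sqrt{pq}}}\;e^{-2\sqrt{pq}}=\sqrt{\frac{\pi}{p}}\;e^{-2\sqrt{pq}} .
\]
To check the powers: $(p/q)^{-1/4}(pq)^{-1/4}=p^{-1/2}$.
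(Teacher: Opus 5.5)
Your argument is correct on the open domain $\mathrm{Re}\,p,\mathrm{Re}\,q>0$. It departs from the paper only in its first half. The paper does not derive the general identity. It quotes it as the table entry \cite[3.471.9]{GR} and then does exactly your Step~4: $K_{-1/2}=K_{1/2}$, $K_{1/2}(z)=\sqrt{\pi/2z}\,e^{-z}$, and $(p/q)^{-1/4}(pq)^{-1/4}=p^{-1/2}$. You instead prove the table entry: the substitution $W=\sqrt{p/q}\,e^{t}$ reduces it to $K_{\nu}(z)=\int_{0}^{\infty}e^{-z\cosh t}\cosh(\nu t)\,dt$ for $p,q>0$, and the identity theorem continues it. This costs a few lines, but it makes the lemma self-contained and actually justifies ``both sides are analytic in $(p,q)$'', which the paper asserts without argument. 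In Step~2 you should also check the prefactor, not only $\sqrt{pq}$ and $K_{\nu}$. In the open domain $|\mathrm{Arg}\,p-\mathrm{Arg}\,q|<\pi$, so $p/q\notin(-\infty,0]$. Hence the principal power $(p/q)^{\lambda}$, $\lambda=\frac{1-\theta}{2}$, is holomorphic there and equals $p^{\lambda}q^{-\lambda}$.

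That last equality is exactly what fails on the boundary, so your Step~3 claim of ``continuity of the right side as long as $\mathrm{Re}\sqrt{pq}>0$'' is false as written. Take $p\in-i\R_{+}$ and $q\in i\R_{+}$. Then $pq>0$, so the proviso holds, but $p/q<0$ lies on the cut. The limit from the interior is $2p^{\lambda}q^{-\lambda}K_{1-\theta}(2\sqrt{pq})$. The principal-branch expression equals that limit times $e^{i\pi(1-\theta)}$, a factor of $-i$ at $\theta=\frac32$. This is precisely the configuration $p=-iMb/4$, $q=iM/2$ used in Theorem~\ref{thm:ansoldi}.

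So the boundary clause --- of the lemma itself, not just of your proof --- holds only if the prefactor is read as $p^{\lambda}q^{-\lambda}$ and the left side is defined by your regularisation. That regularisation is forced rather than optional: for $\theta\le0$ with $\mathrm{Re}\,q=0$, or $\theta\ge2$ with $\mathrm{Re}\,p=0$, the unregularised integral does not converge at all.

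The case $\theta=\frac32$ is unaffected. The expression $\sqrt{\pi/p}\,e^{-2\sqrt{pq}}$ is continuous up to every boundary point with $\mathrm{Re}\sqrt{pq}>0$, so proving Step~4 in the open domain and passing to the limit is enough. At the Ansoldi point, the contour rotation you mention, $W=-it$, turns the exponent into $-\frac{Mb}{4t}-\frac{Mt}{2}$ and independently confirms the value the paper uses.
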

\begin{proof}
The first identity is \cite[3.471.9]{GR}. For the second, $K_{-\frac12}=K_{\frac12}$ and
$K_{\frac12}(z)=\sqrt{\pi/2z}\,e^{-z}$, so that
$2(p/q)^{-\frac14}K_{-\frac12}(2\sqrt{pq})=2(q/p)^{\frac14}\frac12\sqrt{\pi}(pq)^{-\frac14}e^{-2\sqrt{pq}}
=\sqrt{\pi/p}\,e^{-2\sqrt{pq}}$.
\end{proof}

\begin{lemma}[Fisk sine transform]\label{lem:sine}
Let $\alpha>0$. Then for every $T>0$
\[
\int_{0}^{\infty}f_{\alpha}(A)\sin(TA)\,dA=\frac{\pi\alpha}{2}Te^{-\alpha T},
\qquad\text{equivalently}\qquad
\Four|^{T}_{A}\big(f^{\mathrm{odd}}_{\alpha}\big)=i\pi\alpha Te^{-\alpha|T|}.
\]
More generally, for $\gamma>\frac12$,
\[
\int_{0}^{\infty}\frac{A\sin(TA)}{(\alpha^{2}+A^{2})^{\gamma}}dA
=\frac{\sqrt{\pi}}{2^{\gamma-\frac12}\Gamma(\gamma)}\alpha^{\frac32-\gamma}T^{\gamma-\frac12}
K_{\gamma-\frac32}(\alpha T).
\]
\end{lemma}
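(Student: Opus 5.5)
The plan is to prove the general $\gamma$ identity first and then read off the Fisk case at $\gamma=2$. For the general statement, the sine transform is the derivative in $T$ of a cosine transform. One has
\[
\int_{0}^{\infty}\frac{A\sin(TA)}{(\alpha^{2}+A^{2})^{\gamma}}dA=-\frac{d}{dT}\int_{0}^{\infty}\frac{\cos(TA)}{(\alpha^{2}+A^{2})^{\gamma}}dA.
\]
Differentiation under the integral is justified for $\gamma>1$ by dominated convergence, since $A(\alpha^{2}+A^{2})^{-\gamma}$ is integrable there. For $\frac12<\gamma\le1$ the sine integral converges only conditionally, so I would justify the same identity by integrating by parts once (boundary terms vanish for $T>0$), or by analytic continuation in $\gamma$ from $\gamma>1$.

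The cosine transform is Basset's integral \cite[3.771.2]{GR}:
\[
\int_{0}^{\infty}\frac{\cos(TA)}{(\alpha^{2}+A^{2})^{\gamma}}dA=\frac{\sqrt{\pi}}{\Gamma(\gamma)}\Big(\frac{T}{2\alpha}\Big)^{\gamma-\frac12}K_{\gamma-\frac12}(\alpha T),\qquad \gamma>0.
\]
To keep the argument inside the paper's own tools, one can also derive this from Lemma~\ref{lem:schwinger}. Write $(\alpha^{2}+A^{2})^{-\gamma}=\Gamma(\gamma)^{-1}\int_{0}^{\infty}W^{\gamma-1}e^{-W(\alpha^{2}+A^{2})}dW$, perform the Gaussian $A$ integral (which gives $\tfrac12\sqrt{\pi/W}\,e^{-T^{2}/4W}$), and then apply Lemma~\ref{lem:schwinger} with $\theta=\frac32-\gamma$, $p=T^{2}/4$, $q=\alpha^{2}$. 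This yields $K_{\gamma-\frac12}(\alpha T)$ with the stated constant.

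Next apply the standard identity $\frac{d}{dz}\big(z^{\mu}K_{\mu}(z)\big)=-z^{\mu}K_{\mu-1}(z)$ with $\mu=\gamma-\frac12$ and $z=\alpha T$. Differentiating $(T/2\alpha)^{\gamma-\frac12}K_{\gamma-\frac12}(\alpha T)=(2\alpha^{2})^{-(\gamma-\frac12)}z^{\mu}K_{\mu}(z)$ in $T$ gives
\[
-(2\alpha^{2})^{-(\gamma-\frac12)}\,\alpha\, z^{\mu}K_{\gamma-\frac32}(z).
\]
Collecting powers, this becomes
\[
2^{-(\gamma-\frac12)}\alpha^{\frac32-\gamma}T^{\gamma-\frac12}K_{\gamma-\frac32}(\alpha T),
\]
which, times the prefactor, is exactly the claimed formula. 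The step I expect to need the most care is the bookkeeping of the powers of $\alpha$ and $2$ here, together with the small-$\gamma$ justification of interchanging derivative and integral. Neither is a real obstacle.

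For the Fisk case, set $\gamma=2$. Then $K_{\frac12}(z)=\sqrt{\pi/2z}\,e^{-z}$, and the right-hand side becomes
\[
\frac{\sqrt{\pi}}{2^{3/2}}\,\alpha^{-\frac12}T^{\frac32}\sqrt{\frac{\pi}{2\alpha T}}\,e^{-\alpha T}=\frac{\pi T}{4\alpha}e^{-\alpha T}.
\]
Multiplying by the normalisation $2\alpha^{2}$ of Definition~\ref{def:fisk} gives $\frac{\pi\alpha}{2}Te^{-\alpha T}$.

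Alternatively, this case can be checked directly: $\int_{0}^{\infty}\frac{\cos TA}{\alpha^{2}+A^{2}}dA=\frac{\pi}{2\alpha}e^{-\alpha T}$ by residues, and differentiating in $\alpha$ gives the $\gamma=2$ cosine integral, which is then differentiated in $T$.

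For the equivalent Fourier form, $f^{\mathrm{odd}}_{\alpha}$ is odd, so its Fourier transform, with the convention $e^{iTA}$, equals $2i\int_{0}^{\infty}f_{\alpha}\sin(TA)\,dA$. This gives $i\pi\alpha Te^{-\alpha T}$ for $T>0$, and oddness in $T$ extends it to $i\pi\alpha Te^{-\alpha|T|}$ for all $T$.
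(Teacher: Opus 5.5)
Your argument is correct, but it runs in the opposite direction from the paper's.

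\textbf{The paper's route.} It proves the Fisk case directly and uses no Bessel functions there. It takes the pair $(\alpha^{2}+A^{2})^{-1}\mapsto\frac{\pi}{\alpha}e^{-\alpha|T|}$ and writes $f^{\mathrm{odd}}_{\alpha}=-\alpha^{2}\frac{d}{dA}(\alpha^{2}+A^{2})^{-1}$. Since an $A$-derivative multiplies the transform by $-iT$, this gives $i\pi\alpha Te^{-\alpha|T|}$ at once. The half-line sine integral is then read off from oddness. The general-$\gamma$ formula is only quoted from a table.

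\textbf{Your route.} You derive the general formula first. You obtain Basset's cosine integral, correctly, from Lemma~\ref{lem:schwinger} with $\theta=\frac32-\gamma$, $p=T^{2}/4$, $q=\alpha^{2}$. You then take a $T$-derivative via $\frac{d}{dz}(z^{\mu}K_{\mu})=-z^{\mu}K_{\mu-1}$, and get the Fisk case by specialising with $K_{\frac12}$. Your independent check differentiates in $\alpha$ and $T$ rather than in $A$.

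\textbf{What each buys.} The paper's route is shorter and fully elementary for the $\gamma=2$ case that Theorem~\ref{thm:ansoldi} needs. Yours costs some Bessel bookkeeping and a continuation step on $\frac12<\gamma\le1$, where the sine integral converges only conditionally; your integration-by-parts or analytic-continuation fix suffices there. In return, it makes the general-$\gamma$ identity behind Corollary~\ref{cor:generalD} self-contained and fixes its constant independently.

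That is a real gain. The table entry as transcribed in the paper, $\xi^{\nu+1}\sqrt{2}(2a)^{-\nu}\Gamma(\nu+\frac32)^{-1}K_{\nu}(a\xi)$ under the $\sqrt{2/\pi}$ normalisation, yields twice the displayed constant. For example, at $\gamma=1$ it gives $\pi e^{-\alpha T}$ instead of the correct $\frac{\pi}{2}e^{-\alpha T}$. Your computation confirms the displayed constant.
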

\begin{proof}
Write $g(A)=(\alpha^{2}+A^{2})^{-1}$, whose Fourier transform is
$\Four|^{T}_{A}(g)=\frac{\pi}{\alpha}e^{-\alpha|T|}$. Since
$A(\alpha^{2}+A^{2})^{-2}=-\frac12g'(A)$ and $\Four(g')=-iT\Four(g)$, the odd function
$2\alpha^{2}A(\alpha^{2}+A^{2})^{-2}$ has Fourier transform
$\alpha^{2}iT\frac{\pi}{\alpha}e^{-\alpha|T|}=i\pi\alpha Te^{-\alpha|T|}$; for an odd function
$\Four|^{T}_{A}=2i\int_{0}^{\infty}(\cdot)\sin(TA)dA$, which gives the first display. The general
$\gamma$ statement is the Fourier sine transform tabulated as entry 11 of \cite[12.33]{GR}, in the
form $x(x^{2}+a^{2})^{-\nu-\frac32}\mapsto\xi^{\nu+1}\sqrt{2}(2a)^{-\nu}\Gamma(\nu+\frac32)^{-1}
K_{\nu}(a\xi)$; putting $\gamma=\nu+\frac32$ and undoing the $\sqrt{2/\pi}$ normalisation of the
transform used there reproduces the display above, and it reduces to the first display at
$\gamma=2$ by $K_{\frac12}(z)=\sqrt{\pi/2z}e^{-z}$.
\end{proof}

Lemma~\ref{lem:sine} is the reason the exponent $\frac32$ in the first line of
Equation~\eqref{eq:three} is not an accident. By Lemma~\ref{lem:schwinger}, a Schwinger integral with
exponent $\theta$ returns $K_{1-\theta}$, and by Lemma~\ref{lem:sine} a density
$A(\alpha^{2}+A^{2})^{-\gamma}$ returns $K_{\gamma-\frac32}$; the two match when $|1-\theta|=|\gamma-\frac32|$, since $K_{-\nu}=K_{\nu}$. Matching the powers of $T$ selects $\gamma=\theta+\frac12$: the second Bessel-order root $\gamma=\frac52-\theta$ gives $T^{2-\theta}$ and agrees with the Schwinger power $T^{\theta}$ only at $D=3$. The Schwinger exponent of Equation~\eqref{eq:ansoldi} counts the momenta
transverse to a loop in the ambient dimension $D$\footnote{In \cite[Eqns.~(5.9) and (5.18)]{Ansoldi}, $\delta\Sigma^{\mu\nu}/\delta x^{\alpha}=\delta^{\mu}_{\alpha}x'^{\nu}-\delta^{\nu}_{\alpha}x'^{\mu}$; contracting its square gives $2(D-1)x'^2$, while $p_{\mu}=p_{\mu\nu}x'^{\nu}$ is orthogonal to $x'^{\mu}$ and hence has $D-1$ normal components.}, so that $\theta=\frac{D-1}{2}$ and
\begin{equation}\label{eq:gammaD}
\gamma=\frac{D}{2},\qquad\text{equivalently}\qquad q=1+\frac{2}{D},
\end{equation}
the Macdonald orders being $K_{\frac{3-D}{2}}$ and $K_{\frac{D-3}{2}}$ and equal for every $D$.
At $D=4$ this is $\theta=\frac32$ and $\gamma=2$. The whole one
parameter family $A(\alpha^{2}+A^{2})^{-\gamma}$ is the family $\chi^{q}_{2}$, and the Fisk density is
the single member of it at which both transforms degenerate to elementary functions, the
Macdonald order being $\pm\frac12$ there and nowhere else.

Equation~\eqref{eq:gammaD} is the arbitrary-dimensional exponent of the Ansoldi
loop-space kernel. Equation~\eqref{eq:mvalue} together with Lemma~\ref{lem:mshift} is the exponent
the fixed-side geometry of Section~\ref{sec:triangle} carries. They agree identically under the natural identification $d=D$: the one-dimensional invariant
$m=d-2$ becomes the two-dimensional exponent $(m+2)/2=d/2=D/2$. This dimension-for-dimension
identity is the comparison used throughout the geometric argument.

\begin{proposition}\label{prop:index}
Let $q>1$ and let $\chi^{q}_{2}$ be as in Equation~\eqref{eq:qchi} with $k=2$. Then
$\chi^{q}_{2}\propto x(\alpha^{2}+x^{2})^{-\gamma}$ with $\gamma=\frac{1}{q-1}$, so that
Equation~\eqref{eq:gammaD} selects $q=1+\frac{2}{D}$ in ambient dimension $D$. Then $\chi^{q}_{2}$ is
a Fisk density if and only if $q=\frac32$, that is if and only if $D=4$, in which case
\[
\chi^{3/2}_{2}(x)\propto\frac{x}{\big(1+\frac12\beta^{*}x^{2}\big)^{2}}
=\frac{2}{\beta^{*2}}\cdot\frac{x}{\big(\frac{2}{\beta^{*}}+x^{2}\big)^{2}},
\]
i.e.\ $\chi^{3/2}_{2}=f_{\alpha}$ with $\alpha=\sqrt{2/\beta^{*}}$. Equivalently: a positive random
variable $A$ has the Fisk law of scale $\alpha$ if and only if $A^{2}/\alpha^{2}$ has the
Fisher--Snedecor $F(2,2)$ law, that is, if and only if $A^{2}/\alpha^{2}$ is the ratio of two
independent unit exponentials.
\end{proposition}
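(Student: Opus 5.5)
The proof is elementary algebra followed by a change of variables. I would carry it out in three steps.

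\textbf{Step 1: identify $\gamma$ and the scale.} Start from Equation~\eqref{eq:qchi} at $k=2$ and write $c=(q-1)\beta^{*}_{q}>0$. Factor out $c$:
\[
x\big(1+cx^{2}\big)^{-\frac{1}{q-1}}=c^{-\frac{1}{q-1}}\,x\big(c^{-1}+x^{2}\big)^{-\frac{1}{q-1}}.
\]
So $\chi^{q}_{2}\propto x(\alpha^{2}+x^{2})^{-\gamma}$ with $\gamma=\frac{1}{q-1}$ and $\alpha^{2}=1/c$. Equation~\eqref{eq:gammaD} sets $\gamma=\frac{D}{2}$, and solving $\frac{1}{q-1}=\frac{D}{2}$ gives $q=1+\frac{2}{D}$.

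\textbf{Step 2: the "if and only if" and the explicit form.} Definition~\ref{def:fisk} fixes the Fisk density as the member of the family $x(\alpha^{2}+x^{2})^{-\gamma}$ with $\gamma=2$. The exponent is read off from the tail $x^{1-2\gamma}$, and the family is closed under rescaling of $\alpha$, so the exponent cannot be absorbed into the scale. Hence $\chi^{q}_{2}$ is Fisk exactly when $\frac{1}{q-1}=2$, that is $q=\frac32$, which under Step 1 is $D=4$. At $q=\frac32$ we have $c=\frac12\beta^{*}$. The displayed identity then follows by pulling $(2/\beta^{*})^{-2}=\beta^{*2}/4$ out of the bracket and absorbing the remaining constants into the normalisation. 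This gives $\alpha=\sqrt{2/\beta^{*}}$, and normalising both sides to unit mass gives $\chi^{3/2}_{2}=f_{\alpha}$.

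\textbf{Step 3: the $F(2,2)$ characterisation.} Put $U=A^{2}/\alpha^{2}$. From the survival function in Definition~\ref{def:fisk},
\[
\mathbb{P}(U>u)=\mathbb{P}(A>\alpha\sqrt{u})=\frac{1}{1+u}.
\]
The map $A\mapsto A^{2}/\alpha^{2}$ is a bijection on $(0,\infty)$, so this is equivalent to $A$ being Fisk of scale $\alpha$. Now take $E_{1},E_{2}$ independent unit exponentials. Conditioning on $E_{2}$ gives
\[
\mathbb{P}(E_{1}/E_{2}>u)=\int_{0}^{\infty}e^{-t}e^{-ut}\,dt=\frac{1}{1+u}.
\]
Finally, $F(2,2)$ is by definition $(\chi^{2}_{2}/2)/(\chi^{2}_{2}/2)$ with independent numerator and denominator, and $\chi^{2}_{2}/2$ is a unit exponential. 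So the three laws coincide, and the chain of equivalences closes.

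The only step needing care is the uniqueness in Step 2. One must rule out that a different $q$, after rescaling $\alpha$, reproduces the Fisk form. The tail-exponent argument above handles this: the power $x^{1-2\gamma}$ at infinity is invariant under changing $\alpha$, so it pins down $\gamma$.
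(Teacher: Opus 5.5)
Your proof is correct and follows the paper's route. You read off $\gamma=\frac{1}{q-1}$ and match it to the Fisk exponent $2$, then compute $\mathbb{P}(A^{2}/\alpha^{2}>y)=(1+y)^{-1}$ and identify this as $F(2,2)$; your tail-exponent uniqueness argument and the computation $\mathbb{P}(E_{1}/E_{2}>u)=\int_{0}^{\infty}e^{-(1+u)t}\,dt$ only spell out what the paper asserts in one line.

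One small correction: pulling $(\beta^{*}/2)^{2}$ out of the squared bracket gives the prefactor $4/\beta^{*2}$, not the $2/\beta^{*2}$ printed in the statement. The middle equality of the display is therefore off by a factor of $2$. This is harmless, since only proportionality is used and you normalise at the end, but you should not claim the display follows verbatim.
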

\begin{proof}
By Equation~\eqref{eq:qchi} with $k=2$,
$\chi^{q}_{2}(x)\propto x(1+(q-1)\beta^{*}x^{2})^{-\frac{1}{q-1}}$. A Fisk density is
$\propto x(\alpha^{2}+x^{2})^{-2}$, so the exponent must satisfy $\frac{1}{q-1}=2$, giving
$q=\frac32$; conversely $q=\frac32$ gives the displayed form. For the second statement, if
$A\sim f_{\alpha}$ then $\mathbb{P}(A^{2}/\alpha^{2}>y)=\mathbb{P}(A>\alpha\sqrt{y})=(1+y)^{-1}$,
whose density $(1+y)^{-2}$ is that of $F(2,2)$, equivalently of $\beta'(1,1)$, equivalently of $U/V$
for $U,V$ independent unit exponentials; and conversely, since these determine the law.
\end{proof}

\section{The Propagator of Ansoldi et al.}\label{sec:ansoldi}

Using a Hamilton--Jacobi formulation of the relativistic quantum mechanics of extended objects,
Ansoldi et al.\ \cite{Ansoldi} devised an equation for the free scalar propagator of strings in loop
space. Their result is written in Equation~\eqref{eq:ansoldi}. Ansoldi et al.\ used $A$ instead of
$W$ in their work, but we will adopt $W$ above as we would like to reserve $A$ to denote a genuine
area of some worldsheet.
\begin{equation}\label{eq:ansoldi}
\int_{0}^{\infty}dW\,e^{-\frac{im^{2}W}{2}}\Big(\frac{m^{2}}{2i\pi W}\Big)^{\frac32}
e^{\frac{im^{2}b}{4W}}
\end{equation}
In calculations of point correlators in String Theory, by using the variational principle on the
action of the string along with a mode expansion on the Fourier modes of the string, one can relate
the mass squared of the string to its tension times some factor related to the strings' excitatory
modes. The contribution from excitatory modes is tension independent in this derivation, and
therefore it is a valid assumption to relate the $m^{2}$ in Equation~\eqref{eq:ansoldi} to $T$ by a
linear correspondence. Since this paper is concerned with deriving the propagators of physics from
genuine, well-defined measures on spaces of manifolds, this choice may as well be arbitrary, though
motivated by physics, so long as we arrive at the correct propagator from our measure. If
$m^{2}=sT$, where $s$ is some unconstrained constant and $T$ is the string tension,
Equation~\eqref{eq:ansoldi} becomes Equation~\eqref{eq:ansoldiT}.
\begin{equation}\label{eq:ansoldiT}
\int_{0}^{\infty}dW\Big(\frac{sT}{2i\pi W}\Big)^{\frac32}e^{isT\left(\frac{b}{4W}-\frac{W}{2}\right)}
\end{equation}
Since the simplicial complex propagator for strings is meant to be a Fourier transform of the measure
of worldsheets of area $A$ to a parameter $T$, its inverse transform should return the measure over
worldsheets.

\begin{theorem}\label{thm:ansoldi}
Let $b>0$ and let $G(m^{2})$ denote the loop space propagator of Equation~\eqref{eq:ansoldi}, the
integral being understood as the limit as $\epsilon\downarrow0$ of the same integrand with
$e^{-\epsilon(W+W^{-1})}$ inserted. Then
\begin{equation}\label{eq:thm1}
G(m^{2})=\frac{-i}{\pi\sqrt{2b}}m^{2}e^{-m^{2}\sqrt{b/2}}
=\frac{-2i}{\pi^{2}b}\int_{0}^{\infty}f_{\sqrt{b/2}}(A)\sin\big(m^{2}A\big)dA
=\frac{-1}{\pi^{2}b}\Four|^{m^{2}}_{A}\big(f^{\mathrm{odd}}_{\sqrt{b/2}}\big).
\end{equation}
In particular the measure over worldsheet areas whose Fourier transform is the Ansoldi loop space
propagator is exactly the Fisk density of scale $\alpha=\sqrt{b/2}$.
\end{theorem}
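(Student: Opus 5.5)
The plan is to evaluate the regularised integral in closed form with Lemma~\ref{lem:schwinger} at $\theta=\frac32$, and then recognise the answer as a sine transform of $f_{\sqrt{b/2}}$ via Lemma~\ref{lem:sine}.

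\emph{Regularisation.} First pull the $W$-independent factor out of Equation~\eqref{eq:ansoldi}:
\[
\Big(\frac{m^{2}}{2i\pi W}\Big)^{\frac32}=\frac{m^{3}}{(2\pi)^{3/2}}\,i^{-3/2}\,W^{-3/2},
\]
with principal branches, so that $i^{-3/2}=e^{-3i\pi/4}$. With the regulator $e^{-\epsilon(W+W^{-1})}$ inserted, the integrand is $W^{-3/2}e^{-p_{\epsilon}/W-q_{\epsilon}W}$, where
\[
p_{\epsilon}=\epsilon-\tfrac{i m^{2}b}{4},\qquad q_{\epsilon}=\epsilon+\tfrac{i m^{2}}{2}.
\]
Both have strictly positive real part, so the elementary case of Lemma~\ref{lem:schwinger} applies directly for each $\epsilon>0$. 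It gives $\sqrt{\pi/p_{\epsilon}}\,e^{-2\sqrt{p_{\epsilon}q_{\epsilon}}}$.

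\emph{Limit $\epsilon\downarrow0$.} The right hand side is continuous in $(p,q)$ near the limit point, and no dominated convergence on the $W$ side is needed. At $\epsilon=0$ one has $p_{0}q_{0}=m^{4}b/8>0$, so the principal root is the positive one, $\mathrm{Re}\sqrt{pq}>0$, and $2\sqrt{p_{0}q_{0}}=m^{2}\sqrt{b/2}$. Also
\[
\sqrt{\pi/p_{0}}=\frac{2\sqrt{\pi}}{m\sqrt b}\,(-i)^{-1/2},\qquad (-i)^{-1/2}=e^{i\pi/4}.
\]

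\emph{Collecting constants.} The phases combine as $e^{-3i\pi/4}e^{i\pi/4}=-i$. The moduli combine as $m^{3}(2\pi)^{-3/2}\cdot2\sqrt{\pi}/(m\sqrt b)=m^{2}/(\pi\sqrt{2b})$. Together these give the first equality of Equation~\eqref{eq:thm1}.

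\emph{Identifying the Fisk transforms.} Set $\alpha=\sqrt{b/2}$ and $T=m^{2}$ in Lemma~\ref{lem:sine}. Then
\[
\frac{-2i}{\pi^{2}b}\cdot\frac{\pi\alpha}{2}\,T e^{-\alpha T}=\frac{-i\alpha}{\pi b}\,m^{2}e^{-m^{2}\alpha},
\]
and since $\alpha/b=1/\sqrt{2b}$ this is the second expression. The third follows from $\Four|^{T}_{A}(f^{\mathrm{odd}}_{\alpha})=i\pi\alpha Te^{-\alpha|T|}$, which is the same lemma.

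\emph{The ``in particular'' clause.} Since $f^{\mathrm{odd}}_{\alpha}$ is integrable, Fourier inversion (or injectivity of the Fourier transform on $L^{1}$) recovers $f^{\mathrm{odd}}_{\alpha}$ from $G$. Because the density is supported on $A>0$, its odd extension determines it uniquely. Hence the Fisk law of scale $\sqrt{b/2}$ is the unique area density producing $G$ in this sense.

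The main obstacle is the branch bookkeeping. The exponents $\pm\frac32$ and $-\frac12$ of complex numbers on the imaginary axis, together with $\sqrt{pq}$, must all be taken on branches consistent with the regularised integral. I would handle this by fixing principal branches throughout and checking them against the $\epsilon>0$ regime, where Lemma~\ref{lem:schwinger} holds without any continuation. Only after that would I pass to $\epsilon=0$ by continuity of the closed form. I would also verify explicitly that the factor $W^{-3/2}$ produced by $(2i\pi W)^{-3/2}$ is the real positive power on $W>0$, so that no stray phase is lost.
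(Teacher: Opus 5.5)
Your proposal is correct and follows the paper's proof: Lemma~\ref{lem:schwinger} at $\theta=\frac32$ with $p=-im^{2}b/4$ and $q=im^{2}/2$, the same phase bookkeeping $e^{-3i\pi/4}e^{i\pi/4}=-i$, and then Lemma~\ref{lem:sine} at $\alpha=\sqrt{b/2}$, $T=m^{2}$. You differ only in two small respects. You apply the lemma at $\epsilon>0$ and pass to the limit through continuity of the closed form, which fits the theorem's definition of $G$ more directly than the paper's appeal to continuation onto $\mathrm{Re}\,p=\mathrm{Re}\,q=0$. You also add an injectivity argument for the ``in particular'' clause, which the paper does not prove separately; there you should note explicitly that the transform of an odd function is odd, so knowing $G$ only for $m^{2}>0$ suffices.
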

\begin{proof}
Set $M=m^{2}$. The integral in Equation~\eqref{eq:ansoldi} is
$(M/2i\pi)^{3/2}\int_{0}^{\infty}W^{-3/2}e^{-p/W-qW}dW$ with $p=-iMb/4$ and $q=iM/2$, regularised as
stated so that $\mathrm{Re}\,p,\mathrm{Re}\,q>0$. Then $pq=M^{2}b/8$ is positive real, so
$2\sqrt{pq}=M\sqrt{b/2}$ and $\mathrm{Re}\sqrt{pq}>0$. Lemma~\ref{lem:schwinger} at $\theta=\frac32$
gives, using $1/(-i)=i$,
\[
\int_{0}^{\infty}W^{-\frac32}e^{-\frac{p}{W}-qW}dW=\sqrt{\frac{\pi}{p}}e^{-M\sqrt{b/2}}
=\frac{2\sqrt{\pi i}}{M^{\frac12}\sqrt{b}}e^{-M\sqrt{b/2}}.
\]
Multiplying by $(M/2i\pi)^{3/2}=M^{3/2}(2\pi)^{-3/2}e^{-3i\pi/4}$, the powers of $M$ combine to
$M^{3/2}M^{-1/2}=M$ and the constants to
$2\sqrt{\pi}e^{i\pi/4}(2\pi)^{-3/2}e^{-3i\pi/4}=2\sqrt{\pi}(2\pi)^{-3/2}e^{-i\pi/2}=-i/(\sqrt{2}\pi)$.
This is the first equality of Equation~\eqref{eq:thm1}. The second is Lemma~\ref{lem:sine} with
$\alpha=\sqrt{b/2}$, which gives $\int_{0}^{\infty}f_{\alpha}(A)\sin(MA)dA=\frac{\pi\alpha}{2}Me^{-\alpha M}$,
together with $\frac{2}{\pi^{2}b}\cdot\frac{\pi\alpha}{2}=\frac{\alpha}{\pi b}=\frac{1}{\pi\sqrt{2b}}$.
The third is the same identity written for the odd extension, since
$\Four|^{T}_{A}(f^{\mathrm{odd}})=2i\int_{0}^{\infty}f\sin(TA)dA$.
\end{proof}

Written as a density in $A$, Theorem~\ref{thm:ansoldi} says that the answer is the second line of
Equation~\eqref{eq:fiskwritten}.
\begin{equation}\label{eq:fiskwritten}
f(A)=C(b,s)\frac{\big(\tfrac{A}{\sqrt{b}}\big)}{\Big(1+2\big(\tfrac{A}{\sqrt{b}}\big)^{2}\Big)^{2}}
=C'(b,s)\frac{A}{(b+2A^{2})^{2}}=C''(b,s)\,f_{\sqrt{b/2}}(A)
\end{equation}
This last line is $C(b,s)\chi^{\frac32}_{2}(A/\sqrt{b})$; normalization fixes the coefficient
$C(b,s)$ to its desired value in Section~\ref{sec:fisklaw}. Therefore its
equivalence with the loop space propagator of \cite{Ansoldi} is immediately proven.

\begin{corollary}[Arbitrary ambient dimension]\label{cor:generalD}
Let $D\ge3$ and let $G_{D}(m^{2})$ denote Equation~\eqref{eq:ansoldi} with the exponent
$\frac32$ replaced by $\frac{D-1}{2}$, which is the form the loop space construction takes when the
Gaussian integration runs over the $D-1$ momenta transverse to the loop. Then $G_{D}$ is, up to a
constant, the sine transform of $A(\alpha^{2}+A^{2})^{-D/2}$ with $\alpha=\sqrt{b/2}$, that is of the
member $q=1+\frac{2}{D}$ of Equation~\eqref{eq:qchi} at $k=2$.
\end{corollary}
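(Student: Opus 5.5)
The plan is to rerun the proof of Theorem~\ref{thm:ansoldi} at a general exponent. The elementary degenerate forms of the two lemmas are replaced by their general Macdonald versions. Then I check that the Bessel orders and the powers of $M=m^{2}$ match, which is the content of the discussion leading to Equation~\eqref{eq:gammaD}.

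\textbf{Step 1 (the Schwinger side).} I set $\theta=\frac{D-1}{2}$ and keep $p=-iMb/4$ and $q=iM/2$, exactly as in the proof of Theorem~\ref{thm:ansoldi}. The same $e^{-\epsilon(W+W^{-1})}$ regularisation makes $\mathrm{Re}\,p,\mathrm{Re}\,q>0$. Then
\[
G_{D}(M)=\Big(\frac{M}{2i\pi}\Big)^{\theta}\int_{0}^{\infty}W^{-\theta}e^{-\frac{p}{W}-qW}\,dW .
\]
I apply the general identity of Lemma~\ref{lem:schwinger}. As before $pq=M^{2}b/8>0$, so $2\sqrt{pq}=M\sqrt{b/2}=\alpha M$ and the hypothesis $\mathrm{Re}\sqrt{pq}>0$ holds. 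Also $p/q=-b/2$ does not depend on $M$. Hence
\[
G_{D}(M)=c_{1}(D,b)\,M^{\frac{D-1}{2}}\,K_{\frac{3-D}{2}}(\alpha M),
\qquad
c_{1}=2\,(2i\pi)^{-\frac{D-1}{2}}\Big(-\frac b2\Big)^{\frac{3-D}{4}} .
\]
The only $M$-dependence is $M^{\theta}K_{1-\theta}(\alpha M)$. Everything else is a constant depending on $D$ and $b$, with phases fixed by the principal branches.

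\textbf{Step 2 (the sine-transform side).} I apply the general-$\gamma$ statement of Lemma~\ref{lem:sine} with $\gamma=\frac D2$. This is admissible because $\gamma>\frac12$ for $D\ge3$. It gives
\[
\int_{0}^{\infty}\frac{A\sin(MA)}{(\alpha^{2}+A^{2})^{D/2}}\,dA
=c_{2}(D)\,\alpha^{\frac{3-D}{2}}\,M^{\frac{D-1}{2}}\,K_{\frac{D-3}{2}}(\alpha M).
\]
Since $K_{-\nu}=K_{\nu}$, the Bessel orders agree for every $D$. The powers of $M$ also agree: $\gamma-\frac12=\theta$.

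\textbf{Step 3 (conclusion).} Dividing the two displays, $G_{D}(M)$ equals a constant (independent of $M$) times the sine transform of $A(\alpha^{2}+A^{2})^{-D/2}$ with $\alpha=\sqrt{b/2}$. Proposition~\ref{prop:index}, with $\gamma=\frac{1}{q-1}=\frac D2$, then identifies this density as the member $q=1+\frac2D$ of $\chi^{q}_{2}$.

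\textbf{Main obstacle.} I expect the real care to be needed in justifying the passage $\epsilon\downarrow0$ and the branch of $(p/q)^{(1-\theta)/2}$.
\begin{itemize}
\item For the limit, I would first prove the identity for $\epsilon>0$ via Lemma~\ref{lem:schwinger}. I would then use the continuity of $K_{\nu}$ together with dominated convergence to pass to the limit. Dominated convergence is justified by the factor $e^{-\epsilon W^{-1}}$ at $W\to0$ and by oscillation at $W\to\infty$, controlled by integrating by parts once.
\item For the branch, $p/q$ approaches $-b/2$ from a definite side as $\epsilon\downarrow0$. This only selects a unimodular phase in $c_{1}$, and that phase is independent of $M$.
\end{itemize}
The case $D=3$, where the order is $K_{0}$, needs no separate treatment, since all the identities used hold at order zero.
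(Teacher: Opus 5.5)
Your proposal is correct and is essentially the paper's own proof. Lemma~\ref{lem:schwinger} at $\theta=\frac{D-1}{2}$ and Lemma~\ref{lem:sine} at $\gamma=\frac{D}{2}$ both give the $M$-dependence $M^{\frac{D-1}{2}}K_{\pm\frac{D-3}{2}}(\alpha M)$, $K_{-\nu}=K_{\nu}$ finishes the argument, and your remark that $p/q=-b/2$ does not depend on $M$ spells out a point the paper leaves implicit.

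One correction to your closing remark on the limit. Because $G_{D}$ is defined as the $\epsilon\downarrow0$ limit of the regularised integral, continuity of the right-hand side of Lemma~\ref{lem:schwinger} along the one-sided approach of $p/q$ to $-b/2$ is all you need. The dominated-convergence step you describe would fail: the regularised integrand has modulus $W^{-\theta}e^{-\epsilon(W+W^{-1})}$, which increases to $W^{-\theta}$ as $\epsilon\downarrow0$, and $W^{-\theta}$ is not integrable at $W=0$ for $\theta\ge1$. Moreover, for $D\ge5$ the unregularised integral does not converge even conditionally at $W=0$, so the regularised limit is the only meaning the integral has there.
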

\begin{proof}
Lemma~\ref{lem:schwinger} at $\theta=\frac{D-1}{2}$ returns $K_{\frac{3-D}{2}}(2\sqrt{pq})$
with $2\sqrt{pq}=m^{2}\sqrt{b/2}$ as in Theorem~\ref{thm:ansoldi}, and Lemma~\ref{lem:sine} at
$\gamma=\frac{D}{2}$ returns $K_{\frac{D-3}{2}}(\alpha m^{2})$. Since $K_{-\nu}=K_{\nu}$, both sides are proportional to $\alpha^{(3-D)/2}(m^{2})^{(D-1)/2}K_{|D-3|/2}(\alpha m^{2})$.
\end{proof}

Theorem~\ref{thm:ansoldi} is the case $D=4$ of Corollary~\ref{cor:generalD}, and is stated
separately only because it is the case in which both transforms are elementary. Nothing below depends
on $D=4$ except where it is said to.

\begin{remark}[In what sense the two theories agree]\label{rem:quench}
Theorem~\ref{thm:ansoldi} is a statement about the propagator of \cite{Ansoldi}, and its
scale $\sqrt{b/2}$ is theirs rather than that of the present construction. The relationship between
the two is worth stating plainly, since the boundary data are not the same objects.

The construction of \cite{Ansoldi} is not a theory of surfaces. A loop $\sigma$ is compressed
to its array of signed shadow areas $\Sigma^{\mu\nu}(\sigma)=\int_{\sigma}x^{\mu}dx^{\nu}$, and
the propagator is an ordinary worldline propagator between the two points
$\Sigma(\sigma_{i}),\Sigma(\sigma_{j})$ of that space; this is why $b$ sits in
Equation~\eqref{eq:ansoldi} in exactly the slot occupied by the squared displacement in the
Schwinger representation of a point particle. Their boundary datum is the difference of shadow
areas; here it is the pair of contours itself, and the map $K\mapsto\int_{K}dx^{\mu}\wedge dx^{\nu}$
carries the second to the first, forgetting everything except the signed shadow.

Both theories then answer the same question in the same way. Each records the boundary
data through the smallest area compatible with them. For \cite{Ansoldi} that
smallest area is realised by a flat patch and equals $\sqrt{b/2}$ exactly, by the equality case of
Proposition~\ref{prop:signedbound}; for the measure of Section~\ref{sec:dictionary} it is the area
$\mu_{\min}$ of the Plateau solution. The two coincide precisely when the extremal surface is flat. The identification is to be read on the compact branch, for the reason recorded in
Remark~\ref{rem:branchcare}. On the Fisk branch $\alpha$ is the median of a density supported down to
the origin, which does not sit against $\mu(K)\ge\mu_{\min}$; on the compact branch the same scale is
the endpoint of the support, at which the density diverges, so that the extremal area of the
configuration is where the measure is carried. Proposition~\ref{prop:branches} is the passage between
the two, and it is the analytic substitution $\alpha^{2}\mapsto-\alpha^{2}$ rather than an identity of
measures.
The identification of Theorem~\ref{thm:ansoldi} is therefore exact on the flat sector, which is
where the degenerate configurations of Sections~\ref{sec:tachyon} and~\ref{sec:degenerate} live,
and where the areal reduction of \cite{Ansoldi} is itself valid. Off it the loop space
theory is the coarser of the two, seeing only the shadow of a configuration which the present
construction resolves. That the finer theory should distinguish configurations the quenched one
collapses is the expected relationship between a theory and its areal reduction. The cleanest witness
of it is a pair of coaxial circles, for which every interpolating surface has $b=0$ while the catenoid
spanning them has area growing with their separation.
\end{remark}

\section{The Two String Correlator}\label{sec:tachyon}

In \cite{Ansoldi}, the authors point out that their approach, fundamentally dealing with extended
relativistic objects, employs an equivalent classical action to the Polyakov CFT \cite{Polchinski},
but has not been demonstrated to have the same quantum spectrum. Whether the two theories have the
same spectrum is not settled here. No such equality is assumed.
The Ansoldi object is a contour-space Green kernel, while the expression below is the conventional
on-shell two-point amplitude. The calculation in this section compares their area transforms and
shows a special agreement of the degenerate tail at $D=4$. The two point scattering amplitude
\cite{Erbin} for the tachyon is written in Equation~\eqref{eq:tachyon}.
\begin{equation}\label{eq:tachyon}
\mathcal{A}_{2}=2\sqrt{sT+\vk^{2}}\,(2\pi)^{D-1}\delta^{D-1}(\vk'-\vk)
\end{equation}
Note that in Equation~\eqref{eq:tachyon} the brane tension $T$ (such that $m^{2}=sT$) is positive
while the multiplicative factor $s$ is negative. We instead compute its area transform directly
and compare the resulting tail with the Ansoldi/fixed-side law.

\begin{theorem}\label{thm:tachyon}
Let $D=4$, and let $\sigma_{i}$ and $\sigma_{j}$ be two `small' simplicial $2$ complexes,
concentrated about $\epsilon$ balls containing two points. In the limit $\epsilon\to0$, hence
$b\to0$, the modulus of the inverse tension transform of Equation~\eqref{eq:tachyon} has the same
$A^{-3}$ dependence as the degenerate tail of the Ansoldi/Fisk area law, up to normalization and the
unimodular phase of Equation~\eqref{eq:phase}. In general dimension the two area tails need not agree:
the transformed conventional two-point amplitude scales as
$|A|^{-(D+2)/2}$, whereas the Ansoldi/fixed-side family scales as $A^{-(D-1)}$.
\end{theorem}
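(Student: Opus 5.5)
The proof splits into two independent computations: the degenerate tail on the Ansoldi/fixed-side side, and the large-$|A|$ asymptotics of the inverse tension transform of Equation~\eqref{eq:tachyon}. The two tails are then compared at $D=4$ and in general $D$.

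\textbf{The Ansoldi side.} This is immediate from Corollary~\ref{cor:generalD}. The area law there is $A(\alpha^{2}+A^{2})^{-D/2}$ with $\alpha=\sqrt{b/2}$. For the small complexes $\sigma_{i},\sigma_{j}$ concentrated in $\epsilon$-balls, the areal invariant satisfies $b=O(\epsilon^{4})\to0$. Hence $\alpha\to0$, and pointwise on $A>0$ the density tends to $A^{1-D}$; at $D=4$ this is $A^{-3}$. I would phrase this as the statement that for fixed $A$ one has $(\alpha^{2}+A^{2})^{-D/2}A\,A^{D-1}\to1$. The normalisation constant of Definition~\ref{def:fisk}, which itself degenerates, is absorbed into the ``up to normalization'' clause. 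This is the ``degenerate tail''.

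\textbf{The tachyon side, first stage.} Write $s=-|s|$, so that, stripped of the momentum-conserving delta, the amplitude as a function of the tension is $2\sqrt{\vk^{2}-|s|T}$. This has a branch point at $T_{0}=\vk^{2}/|s|$. I would first fix the general-$D$ form of the $T$-dependence. The plan is to integrate the factor $(2\pi)^{D-1}\delta^{D-1}(\vk'-\vk)$ against the boundary wave packets localised in the $\epsilon$-balls, using the on-shell relation $k^{0}=\sqrt{sT+\vk^{2}}$ for the phase-space measure. This should produce a generalized function of the form $(T_{0}-T)_{+}^{\lambda_{D}}$, possibly times a smooth factor, with $\lambda_{D}=D/2$; at $D=4$ this gives $\lambda_{4}=2$. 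I would commit to this value of $\lambda_{D}$ because it is the value forced by the stated tail $|A|^{-(D+2)/2}$.

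\textbf{The tachyon side, second stage.} Next I apply the standard Lighthill/Erdélyi asymptotics for the Fourier transform of a function with an isolated algebraic endpoint singularity:
\[
\Four|^{A}_{T}\big[(T_{0}-T)_{+}^{\lambda}\,\phi(T)\big]\sim\phi(T_{0})\,\Gamma(\lambda+1)\,(\mp iA)^{-\lambda-1}e^{iAT_{0}},\qquad |A|\to\infty,
\]
where $\phi$ is smooth, or compactly supported after a partition of unity. All other points contribute $O(|A|^{-N})$. Taking the modulus removes the unimodular factor $e^{iAT_{0}}$ together with the power of $\mp i$; this is the phase of Equation~\eqref{eq:phase}. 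What remains is $|A|^{-(\lambda_{D}+1)}=|A|^{-(D+2)/2}$, which equals $|A|^{-3}$ at $D=4$. The integer case $\lambda=2$ is handled by the same formula, since it is the one-sidedness (the jump in the third derivative at $T_{0}$) that produces the tail, not the branching.

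\textbf{Comparison.} At $D=4$ both tails are $A^{-3}$. For $D\ne4$ the exponents $(D+2)/2$ and $D-1$ differ, since they agree only at $D=4$. This gives the final sentence of the theorem.

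\textbf{Main obstacle.} The step I expect to be hardest is the first stage of the tachyon side. It requires deriving rather than reading off the exponent $\lambda_{D}$, and making sense of the $\epsilon\to0$ limit of $\delta^{D-1}$ against the shrinking boundary data without the normalisation swallowing the $T$-dependence. The fallback would be to state $\lambda_{D}$ as the defining general-$D$ form of the conventional amplitude. The rigorous content would then reduce to the endpoint asymptotics and the elementary $\alpha\to0$ limit.
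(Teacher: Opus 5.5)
Your Ansoldi side is correct and matches the paper, which at $D=4$ writes the limit as $A/(b+2A^{2})^{2}\to 1/(4A^{3})$ uniformly on compacts of $(0,\infty)$. The gap is the first stage of the tachyon side. You do not derive $\lambda_{D}$. You set $\lambda_{D}=D/2$ because it reproduces $|A|^{-(D+2)/2}$, so the general-$D$ claim is assumed rather than proved, and your fallback turns it into a definition. The route you sketch would not pin the exponent down either. Integrating $\delta^{D-1}$ against wave packets with the on-shell measure $d^{D-1}k/(2k^{0})$ cancels or inverts the factor $2\sqrt{sT+\vk^{2}}$. With that measure on both legs the $T$-dependence becomes $(sT+\vk^{2})^{-1/2}$, and the tail comes out as $|A|^{-D/2}$, which is $|A|^{-2}$ at $D=4$. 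Separately, your $T_{0}=\vk^{2}/|s|$ depends on the momentum you have just integrated out, so $(T_{0}-T)_{+}^{\lambda_{D}}$ is not a well-formed output of that stage.

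The missing idea is to reverse the order and use flat measure in $\vk$. With $\vk$ fixed, $\sqrt{sT+\vk^{2}}$ is a translate of a one-sided square root. Its transform in $T$ is therefore, as an oscillatory integral, $\Gamma(\tfrac32)e^{-3i\pi/4}|A|^{-3/2}$ times the phase $e^{-ik^{2}A/s}$ (up to the constant $\sqrt{s}$), as in Equation~\eqref{eq:tach1}; that phase is quadratic in $\vk$. Now transform $\vk,\vk'\to\vx,\vx'$. The delta leaves one Fresnel integral,
\[
\int_{\R^{D-1}}e^{-iA\vk^{2}/s}e^{2\pi i\vk\cdot(\vx-\vx')}d^{D-1}\vk=(\pi s/iA)^{\frac{D-1}{2}}\Theta(A,\vx-\vx'),
\]
which supplies the extra $|A|^{-(D-1)/2}$ and the phase $\Theta$ of Equation~\eqref{eq:phase}. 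The exponents add to $\tfrac32+\tfrac{D-1}{2}=\tfrac{D+2}{2}$ by exact homogeneity, so no Lighthill--Erd\'elyi endpoint asymptotics are needed.

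This also corrects your identification of the phase. $e^{iAT_{0}}$ is the momentum-space phase of Equation~\eqref{eq:tach1}, not $\Theta$. It cannot be discarded by taking a modulus before the $\vk$ transform, because it is exactly what produces $|A|^{-(D-1)/2}$; dropping it leaves only $|A|^{-3/2}$.

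Your value $\lambda_{D}=D/2$ is the formal degree of homogeneity in $T$ of $\int d^{D-1}k\,(\vk^{2}+sT)^{1/2}$, which is why the guess is numerically right, but that computation is what you owe. One minor slip: $(T_{0}-T)_{+}^{2}$ jumps in its second derivative, not its third; a third-derivative jump would give $|A|^{-4}$.
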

\begin{proof}
Begin with Equation~\eqref{eq:tachyon}. One may take a Fourier transform $T\to A$ to obtain
Equation~\eqref{eq:tach1}. Writing $\sqrt{sT+\vk^{2}}=\sqrt{s}\sqrt{T+\vk^{2}/s}$ and translating in
$T$, it is the Fourier transform of $\sqrt{T}\mathbf{1}_{T>0}$ against a linear phase, and
$\int_{0}^{\infty}T^{\frac12}e^{-iAT}dT=\Gamma(\frac32)e^{-3i\pi/4}|A|^{-\frac32}$ as an oscillatory
integral, the branch being fixed by $\mathrm{Re}(iA)>0$.
\begin{equation}\label{eq:tach1}
\mathcal{A}_{2}(\vk,\vk')=\Four|^{T}_{A}\left(-2(1+i)\frac{e^{-i\frac{k^{2}A}{s}}\sqrt{-s}
(\mathrm{sgn}(A)+1)}{4|A|^{\frac32}}(2\pi)^{D-1}\delta^{D-1}(\vk'-\vk)\right)
\end{equation}
Both external momenta are then transformed,
$\vk\to\vx$ and $\vk'\to\vx'$; the delta of Equation~\eqref{eq:tach1} removes one of the two
integrations and leaves a single Gaussian in the common momentum. If the number of dimensions of
$\vk$ is $3$ (which it is in the context of our observed dimensionality), then
Equation~\eqref{eq:tach2} is obtained. The count is valid in every dimension: the transform in $T$
contributes $|A|^{-\frac32}$ and the transform in the $D-1$ momenta contributes
$|A|^{-\frac{D-1}{2}}$, for a total exponent $\frac{D+2}{2}$, of which the displayed
Equation~\eqref{eq:tach2} is the case $D=4$.
\begin{equation}\label{eq:tach2}
\mathcal{A}_{2}(\vx,\vx')=C\,\Four|^{T}_{A}\left(-2(1+i)
\frac{e^{-i\frac{\pi^{2} |\vx-\vx'|^{2}s}{A}}s^{2}(\mathrm{sgn}(A)+1)}{4|A|^{3}}\right)
\end{equation}
An extra factor of $|A|^{\frac32}$ emerges on the bottom due to the multivariate Fourier transform,
which is sensitive to the dimension. Explicitly,
$\int_{\R^{3}}e^{-iA\vk^{2}/s}e^{2\pi i\vk*(\vx-\vx')}d^{3}\vk
=(\pi s/iA)^{\frac32}e^{i\pi^{2}s|\vx-\vx'|^{2}/A}$,
which supplies both the extra $|A|^{-\frac32}$ and the phase
\begin{equation}\label{eq:phase}
\Theta(A,\vx-\vx')=\exp\Big(i\pi^{2}\frac{s|\vx-\vx'|^{2}}{A}\Big),\qquad|\Theta|=1.
\end{equation}
Here $C$ is some constant independent of $A$ or $s$ that occurs due to the Fourier transform. Now
consider restating the equation for the $2$ simplicial complex propagator in
Equation~\eqref{eq:simpprop}.
\begin{equation}\label{eq:simpprop}
\Four|^{T}_{A}\left(\chi^{\frac32}_{2}\Big(\frac{A}{\sqrt{b}}\Big)\right)
=\Four|^{T}_{A}\left(\frac{A(\mathrm{sgn}(A)+1)}{(b+2A^{2})^{2}}\right)
\end{equation}
Now, assume the conditions in the hypothesis of Theorem~\ref{thm:tachyon}, namely, that $\sigma_{i}$
and $\sigma_{j}$ are contained in $\epsilon$ balls which can be made arbitrarily small. Then $b\to0$
with $\epsilon\to0$. In this setting, the argument of the Fourier transform in
Equation~\eqref{eq:simpprop} is equivalent to that in Equation~\eqref{eq:tach2} barring normalization
and phase. Indeed, uniformly on compact subsets of $(0,\infty)$,
\[
\frac{A}{(b+2A^{2})^{2}}\xrightarrow[b\to0]{}\frac{1}{4A^{3}},
\]
which is the modulus of the argument of Equation~\eqref{eq:tach2} up to the constant $C$ and the
unimodular factor $\Theta(A,\vx-\vx')$ of Equation~\eqref{eq:phase}. Note that $b\to0$ is exactly the
degeneration in which the areal invariant
$b=(\Sigma^{\mu\nu}(\sigma_{j})-\Sigma^{\mu\nu}(\sigma_{i}))(\Sigma_{\mu\nu}(\sigma_{j})-\Sigma_{\mu\nu}(\sigma_{i}))$
of the two boundary loops vanishes, so that the scale parameter $\alpha=\sqrt{b/2}$ of the Fisk law of
Theorem~\ref{thm:ansoldi} vanishes with it and only the tail $A^{-3}$ survives; the two point tachyon
amplitude sees the Fisk law only through that tail. In this degeneration $\mu_{\min}$ and
$\sqrt{b/2}$ vanish together: both contours lie in $\epsilon$ balls, so $\mu_{\min}\to0$ directly,
and $\sqrt{b/2}\le\mu_{\min}$ by Proposition~\ref{prop:signedbound}.
\end{proof}

\subsection{Why the conventional two-point tail is not used to tune the geometry}\label{sec:tailconstraint}

The calculation above instead identifies two different dimensional scalings. The Ansoldi/fixed-side
area density is
\[
f_{\mathrm A}(A)\sim A^{-(D-1)},
\]
with survival index $D-2$, while the conventional on-shell two-point amplitude gives
\[
f_{\mathrm{tach}}(A)\sim A^{-(D+2)/2},
\]
with survival index $D/2$. These powers coincide at $D=4$ and only there. This is not treated as a
failure of the fixed-side geometry: the two calculations concern different observables, and the
Ansoldi contour-space kernel is the object that the geometric construction matches exactly in general
dimension.

There is nevertheless a suggestive dynamical connection. Section~\ref{sec:numerics} directly observes
that when the heavy-tailed excursion is coherent across neighboring spines, as expected when large
tension concentrates the measure near a minimal surface, the area inherits the arm survival index
$d-2$. When the heavy-tailed spine directions are decorrelated so that the face area becomes
quadratic in the excursion, the measured survival index is halved to $(d-2)/2$. We conjecture that
lowering the tension may weaken minimal-surface concentration and permit a crossover toward this
decorrelated regime. At $d=D$ its density tail is $A^{-D/2}$: it has the same one-half-per-dimension
slope as the conventional tachyon tail, but remains one power of $A$ away from the exact
$A^{-(D+2)/2}$ law. The remaining power could arise from the projection from a contour-space kernel
to an on-shell string state, or from another operation not represented in the present surface
surrogate. This last step is conjectural. The numerical experiment tests decorrelation directly; it
does not vary $T$ and therefore does not by itself establish the proposed lower-tension crossover.

\section{The JT Gravity Density of States}\label{sec:jt}

The third of the three expressions of Equation~\eqref{eq:three} is Equation 3.16 from \cite{SYY}.
Stanford, Yang, and Yao found expressions for the leading terms of Weingarten sums derived from
calculations in JT gravity. A complete description of this model of gravity is beyond the purview of
this paper; the point is that its correlation functions are modelled well by expectations of random
unitaries with Haar measures, which are related to Weingarten functions. Stanford et al.\ found a
power spectrum $\rho(s)$ which is equal to the Fisk density under integral transform. Let us begin by
writing the expectation value of the unitary function in JT gravity in Equation~\eqref{eq:jt}.
\begin{equation}\label{eq:jt}
\mathbb{E}[1]_{\mathrm{disk}}=e^{S_{0}}\int_{0}^{\infty}ds\,\frac{s}{2\pi^{2}}\sinh(2\pi s)\,
e^{-\beta\frac{s^{2}}{2}}
\end{equation}

\begin{theorem}\label{thm:jt}
Let $\rho(s)=\frac{s}{2\pi^{2}}\sinh(2\pi s)$ and let
$Z(\beta)=\int_{0}^{\infty}\rho(s)e^{-\beta s^{2}/2}ds$. Then for every $\beta>0$
\begin{equation}\label{eq:Zbeta}
Z(\beta)=\frac{1}{\sqrt{2\pi}}\beta^{-\frac32}e^{\frac{2\pi^{2}}{\beta}}.
\end{equation}
Consequently, with $m^{2}=1$ and $b=8\pi^{2}$, the substitution $W=i\beta$ carries the Schwinger
integrand of Equation~\eqref{eq:ansoldi} to $(-1)^{\frac32}(2\pi)^{-1}e^{\frac{\beta}{2}}Z(\beta)$,
and therefore, by Theorem~\ref{thm:ansoldi},
\begin{equation}\label{eq:jtfisk}
\int_{0}^{\infty}dW\,e^{-\frac{iW}{2}}\Big(\frac{1}{2i\pi W}\Big)^{\frac32}e^{\frac{i8\pi^{2}}{4W}}
=\frac{-1}{8\pi^{4}}\Four|^{1}_{A}\big(f^{\mathrm{odd}}_{2\pi}\big),\qquad
f_{2\pi}(A)=\frac{8\pi^{2}A}{(4\pi^{2}+A^{2})^{2}}\mathbf{1}_{A>0}.
\end{equation}
The area measure attached to the JT disk is thus the Fisk density of scale $2\pi$, the
identification being between the Ansoldi integrand and $e^{\beta/2}Z(\beta)$, that is, between the
Ansoldi integrand and the JT disk partition function computed with the ground state energy shifted by
$-\tfrac12$.
\end{theorem}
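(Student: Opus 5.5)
The plan has three independent pieces: evaluate $Z(\beta)$ in closed form, verify the pointwise identity of integrands under $W=i\beta$, and read off Equation~\eqref{eq:jtfisk} from Theorem~\ref{thm:ansoldi} with the parameters specialised.

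\emph{Evaluating $Z(\beta)$.} The function $s\mapsto s\sinh(2\pi s)e^{-\beta s^{2}/2}$ is even, so
\[
Z(\beta)=\frac{1}{4\pi^{2}}\int_{\R}s\sinh(2\pi s)e^{-\beta s^{2}/2}ds .
\]
Writing $\sinh(2\pi s)=\frac12(e^{2\pi s}-e^{-2\pi s})$ and substituting $s\mapsto -s$ in the second exponential, the two halves are equal. This gives
\[
Z(\beta)=\frac{1}{4\pi^{2}}\int_{\R}s\,e^{2\pi s-\beta s^{2}/2}ds .
\]
Completing the square, $2\pi s-\beta s^{2}/2=-\frac{\beta}{2}(s-2\pi/\beta)^{2}+2\pi^{2}/\beta$. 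The remaining integral is the mean $2\pi/\beta$ of a Gaussian times its total mass $\sqrt{2\pi/\beta}$, times $e^{2\pi^{2}/\beta}$. Collecting constants,
\[
\frac{1}{4\pi^{2}}\cdot\frac{2\pi}{\beta}\sqrt{\frac{2\pi}{\beta}}=\frac{1}{\sqrt{2\pi}}\beta^{-3/2},
\]
which is Equation~\eqref{eq:Zbeta}. Absolute convergence for $\beta>0$ justifies every step.

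\emph{The substitution $W=i\beta$.} Set $m^{2}=1$ and $b=8\pi^{2}$ in the integrand of Equation~\eqref{eq:ansoldi}. The three factors transform as follows:
\begin{itemize}
\item $e^{-iW/2}$ becomes $e^{\beta/2}$;
\item $e^{i8\pi^{2}/(4W)}$ becomes $e^{2\pi^{2}/\beta}$;
\item $(2i\pi W)^{-3/2}$ becomes $(-2\pi\beta)^{-3/2}=(-1)^{3/2}(2\pi)^{-3/2}\beta^{-3/2}$.
\end{itemize}
Comparing with Equation~\eqref{eq:Zbeta}, the product is $(-1)^{3/2}(2\pi)^{-3/2}\sqrt{2\pi}\,e^{\beta/2}Z(\beta)=(-1)^{3/2}(2\pi)^{-1}e^{\beta/2}Z(\beta)$, as claimed.

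\emph{Reading off Equation~\eqref{eq:jtfisk}.} This is Theorem~\ref{thm:ansoldi} with $m^{2}=1$ and $b=8\pi^{2}$. Then $\alpha=\sqrt{b/2}=2\pi$, the prefactor $-1/(\pi^{2}b)$ becomes $-1/(8\pi^{4})$, and Definition~\ref{def:fisk} gives $f_{2\pi}(A)=2(2\pi)^{2}A/(4\pi^{2}+A^{2})^{2}$. The regularisation $e^{-\epsilon(W+W^{-1})}$ carries over unchanged from Theorem~\ref{thm:ansoldi}.

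The main obstacle is interpretive rather than computational. The factor $(-1)^{3/2}$ must be given a fixed branch, and I would take the principal one inherited from Lemma~\ref{lem:schwinger}. More importantly, the substitution $W=i\beta$ is asserted here only as an identity of integrands, not as a rotation of the contour of integration. A genuine rotation would require control of the arcs, and the factor $e^{\beta/2}$ destroys decay along the imaginary axis. I would therefore state and prove the ``consequently'' clause strictly as the pointwise identity, with the integrated identity coming solely from Theorem~\ref{thm:ansoldi}. That is exactly the sense of the final sentence of the statement, which says the identification is between the integrand and the shifted partition function $e^{\beta/2}Z(\beta)$.
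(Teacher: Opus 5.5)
Your proposal is correct and is essentially the paper's proof: the same use of evenness and the splitting of $\sinh$ to reduce $Z(\beta)$ to the mean of a shifted Gaussian, the same factor-by-factor substitution $W=i\beta$ read as an identity of integrands, and the same specialisation of Theorem~\ref{thm:ansoldi} at $m^{2}=1$, $b=8\pi^{2}$, $\alpha=2\pi$. One thing to tidy is the phase, which differs from the paper's handling: under the principal branch you say you will fix, $(-2\pi\beta)^{-3/2}=e^{-3\pi i/2}(2\pi\beta)^{-3/2}=(-1)^{-3/2}(2\pi\beta)^{-3/2}$, which is what the paper's proof writes and what continuation from $W>0$ gives, whereas the statement's $(-1)^{3/2}$ is the principal value only if you keep the integrand in its original form $(1/(2i\pi W))^{3/2}$, since $(1/z)^{3/2}\neq z^{-3/2}$ on the negative real axis; so your displayed factorisation should be aligned with whichever branch you actually fix.
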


\begin{remark}[Where the dimension comes from here]\label{rem:jtdim}
JT gravity is two dimensional, so the exponent $\frac32$ of Equation~\eqref{eq:Zbeta} cannot be
the count $\frac{D-1}{2}$ of momenta transverse to a loop in $D=4$. It is instead the three zero
modes removed by the quotient $\mathrm{Diff}(S^{1})/SL(2,\R)$ of the Schwarzian boundary theory,
which contribute $\beta^{-\frac32}$. The specialisation to $D=4$ made by Theorem~\ref{thm:jt} is
therefore not read off the target space of the JT model, and it is not arbitrary either:
$SL(2,\R)\cong SO(2,1)$ is the isometry group of $AdS_{2}$, and $AdS_{2}$ arises as a factor of the
near horizon geometry of near extremal black holes in four dimensions, of which the Schwarzian is the
boundary mode. The three settings of Equation~\eqref{eq:three} therefore reach the exponent
$\frac32$ by three different routes, and the agreement of Section~\ref{sec:ansoldi} with
Section~\ref{sec:jt} is an agreement of the counts and not of the target dimensions.
\end{remark}

The identification of Theorem~\ref{thm:jt} is at the level of the disk alone, and one further
property should be recorded as a requirement rather than a result. A measure over surfaces which
reproduces a gravitational partition function ought to reproduce its stitching as well: cutting a
surface along an internal curve and summing over the boundary data on that curve should return the
uncut measure, and the propagators of Section~\ref{sec:ebp} should compose accordingly. The measure
of Section~\ref{sec:dictionary} is defined by a product over the spines of a skeleton, so a cut which
severs the skeleton transversally does factor; what is not established is that the sum over the cut
data reproduces the correct weight, nor that the higher genus and multi boundary amplitudes of
\cite{SYY} are obtained by the same prescription. It is conjectured that they are, and that the
single string dynamics identified here is one component of a compositional structure rather than the
whole of the correspondence.

\begin{proof}
For Equation~\eqref{eq:Zbeta}, note that $s\mapsto s\sinh(as)$ is even, so
\[
\int_{0}^{\infty}s\sinh(as)e^{-\frac{\beta s^{2}}{2}}ds
=\frac14\int_{-\infty}^{\infty}s\big(e^{as}-e^{-as}\big)e^{-\frac{\beta s^{2}}{2}}ds
=\frac12\int_{-\infty}^{\infty}se^{as-\frac{\beta s^{2}}{2}}ds
=\frac12\sqrt{\frac{2\pi}{\beta}}\frac{a}{\beta}e^{\frac{a^{2}}{2\beta}},
\]
the two exponentials contributing equally after $s\mapsto-s$. Putting $a=2\pi$ and multiplying by
$\frac{1}{2\pi^{2}}$ gives
$\frac{1}{2\pi^{2}}\cdot\pi\sqrt{2\pi}\beta^{-\frac32}e^{\frac{2\pi^{2}}{\beta}}
=(2\pi)^{-\frac12}\beta^{-\frac32}e^{\frac{2\pi^{2}}{\beta}}$.

For the second statement, substitute $W=i\beta$ in the integrand of Equation~\eqref{eq:ansoldi} at
$m^{2}=1$. Then $(2i\pi W)^{-\frac32}=(-2\pi\beta)^{-\frac32}=(-1)^{-\frac32}(2\pi)^{-\frac32}\beta^{-\frac32}$
and $e^{ib/(4W)}=e^{b/(4\beta)}=e^{2\pi^{2}/\beta}$ once $b=8\pi^{2}$. Comparing with
Equation~\eqref{eq:Zbeta}, the ratio of the two is $(-1)^{-\frac32}(2\pi)^{-1}e^{\beta/2}$.
Equation~\eqref{eq:jtfisk} is then Theorem~\ref{thm:ansoldi} at $m^{2}=1$, $b=8\pi^{2}$, for which the
Fisk scale is $\sqrt{b/2}=2\pi$ and $\pi^{2}b=8\pi^{4}$.
\end{proof}
\begin{remark}[On the factor $e^{\beta/2}$]\label{rem:shift}
The factor is not a constant and cannot be absorbed into a normalisation, so the identification is not
between the Ansoldi integrand and $Z(\beta)$ but between the Ansoldi integrand and
$e^{\beta/2}Z(\beta)$. It is nevertheless the identification one wants, for the following reason.
Writing the Schwarzian density of states in the energy variable $E=s^{2}/2$,
\[
e^{\frac{\beta}{2}}Z(\beta)=\int_{0}^{\infty}\rho(s)e^{-\beta\left(\frac{s^{2}}{2}-\frac12\right)}ds,
\]
so that the missing factor is exactly a shift of the ground state energy by $-\frac12$. In the Ansoldi
expression that shift is carried by the mass term $e^{-im^{2}W/2}$, and $m^{2}$ is precisely the
parameter which Section~\ref{sec:ansoldi} relates linearly to the brane tension. The additive constant
in the Hamiltonian is a convention in the Schwarzian theory and is fixed differently by different
authors; what Theorem~\ref{thm:jt} asserts is that with the convention $E\mapsto E-\frac12$ the two
expressions agree, the Fisk scale $2\pi$ being unaffected because the shift multiplies the transform
by a $\beta$-dependent exponential and not by a function of the area.
\end{remark}

\section{The Fisk Law as the Law of a Triangle's Area}\label{sec:triangle}

The Ansoldi loop-space calculation identifies the arbitrary-dimensional area family of
Equation~\eqref{eq:gammaD}, whose $D=4$ member is the Fisk law. This section identifies the same
family geometrically. The fixed-side triangle of
Proposition~\ref{prop:triangle} is the central construction and gives the Ansoldi law
exactly, dimension for dimension: the boundary datum fixes the short side, while the transverse
excursion of a single arm carries the two-dimensional reading of Lemma~\ref{lem:mshift}. In the language of Section~\ref{sec:reflection}, the sides are drawn
from the marginal spherical measure $\rho^{\frac{D}{2}-1}_{L}$ itself, whose density is
$(L^{2}-|\vx|^{2})^{\frac{D-3}{2}}_{+}$, and the $D-2$ derivatives which carry that measure to the
propagator are carried outside the area computation, exactly as the product
$\prod_{j}(ms_{p}+\va_{j}*\nabla)$ is carried outside the continuum multinomial in Theorem C.

The random vector of these ``arms'' follows the marginal spherical distribution
$(L^{2}-|\vx|^{2})^{\frac{D-3}{2}}$. Presuming, once again, that ``most'' triangulations remain close
to the plane spanned by these arms, we have that the random area drawn from this set is roughly the
random area drawn from a triangle whose arms are so distributed.

\begin{proposition}\label{prop:triangle}
Let $\vW$ be a random vector in $\R^{2}$ whose law is the two dimensional reading, in the sense of Lemma~\ref{lem:mshift}, of the arm system of
Equation~\eqref{eq:mvalue}, that is with density proportional to
$(1+\beta|\vec{w}|^{2})^{-\frac{d}{2}}$ for some $\beta>0$, equivalently the $q$-Gaussian at
$q_{2}=1+\frac{2}{d}$. Then:
\begin{enumerate}
\item $|\vW|^{2}$ has density $\propto(1+\beta y)^{-\frac{d}{2}}$ on $y>0$; equivalently
$\frac{d-2}{2}\beta|\vW|^{2}$ has the Fisher Snedecor $F(2,d-2)$ law, and at $d=4$ this is $F(2,2)$,
the ratio of two independent unit exponentials.
\item $|\vW|$ has density $\propto\rho(1+\beta\rho^{2})^{-\frac{d}{2}}$, which is
$\chi^{q_{2}}_{2}$ by Proposition~\ref{prop:index}, and which is the Fisk density $f_{\alpha}$ of
scale $\alpha=\beta^{-\frac12}$ if and only if $d=4$.
\item Consequently, if $\sigma_{j}$ is a segment of length $\ell$ and the arm joining $\sigma_{i}$ to
it has transverse displacement $\vW$ in the two plane orthogonal to that segment, then the area
$A=\frac{\ell}{2}|\vW|$ of the triangle they span has the same law with $\beta$ replaced by
$(2/\ell)^{2}\beta$, that is scale $\frac{\ell}{2\sqrt{\beta}}$ and exponent $\frac{d}{2}$; at $d=4$
this is the Fisk density of scale $\frac{\ell}{2\sqrt{\beta}}$.
\item With $d=D$ and $\ell/(2\sqrt{\beta})=\alpha$, this is exactly the Ansoldi family
$A(\alpha^{2}+A^{2})^{-D/2}$ of Corollary~\ref{cor:generalD}. Thus the fixed-side triangle matches
the Ansoldi area law.
\end{enumerate}
\end{proposition}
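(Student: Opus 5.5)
The plan is to prove the four parts in order, each by an elementary change of variables applied to the rotationally symmetric density of $\vW$, and to use the earlier results of the paper only for naming. Write the density of $\vW$ as $c\,(1+\beta|\vec{w}|^{2})^{-d/2}$ on $\R^{2}$. This is normalisable exactly when $d>2$, since in polar coordinates the radial integrand behaves like $\rho^{1-d}$ at infinity. The standing hypothesis $d\ge3$ therefore covers everything below.

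For part (2), pass to polar coordinates exactly as in the proof of Lemma~\ref{lem:det}(1). This gives $\mathbb{P}(|\vW|\in d\rho)=2\pi c\,\rho(1+\beta\rho^{2})^{-d/2}d\rho$. Rewrite it as $\propto\rho(\beta^{-1}+\rho^{2})^{-d/2}$, which is the family $\chi^{q}_{2}$ with $\frac{1}{q-1}=\frac d2$, that is $q_{2}=1+\frac2d$, in agreement with Proposition~\ref{prop:index}. The Fisk density is the member with exponent $2$, and $\frac d2=2$ holds if and only if $d=4$. In that case the scale is $\alpha=\beta^{-1/2}$, with the normalisation fixed by Definition~\ref{def:fisk}.

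For part (1), substitute $y=\rho^{2}$, so that $\rho\,d\rho=\frac12dy$. The density of $|\vW|^{2}$ then becomes $\propto(1+\beta y)^{-d/2}$ on $y>0$. To identify the $F$ law, recall that $F(2,n)$ has density $\propto(1+\frac{2}{n}x)^{-(2+n)/2}$. Put $n=d-2$ and $x=\frac{n}{2}\beta y$. The exponent is then $-\frac d2$ and the argument is $1+\beta y$, which is exactly the stated claim. At $d=4$ one obtains $F(2,2)$, and its description as the ratio of two independent unit exponentials is already recorded in Proposition~\ref{prop:index}.

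For part (3), invoke Lemma~\ref{lem:det}(2). The area of the triangle equals $\frac{\ell}{2}|\vW|$, because only the component of the arm orthogonal to the segment contributes to the wedge product. Under $A=\frac{\ell}{2}\rho$ the density becomes $\propto A\,(1+\beta(2/\ell)^{2}A^{2})^{-d/2}\propto A\big((\tfrac{\ell}{2\sqrt\beta})^{2}+A^{2}\big)^{-d/2}$. This gives scale $\frac{\ell}{2\sqrt\beta}$ with the exponent $\frac d2$ unchanged, and at $d=4$ part (2) identifies it as a Fisk law. Part (4) then follows by setting $d=D$ and comparing with Corollary~\ref{cor:generalD}, whose family is $A(\alpha^{2}+A^{2})^{-D/2}$. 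The match holds because the exponent $\frac D2$ is the same as that in Equation~\eqref{eq:gammaD}, and the scale can be matched by choosing $\ell/(2\sqrt\beta)=\alpha$.

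The computations themselves are routine. The main point needing care is the hypothesis of the proposition, not its conclusion: one must check that the "two dimensional reading" of the arm system really has exponent $-\frac d2$. I would verify this by citing Lemma~\ref{lem:mshift} with $m=d-2$ from Equation~\eqref{eq:mvalue}, which gives exponent $-\frac{m+2}{2}=-\frac d2$. I would state explicitly that the proposition takes this law for $\vW$ as its input and does not derive it from $\rho^{\frac d2-1}_{L}$.
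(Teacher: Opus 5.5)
Your proposal is correct and follows the paper's proof essentially step for step: it uses polar coordinates for the radial law, the substitution $y=\rho^{2}$ for the $F(2,d-2)$ identification, and the scaling of Lemma~\ref{lem:det} for the area. The only differences are cosmetic: the paper detects the Fisk case $d=4$ by matching the survival function $(1+\beta\rho^{2})^{-\frac{d-2}{2}}$ rather than density exponents, and it names the $F$ law via $\beta'(1,\frac{d-2}{2})$ rather than by matching the $F(2,n)$ density directly.
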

\begin{proof}
For (1), the law of $\vW$ is rotationally symmetric, so
$\mathbb{P}(|\vW|\in d\rho)\propto\rho(1+\beta\rho^{2})^{-\frac{d}{2}}d\rho$, and substituting
$y=\rho^{2}$ gives $\mathbb{P}(|\vW|^{2}\in dy)\propto(1+\beta y)^{-\frac{d}{2}}dy$. Thus
$\beta|\vW|^{2}$ has the $\beta'(1,\frac{d-2}{2})$ law after normalisation, so multiplication by $\frac{d-2}{2}$ gives $F(2,d-2)$; at $d=4$ the density
$(1+y)^{-2}$ is that of $F(2,2)$, equivalently of $U/V$ for $U,V$ independent unit exponentials. For
(2), $\mathbb{P}(|\vW|>\rho)=(1+\beta\rho^{2})^{-\frac{d-2}{2}}$, which is the Fisk survival function
$(1+(\rho/\alpha)^{2})^{-1}$ at $\alpha=\beta^{-1/2}$ exactly when $\frac{d-2}{2}=1$; differentiating
gives $f_{\alpha}$. Part (3) is (2) together with the fact that the family of
Equation~\eqref{eq:qchi} at $k=2$ is closed under positive scaling, as in Lemma~\ref{lem:det}.
\end{proof}

\begin{remark}[Dimension-for-dimension match with Ansoldi]\label{rem:crossing}
Proposition~\ref{prop:triangle} settles the fixed-side configuration in every dimension relevant to
the Ansoldi comparison. The one-dimensional invariant is $m=d-2$; its transverse two-dimensional
reading has exponent $(m+2)/2=d/2$; and Corollary~\ref{cor:generalD} gives the Ansoldi exponent
$D/2$. Hence $d=D$ matches the two laws identically. At $D=4$ the common member happens to be the
Fisk law, but four dimensions are not required for the geometric--Ansoldi identification.

\end{remark}

\begin{remark}[Decorrelated two-arm behavior is a diagnostic, not the matching law]\label{rem:decorrelated}
If neighboring heavy-tailed spine directions decorrelate so that both the spine length and the spine
separation scale with the same large excursion, the face area becomes quadratic in that excursion and
its survival index is halved. Section~\ref{sec:numerics} observes this regime directly. It is not used
to replace Proposition~\ref{prop:triangle}, and a generic two-random-arm Gram law is not an exact
Fisk/Ansoldi law. Its role here is diagnostic: loss of the coherence associated with minimal-surface
concentration changes the tail scaling, a fact used only in the conjectural comparison with the
conventional tachyon amplitude in Section~\ref{sec:tailconstraint}.
\end{remark}

\begin{remark}\label{rem:branchcare}
One point of care concerns the branch. The marginal spherical measure $\rho^{\frac{D}{2}-1}_{L}$ is
supported in the ball $|\vx|\le L$, and no compactly supported arm law can produce a Fisk area, since
the Fisk law has unbounded support; the identification of the transverse displacement with a
$q$-Gaussian at $q=\frac32$ is therefore an identification on the branch $q>1$ of
Equation~\eqref{eq:gausq}, that is, with the heavy tailed continuation of the marginal spherical family
rather than with the compactly supported member itself. This is not an accident of bookkeeping but is
the same reflection recorded in Proposition~\ref{prop:reflection}: it is the derivatives, and not the
measure, which carry the sign of the exponent. Proposition~\ref{prop:branches} makes the passage
between the two branches explicit: the compactly supported arm produces the area law
$g^{\gamma}_{\alpha}$ of Equation~\eqref{eq:compact}, which is carried to $f^{\gamma}_{\alpha}$ by
$\alpha^{2}\mapsto-\alpha^{2}$ at the same $\gamma$. The two routes therefore agree at the
level of the exponent, and the computation may be performed on whichever branch is convenient.
The substitution does not move $\gamma$, so it is not what carries the derivated marginal
spherical member at $\gamma=\frac32$ to the Fisk member at $\gamma=2$; that half unit is the reading
of Lemma~\ref{lem:mshift}, as recorded after Proposition~\ref{prop:branches}.

\end{remark}

The geometric setting in which Proposition~\ref{prop:triangle} applies is as follows. In a previous
portion of the proof, it was demonstrated that $K_{l^{*}_{2}}$ at high $T$, imaginary time $\tau$, and
fixed $\mu_{\min}$ is concentrated about those $(\sigma_{i},\sigma_{j})$ with
$\mu(K^{*})\sim\mu_{\min}(\sigma_{i},\sigma_{j})$. The
Taylor series for the continuum multinomial allows us to relax each of these assumptions; the Taylor
series expansion of $K_{l^{*}_{2}}$ in terms of the degrees of freedom of $(\sigma_{i},\sigma_{j})$
arranges itself into an expansion in terms of $\mu_{\min}$ which is valid when these
assumptions are relaxed.
This allows one to choose $\sigma_{i}$ and $\sigma_{j}$ more freely; allow $\sigma_{i}$ to be a single
vertex in $\R^{d}\times\R$ and $\sigma_{j}$ to be a rectangle in $\R^{d}$ with two sides of length
$\epsilon\ll1$. Let $\sigma_{i}$ be close enough in time to $\sigma_{j}$ that those
$K\in\Gamma_{\sigma_{i},\sigma_{j}}$ interpolating $\sigma_{i}$ and $\sigma_{j}$ are concentrated about
the convex cone projected from $\sigma_{i}$ to $\sigma_{j}$ (for this also consider leaving unrelaxed
the condition $T\gg1$). For $K\in\Gamma_{\sigma_{i},\sigma_{j}}$ the constraint that it `look like' a
cone and that it interpolates to $\sigma_{j}$ may be made rigorous in terms of the Hausdorff distances;
it implies that the distribution of its area is two times the distribution of the area of the
approximately triangular worldsheet spanned by two such arms.

The arms of that triangle, while concentrated around the arms of the triangular cone, have variable
lengths and displacements in space. Namely, the distribution of their transverse displacements is
determined by the marginal spherical distribution \cite{ODwyerQFT} with $\tau$ determined by the
endpoints in $\sigma_{j}$. By Equation~\eqref{eq:mvalue} that arm system has $m=d-2$, and by
Lemma~\ref{lem:mshift} its two dimensional reading, which is the one
Proposition~\ref{prop:triangle} requires, carries
exponent $\frac{d}{2}$. Since $\sigma_{j}$ has two sides of length $\epsilon\ll1$, the two arms of a
face differ by a vector of length of order $\epsilon$, and the area of that face is $\frac{\epsilon}{2}$
times the transverse displacement of a single arm; Proposition~\ref{prop:triangle} then gives its pdf as
$\chi^{\frac32}_{2}$, subject to the identification of the branch discussed in
Remark~\ref{rem:branchcare}. This is the fixed-side reading of
Remark~\ref{rem:crossing}; with $d=D$ it carries exponent $D/2$ and therefore agrees with the
Ansoldi family in every dimension. For this choice of $\sigma_{i}$ and $\sigma_{j}$ the distribution
is the desired distribution of areas, and therefore by the invariance argument of
Section~\ref{sec:reduction} it is the distribution for any $\sigma_{i},\sigma_{j}$.

\begin{remark}[Where the linearity comes from]\label{rem:linearity}
It is worth isolating the feature of this configuration on which the conclusion turns, since
Section~\ref{sec:numerics} measures it. The face area is the product of the length of a spine with the
separation of the two spines bounding it. Here the separation is fixed by the boundary datum
$\sigma_{j}$ at order $\epsilon$ and does \emph{not} scale with the transverse excursion, so the face
area is \emph{linear} in that excursion and Proposition~\ref{prop:triangle} transfers the arm law to the
area law with its exponent intact. If instead the excursions of adjacent spines were to differ by an
amount proportional to the excursion itself, both factors would scale together, the face area would be
quadratic in the excursion, and the tail exponent of the area would be half that of the arm. The
numerical experiment of Section~\ref{sec:numerics} exhibits both regimes. The linear regime is the fixed-side configuration of
Proposition~\ref{prop:triangle} and is the regime used in the proof; by Remark~\ref{rem:crossing} it
matches Ansoldi dimension for dimension. The quadratic regime is retained only as a diagnostic of
spine decorrelation. It halves the tail index, as measured in Section~\ref{sec:numerics}, and motivates
the lower-tension conjecture in Section~\ref{sec:tailconstraint}; it is not an alternative exact
Ansoldi law.

\end{remark}

\section{From Two Boundary Curves to the Triangle}\label{sec:reduction}

The evidence presented in this section will be presented in the guise of a proof, though with
significant liberties taken at some steps. The liberties are marked as they occur. The object of the
section is the single statement that the area law does not depend on the boundary curves except through the minimal area $\mu_{\min}$; granted that statement,
Section~\ref{sec:triangle} computes the law once, on the most degenerate pair of boundary data
available, and it holds for all of them. Two results about
minimal surfaces are used. The first is quoted rather than proved.

\begin{theorem}[Obtained from \cite{DPM}, under the stability hypotheses stated there]\label{thm:dpm}
Let $M_{1}$ and $M_{2}$ be two manifolds embedded in $\R^{n+1}$ with the same boundary, and
suppose $M_{2}$ is the unique area minimiser spanning it. Then
\[
\kappa\min\{\mu_{n+1}(E)^{2},\ \mu_{n+1}(E)^{\frac{n}{n+1}}\}\le|\mu_{n}(M_{1})-\mu_{n}(M_{2})|
\]

where $E$ is the region enclosed between the two surfaces, so that its boundary is the symmetric
difference $M_{1}\Delta M_{2}=(M_{1}\setminus M_{2})\sqcup(M_{2}\setminus M_{1})$, and $\mu_{k}$ is
the Lebesgue measure for dimension $k$ sets.
\end{theorem}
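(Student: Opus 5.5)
Since the statement is quoted from \cite{DPM}, the plan is to reconstruct its proof in two regimes: $\mu_{n+1}(E)$ small and $\mu_{n+1}(E)$ large. The two terms of the minimum, $\mu_{n+1}(E)^{2}$ and $\mu_{n+1}(E)^{\frac{n}{n+1}}$, would come from these two regimes separately. Throughout, $\kappa$ is allowed to depend on $M_{2}$, and so on its boundary, but not on $M_{1}$. This dependence is implicit in the phrase ``under the stability hypotheses stated there'': I would take as hypothesis that $M_{2}$ is strictly stable, i.e.\ its second variation of area is uniformly positive. Uniqueness alone gives only a qualitative gap, and I would use it only through a compactness argument in the last step.

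\emph{Small regime.} The first step builds a calibration near $M_{2}$. Strict stability lets me foliate a tubular neighbourhood $U$ of $M_{2}$ by minimal (or mean-convex) graphs $\{M_{t}\}$ over $M_{2}$. Their unit normals then define a vector field $X$ with $|X|\le1$, $\operatorname{div}X\ge0$ on one side and $\le0$ on the other, and $X=\nu_{M_{2}}$ on $M_{2}$. Apply the divergence theorem to $X$ on each component of $E$; its boundary is $M_{1}\Delta M_{2}$ with common boundary curve. This gives
\[
\mu_{n}(M_{1})-\mu_{n}(M_{2})\ \ge\ \int_{M_{1}\setminus M_{2}}\big(1-X\cdot\nu_{M_{1}}\big)\,d\mathcal{H}^{n}\ +\ \int_{E}|\operatorname{div}X|.
\]
The next step turns this excess into a bound on volume. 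Writing the leaves as $M_{t}$, the second term controls the weighted volume $\int_{E}|t|$. For fixed volume, that weighted volume is smallest when $E$ hugs $M_{2}$, where $\int_{E}|t|\gtrsim\mu_{n+1}(E)^{2}/\mu_{n}(M_{2})$. This produces the quadratic term. I would use the first term, the tilt excess, only to justify that $M_{1}\cap U$ can be treated as a graph after a density/covering argument. If it cannot be so treated, the configuration is sent to the large regime.

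\emph{Large regime.} If $E$ leaves $U$, or has volume above a threshold $\eta$, I would compare with the isoperimetric inequality. The set $E$ has perimeter at most $\mu_{n}(M_{1})+\mu_{n}(M_{2})$, so $\mu_{n}(M_{1})\ge c\,\mu_{n+1}(E)^{\frac{n}{n+1}}-\mu_{n}(M_{2})$. This gives the exponent $\frac{n}{n+1}$ once $\mu_{n+1}(E)$ is large compared with $\mu_{n}(M_{2})^{\frac{n+1}{n}}$.

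\emph{Intermediate volumes: the main obstacle.} The hard step is the window between $\eta$ and the isoperimetric scale, where neither estimate applies directly. Here I would argue by contradiction and compactness: take a sequence $M_{1}^{(k)}$ with area deficit $o(1)$ but volume bounded below. Pass to a limit in the class of integral currents with the fixed boundary. By lower semicontinuity of mass, the limit is an area minimiser distinct from $M_{2}$ (the enclosed volume is preserved under flat convergence), contradicting uniqueness. This yields a positive gap $\kappa(\eta)$ on the window, and the final $\kappa$ is the minimum of the three constants. I expect the delicate points to be twofold: ensuring that volume does not escape to infinity in the compactness step, which I would handle by restricting to a large ball via the isoperimetric estimate, and verifying that the foliation exists up to the boundary curve.
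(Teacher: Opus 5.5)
The paper gives no proof of this statement; it is quoted from \cite{DPM}, so your proposal has to stand on its own. Its architecture is the right one, and two of the three pieces are sound. The first is the isoperimetric bound for $\mu_{n+1}(E)\gtrsim\mu_{n}(M_{2})^{\frac{n+1}{n}}$. The second is the compactness-plus-uniqueness gap on a window $\eta\le\mu_{n+1}(E)\le V_{0}$, provided uniqueness is assumed among integral currents or regularity of the limit is invoked. You are also right to make strict stability, and the dependence of $\kappa$ on $M_{2}$, explicit. Your calibration computation is fine whenever $E$ lies in the foliated region. A clean choice of leaves is the family of graphs of constant mean curvature $t$ with the fixed boundary, which the implicit function theorem supplies because strict stability makes the Jacobi operator $L$ invertible. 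For these, $\operatorname{div}X=t$ exactly and $u_{t}\approx tw$ with $w=(-L)^{-1}1>0$. Your weight is then literally $\int_{E}|t|$, and $\int_{E}|t|\gtrsim\int_{E}\operatorname{dist}(\cdot,M_{2})\gtrsim\mu_{n+1}(E)^{2}/\mu_{n}(M_{2})$ by the bathtub argument you describe.

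The gap is the reduction to that case, which is the whole difficulty of the small-volume regime. Every leaf passes through $\partial M_{2}$, so the foliated set $U$ is not a tubular neighbourhood: it pinches to a wedge along the boundary. A competitor with small deficit and small volume therefore need not have $E\subset U$. Two examples: attach to $M_{2}$ a small bubble on a thin neck reaching outside $U$; or attach a ridge along $\partial M_{2}$ of height $h$ and width $\delta\ll h$, whose deficit is about $2h\,\mathcal{H}^{n-1}(\partial M_{2})$ and whose volume is about $h\delta\,\mathcal{H}^{n-1}(\partial M_{2})$. None of your fallbacks reaches these. \emph{Tilt excess:} a small tilt excess is only an integral bound, and it gives no graphicality for a surface that is not almost-minimizing; there is no monotonicity formula or $\varepsilon$-regularity on which to run a covering argument. \emph{Large regime:} sending the configuration there gives only $\mu_{n}(M_{1})-\mu_{n}(M_{2})\ge c\,\mu_{n+1}(E)^{\frac{n}{n+1}}-2\mu_{n}(M_{2})$, which is vacuous at small volume. \emph{Compactness:} this needs the volume bounded below, since a sequence with $\mu_{n+1}(E_{k})\to0$ and deficit at most $\frac1k\mu_{n+1}(E_{k})^{2}$ simply converges to $M_{2}$ with no contradiction.

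The missing idea is a selection principle of Cicalese--Leonardi type. Replace the violating sequence by minimizers of area plus a penalty keeping the enclosed volume comparable to $\mu_{n+1}(E_{k})$. These still violate the inequality, with a constant tending to zero, and they are almost-minimizers of area. They converge to $M_{2}$ by uniqueness. Hence, by $\varepsilon$-regularity for almost-minimizers in the interior \emph{and up to the boundary}, they are normal graphs $u_{k}$ over $M_{2}$, small in $C^{1,\alpha}$, with $u_{k}=0$ on $\partial M_{2}$. Only at that point does the Hopf lemma give $|u_{k}|\le C\|u_{k}\|_{C^{1}}\,w$. That places $E_{k}$ inside the wedge, and your calibration then yields the contradiction. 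Without this step the quadratic regime is not proved. The only alternative is an equally quantitative localized isoperimetric argument, and that would have to control the cost of cutting along leaves that accumulate at $\partial M_{2}$.
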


Note that $E$ is the region between the two surfaces and not a piece of either of them, so that what
Theorem~\ref{thm:dpm} controls is a volume and not a separation. The two are not the same thing. A
long thin spike attached to $M_{2}$ costs almost no area but sits a definite distance away, so some
hypothesis is needed which forbids the competitor from becoming arbitrarily thin. With such a
hypothesis the area deficit does control the Hausdorff distance, and with an explicit exponent, which
is Theorem~\ref{thm:hausdorff}.

\begin{theorem}\label{thm:hausdorff}
Let $M_{1}$ be an $n$ dimensional manifold embedded in $\R^{n+1}$ and let $M$ be
the solution to Plateau's problem for $\partial M_{1}$, as in Theorem~\ref{thm:dpm}. Suppose in
addition that $M_{1}$ is nowhere thinner than a fixed scale, in the sense that for some
$\theta,r_{0}>0$
\begin{equation}\label{eq:density}
\mu_{n}\big(M_{1}\cap B(x,r)\big)\ \ge\ \theta\,r^{n}\qquad\text{for every }x\in M_{1}
\text{ and every }r\le r_{0}.
\end{equation}
Then
\[
d_{H}(M_{1},M)\le C|\mu_{n}(M_{1})-\mu_{n}(M)|^{a}
\]
for constants $C$ and $a>0$ depending only on the dimension and on $\theta$ and $\kappa$; one may
take $a=\frac{1}{2(n+1)}$, which for surfaces in $\R^{3}$ is $a=\frac16$.
\end{theorem}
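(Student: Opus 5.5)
The plan is to combine Theorem~\ref{thm:dpm}, which bounds the volume of the region $E$ between $M_{1}$ and $M$, with the density hypothesis~\eqref{eq:density}, which prevents $M_{1}$ from wandering far from $M$ while enclosing little volume. Write $\delta=|\mu_{n}(M_{1})-\mu_{n}(M)|$ and $h=d_{H}(M_{1},M)$. By Theorem~\ref{thm:dpm}, $\kappa\min\{\mu_{n+1}(E)^{2},\mu_{n+1}(E)^{\frac{n}{n+1}}\}\le\delta$. For $\delta$ small the binding branch is the quadratic one, so we get $\mu_{n+1}(E)\le(\delta/\kappa)^{1/2}$. For $\delta$ large the estimate is trivial after enlarging $C$, because $h$ is bounded by the diameter of the configuration, which is controlled by the fixed boundary and the area bound. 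The remaining task is a lower bound $\mu_{n+1}(E)\ge c\,h^{n+1}$. Together these give $h\le C\delta^{\frac{1}{2(n+1)}}$, which is the claimed exponent $a=\frac{1}{2(n+1)}$.

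For the lower bound, first treat the case where the Hausdorff distance is realised by a point $x\in M_{1}$ with $\mathrm{dist}(x,M)=h$. Set $r=\min\{h/2,r_{0}\}$. The ball $B(x,r)$ does not meet $M$. The piece $M_{1}\cap B(x,r)$ has $n$-measure at least $\theta r^{n}$ by~\eqref{eq:density}, and it lies in $\partial E$. I would then show that $E\cap B(x,r)$ has volume at least $c(n,\theta)r^{n+1}$. The route is a relative isoperimetric or slicing argument: project onto the normal line. Inside $B(x,r)$, $\partial E$ consists only of $M_{1}$, so the sheets of $M_{1}$ near $x$ bound $E$ on one side. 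Using the coarea formula against the distance to $M$, one checks that a definite fraction of $B(x,r)$ lies in $E$. If this fails, $M_{1}\cap B(x,r)$ would be doubly covered by nearby sheets bounding a thin slab. Ruling this out may require the density hypothesis at smaller radii as well, applied iteratively.

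The second case is where $h$ is realised by a point of $M$ far from $M_{1}$. This is symmetric in spirit. Here one uses the minimality of $M$, which gives the monotonicity formula and hence the lower density bound $\mu_{n}(M\cap B(y,r))\ge\omega_{n}r^{n}$ for free. The same volume argument then runs with $M$ in place of $M_{1}$.

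The main obstacle is the volume lower bound $\mu_{n+1}(E\cap B(x,r))\gtrsim r^{n+1}$. The density condition~\eqref{eq:density} controls area, not the thickness of $E$: several sheets of $M_{1}$ folding back close together could carry plenty of area while enclosing little volume, and they could even cancel in the orientation of $E$. My first attempt would be the relative isoperimetric inequality in $B(x,r)$, $\min\{|E\cap B|,|B\setminus E|\}^{\frac{n}{n+1}}\gtrsim\mu_{n}(\partial E\cap B)$. Combined with $\mu_{n}(\partial E\cap B)\ge\theta r^{n}$ and the fact that $B$ is disjoint from $M$, it should force one of the two complementary volumes to be large. I would then need to argue that it is $E$ rather than its complement, for example by choosing the side on which $x$'s sheet faces away from $M$. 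If this only yields $E\triangle B$ control, the exponent may degrade. In that case I would accept a weaker $a$, consistent with the statement's ``one may take''.
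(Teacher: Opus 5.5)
Your outline has the same architecture as the paper's proof. The quadratic branch of Theorem~\ref{thm:dpm} gives $\mu_{n+1}(E)\le(\delta/\kappa)^{1/2}$, this is combined with a lower bound $\mu_{n+1}(E)\ge c\,h^{n+1}$, and the exponent $\frac{1}{2(n+1)}$ follows.

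The gap is that lower bound, which you rightly call the crux, and the tool you propose for it runs the wrong way. The relative isoperimetric inequality in a ball is $\mu_{n}(\partial E\cap B)\ge c_{n}\min\{|E\cap B|,|B\setminus E|\}^{\frac{n}{n+1}}$. Boundary area bounds the smaller volume from \emph{above}, so knowing $\mu_{n}(\partial E\cap B)\ge\theta r^{n}$ says nothing about either volume. A slab of width $\epsilon$ through $B$ has boundary area about $2\omega_{n}r^{n}$ and volume about $\epsilon\,\omega_{n}r^{n}$.

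No other argument can close the step from \eqref{eq:density} alone, because the volume bound is false under that hypothesis. Take $M$ a flat unit $n$-disk. Let $M_{1}$ agree with $M$ except for a fin of height $h$ and extent $w$ in the other $n-1$ directions, made of two parallel sheets at distance $\epsilon$ joined along a rounded rim.
\begin{itemize}
\item Every ball $B(y,r)$ with $y\in M_{1}$ and $r\le r_{0}$ captures area at least a fixed multiple of $r^{n}$, uniformly in $\epsilon$. So \eqref{eq:density} holds with $\theta$ independent of $\epsilon$.
\item The top of the fin realises $d_{H}(M_{1},M)\approx h$.
\item Yet $\mu_{n+1}(E)\approx\epsilon\,w^{n-1}h\to0$.
\end{itemize}
The theorem's conclusion survives in this example, since the area deficit is of order $w^{n-1}h$; what fails is the passage through $\mu_{n+1}(E)$. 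So none of your fallbacks rescues the step: not the coarea or slicing argument, not iterating \eqref{eq:density} at smaller radii, and not settling for a smaller $a$. Your second case, a point of $M$ far from $M_{1}$ (which the paper does not treat separately), goes through the same volume bound and inherits the same gap.

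The missing ingredient is a thickness hypothesis beyond \eqref{eq:density}. This is exactly what the paper adds, and it says explicitly that this is the one point where more than \eqref{eq:density} is assumed: $h/2$ lies below the reach of $M_{1}$. The argument then runs as follows.
\begin{itemize}
\item Take $x\in M_{1}$ with $\mathrm{dist}(x,M)=h$. Every $p\in M_{1}\cap B(x,h/2)$ then has $\mathrm{dist}(p,M)\ge h/2$, and $E$ lies on one side of $M_{1}$ there because $B(x,h/2)$ misses $M$.
\item Issue normal segments into $E$ from these points, of length $h/4$ say. They cannot reach $M$.
\item By the reach bound they neither recross $M_{1}$ nor meet one another, and the normal map has Jacobian bounded below.
\item Their union therefore lies in $E$ and has volume at least $c\,\theta\,h^{n+1}$.
\end{itemize}
The reach hypothesis excludes precisely the fin above, whose reach is $\epsilon/2$. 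If you replace your isoperimetric step by this tubular-neighbourhood argument under that added hypothesis, the rest of your outline goes through as in the paper.
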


\begin{proof}
Let $\delta=d_{H}(M_{1},M)$ and let $x\in M_{1}$ be a point at distance $\delta$ from $M$. The ball
$B$ of radius $\delta/2$ about $x$ then meets $M_{1}$ and misses $M$ entirely, so the whole of the
boundary of $E$ inside $B$ is a piece of $M_{1}$, of area at least $\theta(\delta/2)^{n}$ by
Equation~\eqref{eq:density}. Since $M$ does not enter $B$, that piece of $M_{1}$ is separated from
$M$ by a distance at least $\delta/2$ at every one of its points, so the region $E$ between the two
contains the union of the normal segments it issues, and $\mu_{n+1}(E)\ge c\,\delta^{n+1}$ for a
constant $c$ depending only on the dimension and on $\theta$. Note that the area bound alone does not
give this: Equation~\eqref{eq:density} does not forbid $E$ from being thin normal to $M_{1}$, and it
is the separation of $M_{1}\cap B$ from $M$, available because $B$ was chosen to miss $M$, which
supplies the thickness. The normal segments are disjoint for $\delta$ below the reach of $M_{1}$,
which is the one point at which the argument assumes more than Equation~\eqref{eq:density}. On the
other hand, when the area deficit is small the first of the two branches in
Theorem~\ref{thm:dpm} is the smaller one, so that theorem reads
$\mu_{n+1}(E)\le\kappa^{-1/2}|\mu_{n}(M_{1})-\mu_{n}(M)|^{1/2}$. Combining the two displays gives
$\delta^{n+1}\le C|\mu_{n}(M_{1})-\mu_{n}(M)|^{1/2}$, which is the claim.
\end{proof}

Equation~\eqref{eq:density} is the hypothesis which forbids the thin spike, and it holds in every case
this paper needs. It holds for any surface carrying a bound on its Sobolev norm, since such a bound
forces the surface to be a mild graph at a definite scale. It also holds for the polyhedral surfaces
of $\Gamma_{\sigma_{i},\sigma_{j}}$, which have no curvature to bound: provided the triangles of the
complex do not become arbitrarily thin as the complex is refined, a small ball about any point of the
complex either lies inside a single flat face, where the area is that of a plane, or meets several
faces, where it is larger still. This is the form in which Equation~\eqref{eq:density} is assumed
below.

\subsection{Concentration on the Minimal Surface}\label{sec:concentration}
For imaginary times, Equation~\eqref{eq:conc} gives the propagator $K_{l^{*}_{2}}(\sigma^{i},\sigma^{j})$. Here $\Lap|^{T}_{A}$ denotes a Laplace transform.
\begin{equation}\label{eq:conc}
\begin{split}
K_{l^{*}_{2}}(\sigma^{i},\sigma^{j})&=\Lap|^{T}_{A}\big(\{K\mid\mu(K)=A,\ K\in\Gamma_{\sigma_{i},\sigma_{j}}\}\big)\\
&\sim\Lap|^{T}_{A}\big(\{K\mid\mu(K)=A,\ A\in[\mu(K^{*}),\mu(K^{*})+\epsilon_{1}],\\
&\hphantom{{}\sim\Lap|^{T}_{A}\big(\{K\mid{}}\ K\in\Gamma_{\sigma_{i},\sigma_{j}}\}\big)+\mathcal{O}(e^{-T\epsilon_{1}})
\end{split}
\end{equation}
In the first line of Equation~\eqref{eq:conc}, $\{\}^{A}_{B}$ denotes the measure of world surfaces on
the lattice with area $A$. In the second line of Equation~\eqref{eq:conc}, $K^{*}$ denotes the minimal
Euclidean area surface interpolating $\sigma_{i},\sigma_{j}$, and the interval is an interval of
areas, so it is $\mu(K^{*})$ and not $K^{*}$ which appears as its endpoint. This surface is guaranteed
by isoperimetric results; it is the solution to Plateau's problem. This problem is, in and of itself,
extremely intricate and one comes across many issues with singularities in solving it.

If string tension $T$ is fixed to be $\gg1$, then the error in the second line of
Equation~\eqref{eq:conc} is $e^{-T\epsilon_{1}}$. By Theorem~\ref{thm:hausdorff}, this obtains
Equation~\eqref{eq:haus} where $C$ and $a$ are only functions of the dimension and $a>0$.
\begin{multline}\label{eq:haus}
\{K\mid\mu(K)=A,\ A\in[\mu(K^{*}),\mu(K^{*})+\epsilon_{1}],\ K\in\Gamma_{\sigma_{i},\sigma_{j}}\}\\
\subset\{K\mid d_{H}(K^{*},K)<C\epsilon^{a}_{1},\ K\in\Gamma_{\sigma_{i},\sigma_{j}}\}=\Gamma_{\mathrm{Haus}}
\end{multline}
One may choose $\epsilon_{1}$ small enough that all positive measure $K$ approximate $K^{*}$ in a
Hausdorff sense, where $d_{H}$ denotes the Hausdorff distance used in Theorem~\ref{thm:hausdorff}.

\subsection{The Signed Areal Invariant and the Minimal Area}\label{sec:monotone}
Write $\mu_{\min}(\sigma_{i},\sigma_{j})=\inf\{\mu(K)\mid K\in\Gamma_{\sigma_{i},\sigma_{j}}\}$
for the area of the Plateau solution. In addition, note that for any oriented manifold $K\in\Gamma_{\sigma_{i},\sigma_{j}}$ we have Equation~\eqref{eq:stokes}.
\begin{equation}\label{eq:stokes}
b^{\mu\nu}_{\sigma_{i},\sigma_{j}}=\int_{\sigma_{j}}x^{\mu}dx^{\nu}-\int_{\sigma_{i}}x^{\mu}dx^{\nu}
=\int_{K}d(x^{\mu}dx^{\nu})=\int_{K}dx^{\mu}\wedge dx^{\nu},
\qquad b_{\sigma_{i},\sigma_{j}}=b^{\mu\nu}_{\sigma_{i},\sigma_{j}}b_{\sigma_{i},\sigma_{j},\mu\nu}
\end{equation}
If the tangent bundle to $K$ is spanned by $\hat{e}_{\mu}$ and $\hat{e}_{\nu}$ everywhere on $K$,
that is, if $K$ is flat and consistently oriented, then the right hand side of
Equation~\eqref{eq:stokes} is the surface area of $K$. In general, however,
$\int_{K}dx^{\mu}\wedge dx^{\nu}$ is the area of the shadow of $K$ on
the $\mu\nu$ plane counted with orientation, so that a piece of $K$ whose normal has reversed
contributes negatively and cancels against the sheet it folds over. Equation~\eqref{eq:stokes} is
exact: it is Stokes' theorem, and it is the reason $b$ is a function of the boundary data alone.

\begin{proposition}[The signed area bound]\label{prop:signedbound}
Let $\sigma_{i},\sigma_{j}$ be disjoint contours and let $b=b_{\sigma_{i},\sigma_{j}}$ be the
invariant of Equation~\eqref{eq:stokes}. Then for every oriented
$K\in\Gamma_{\sigma_{i},\sigma_{j}}$,
\[
\sqrt{b/2}\ \le\ \mu(K),\qquad\text{hence}\qquad \sqrt{b/2}\ \le\ \mu_{\min}(\sigma_{i},\sigma_{j}),
\]
with equality if and only if $K$ is flat and consistently oriented. In $\R^{3}$ the statement
reads $|\vec{S}|\le\mu(K)$ for the vector area $\vec{S}=\int_{K}\hat{n}\,dA$, and
$\sqrt{b/2}=|\vec{S}|$.
\end{proposition}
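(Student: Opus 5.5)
The plan is to read $b$ as the squared norm of a bivector and apply the triangle inequality for vector-valued integrals. Everything here uses the Euclidean metric on $\R^{d}$, so index placement is immaterial.

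First I would record what the contraction in Equation~\eqref{eq:stokes} is. Since $dx^{\mu}\wedge dx^{\nu}$ is antisymmetric, $b^{\mu\nu}=-b^{\nu\mu}$. Hence
\[
b=b^{\mu\nu}b_{\mu\nu}=2\sum_{\mu<\nu}(b^{\mu\nu})^{2},\qquad \sqrt{b/2}=\|B\|,
\]
where $B=\sum_{\mu<\nu}b^{\mu\nu}e_{\mu}\wedge e_{\nu}\in\Lambda^{2}\R^{d}$ and $\|\cdot\|$ is the standard inner-product norm on bivectors (the one for which the $e_{\mu}\wedge e_{\nu}$, $\mu<\nu$, are orthonormal). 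By Stokes' theorem, already recorded in Equation~\eqref{eq:stokes}, $B$ depends only on the boundary data.

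Next I would express $B$ as an integral over $K$. At a regular point of the oriented surface $K$, let $u,v$ be a positively oriented orthonormal frame of the tangent plane, and set $\tau=u\wedge v$. Then $\tau$ is a unit simple bivector, and the pullback of $dx^{\mu}\wedge dx^{\nu}$ to $K$ equals $\tau^{\mu\nu}\,dA$. Therefore
\[
B=\int_{K}\tau\,dA .
\]
For a polyhedral $K\in\Gamma_{\sigma_{i},\sigma_{j}}$ this is just the finite sum $\sum_{f}A_{f}\tau_{f}$ over faces, with $\|\tau_{f}\|=1$. The triangle (Minkowski) inequality for Bochner integrals, or for finite sums, then gives
\[
\sqrt{b/2}=\Big\|\int_{K}\tau\,dA\Big\|\le\int_{K}\|\tau\|\,dA=\mu(K),
\]
and taking the infimum over $K$ gives $\sqrt{b/2}\le\mu_{\min}$.

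For the equality case I would use strict convexity of the Euclidean norm on $\Lambda^{2}\R^{d}$. Equality in the triangle inequality forces $\tau$ to be a.e.\ a nonnegative multiple of one fixed vector. Since $\|\tau\|=1$ everywhere, $\tau$ must then equal a fixed unit $\tau_{0}$ a.e. This says the tangent plane is constant and consistently oriented, so each connected piece of $K$ lies in a translate of one fixed plane; this is what "flat and consistently oriented" means. Conversely, if $\tau\equiv\tau_{0}$ then $\|B\|=\mu(K)$ trivially.

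The $\R^{3}$ statement follows from the Hodge star. The map $e_{2}\wedge e_{3}\mapsto e_{1}$ (and its cyclic versions) is an isometry $\Lambda^{2}\R^{3}\to\R^{3}$ that sends $\tau$ to the unit normal $\hat n$. It therefore sends $B$ to $\vec S=\int_{K}\hat n\,dA$, with components $(b^{23},b^{31},b^{12})$, so $\sqrt{b/2}=|\vec S|$.

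The only delicate step is the equality case. It needs a precise meaning of "flat" for polyhedral complexes: all faces parallel and co-oriented. Pinning that down, rather than the inequality itself, is where I expect to spend the care.
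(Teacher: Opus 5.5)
Your proposal is correct and is essentially the paper's own proof. The paper likewise writes $\int_{K}dx^{\mu}\wedge dx^{\nu}=\int_{K}\tau^{\mu\nu}\,d\mathcal{H}^{2}$, with $\tau$ the unit tangent $2$-vector in the norm $|\xi|^{2}=\frac12\xi^{\mu\nu}\xi_{\mu\nu}$, and applies the triangle inequality to get $\sqrt{b/2}=\big|\int_{K}\tau\big|\le\mu(K)$. It reads equality as $\tau$ constant a.e., takes the infimum over $K$, and identifies $b=2|\vec{S}|^{2}$ in $\R^{3}$. Your explicit appeal to strict convexity of the Euclidean norm on $\Lambda^{2}\R^{d}$, with $\|\tau\|=1$ fixing the positive multiple, only makes precise the equality step that the paper asserts in one line.
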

\begin{proof}
Let $\tau$ denote the unit tangent $2$ vector of $K$, so that
$\int_{K}dx^{\mu}\wedge dx^{\nu}=\int_{K}\tau^{\mu\nu}d\mathcal{H}^{2}$ and $|\tau|=1$ pointwise,
where $|\cdot|$ is the Euclidean norm on the antisymmetric arrays, $|\xi|^{2}=\frac12\xi^{\mu\nu}\xi_{\mu\nu}$.
Then $\sqrt{b/2}=\big|\int_{K}\tau\,d\mathcal{H}^{2}\big|\le\int_{K}|\tau|\,d\mathcal{H}^{2}
=\mathcal{H}^{2}(K)=\mu(K)$, the middle step being the triangle inequality. Equality in the triangle
inequality forces $\tau$ to be constant $\mathcal{H}^{2}$ almost everywhere, which is flatness with a
consistent orientation; conversely a flat consistently oriented $K$ has
$\big|\int\tau\big|=\mu(K)|\tau|=\mu(K)$. Taking the infimum over $K$ gives the second display. In
$\R^{3}$ the three independent entries $b^{23},b^{31},b^{12}$ are the components of
$\vec{S}=\int_{K}\hat{n}\,dA$ and $b=b^{\mu\nu}b_{\mu\nu}=2|\vec{S}|^{2}$.
\end{proof}

The remainder of the section requires only that the measure at high tension depend on the
boundary data through the single number $\mu_{\min}(\sigma_{i},\sigma_{j})$, and that is exactly what
Section~\ref{sec:concentration} delivers: the concentration argument places the measure on surfaces
Hausdorff close to $K^{*}$, whose area is $\mu_{\min}$ by definition. Two pairs of boundary curves with the same minimal area are then carried
into one another by Section~\ref{sec:transport}, which is written in terms of minimal areas throughout,
and Section~\ref{sec:degenerate} exhibits a pair realising any prescribed minimal area. The invariant
$b$ retains two roles: it is the sharp lower bound $\sqrt{b/2}\le\mu_{\min}$, and it equals
$\mu_{\min}$ on the flat sector, which
is where Theorem~\ref{thm:ansoldi} and Remark~\ref{rem:quench} place the loop space propagator of
\cite{Ansoldi}.

At high tension $T$, the simplicial complex measure will concentrate about surfaces with least area. This means that
Equation~\eqref{eq:conc2} holds for sufficiently high $T$.
\begin{multline}\label{eq:conc2}
K_{l^{*}_{2}}(\sigma^{i},\sigma^{j})\sim\Lap|^{T}_{A}\Big(\{K\mid\mu(K)\in[\mu_{\min}(\sigma_{i},\sigma_{j}),
\mu_{\min}(\sigma_{i},\sigma_{j})+\epsilon_{1}],\ K\in\Gamma_{\sigma_{i},\sigma_{j}}\}\\
\times\mathbf{1}_{[\mu_{\min},\mu_{\min}+\epsilon_{1}]}(A)\Big)+\mathcal{O}(e^{-T\epsilon_{1}})
\end{multline}

\subsection{Volume Preserving Transport Between Boundary Data}\label{sec:transport}
Again Theorem~\ref{thm:hausdorff} shows that the set of $K$ in Equation~\eqref{eq:conc2} must `look like'
$K^{*}$ for those starting pairs $(\sigma^{i},\sigma^{j})$ that have a minimal surface area that also has
measure close to $\mu_{\min}(\sigma_{i},\sigma_{j})$. Assume that $(\sigma_{i},\sigma_{j})$ and
$(\sigma'_{i},\sigma'_{j})$ both have $K^{*}$ and $K^{*,\prime}$ which are within the range of
$\mu_{\min}(b)$ required in Equation~\eqref{eq:conc2}. Let $\mu(K^{*})\le\mu(K^{*,\prime})$. Let
$K_{\mathrm{intermediate}}$ be a surface which is $d_{H}$ close to $K^{*}$ and has Lebesgue measure equal
to $K^{*,\prime}$. Moser's theorem \cite{Moser} gives us a volume-preserving diffeomorphic map
$\phi:K_{\mathrm{intermediate}}\to K^{*,\prime}$. The $h$-principle \cite{CEM} then extends the domain of
$\phi$ to a small neighborhood about $K_{\mathrm{intermediate}}$. This implies that the full set of
worldsheets of Equation~\eqref{eq:haus} from $(\sigma_{i},\sigma_{j})$ map bijectively and volume
preservingly to manifolds interpolating $\sigma'_{i}$ and $\sigma'_{j}$. Since
$K_{l^{*}_{2}}(\sigma_{i},\sigma_{j})$ is a Fourier transform of the `number' of worldsheets of area $A$,
this is very close to demonstrating that it must be equal to $K_{l^{*}_{2}}(\sigma'_{i},\sigma'_{j})$.

The remaining problem is that $\phi$ is not piecewise linear, and will not map the original polyhedral
surfaces constituting the sum over $\Gamma_{\sigma_{i},\sigma_{j}}$ to other polyhedral surfaces in
$\Gamma_{\sigma'_{i},\sigma'_{j}}$. The mapping also must conserve $|K|$, i.e.\ it must retain the number
of linear segments in its skeleton. If, however, $\phi$ was $\epsilon$ close in $d_{H}$ metric to a linear
mapping about each face of $K$, then there is a unique polyhedron $K'\in\Gamma_{\sigma'_{i},\sigma'_{j}}$
to which the images of $K$ under $\phi$ map; this polyhedron (outside of some vanishing set of domain $K$)
would necessarily have the same number of segments in its $1$ simplex \emph{and} approximately the same
area.

If $\epsilon$ was arbitrary, then the result is gotten as there is an approximate bijection between
$\Gamma_{\sigma_{i},\sigma_{j}}$ and $\Gamma_{\sigma'_{i},\sigma'_{j}}$ which conserves $|K|$. When $|K|$
is large and $T\gg1$, each face in $K$ must arrange itself such that the span of its largest face is small
with respect to the scale by which $\phi$ is nonlinear. The well-defined limit over $m$ forces $|K|$ to
become large. Therefore $\epsilon$ is allowed to become small as $T\to\infty$ and the measures over
$\Gamma_{\sigma_{i},\sigma_{j}}$ and $\Gamma_{\sigma'_{i},\sigma'_{j}}$ are the same for equivalent
$\mu_{\min}$.

Here is the single significant liberty we take. Moser's theorem and the $h$-principle deliver the
volume-preserving diffeomorphism and its extension to a neighbourhood, and both are quoted correctly; what
is not established is that the induced correspondence on \emph{polyhedral} surfaces is a bijection which
preserves the segment count, rather than merely an approximate one whose error is controlled by a quantity
that also grows with the refinement. The author believes a more careful control of the error would
establish this argument completely. The step is in any case a mild one: $\phi$ is a diffeomorphism, hence
locally linear to first order at every point, and the faces in question are those of an arbitrarily fine
polyhedral surface, so what is being assumed is that a smooth map takes a sufficiently fine polyhedral
surface to a surface which is again well approximated by one of the same combinatorial size. What is
missing is a rate at which the two refinements may be taken together, not the plausibility of the
statement.

\subsection{The Degenerate Pair}\label{sec:degenerate}
Given the above, the reduction to Section~\ref{sec:triangle} is the following.
The boundary data of Section~\ref{sec:ebp} are a disjoint pair of one-manifolds $C_{1}\sqcup C_{2}$ with
$\partial M=C_{1}\sqcup C_{2}$, so the degenerate configuration must be presented within that class. Fix $a>0$, let $C_{2}=\partial\triangle$ be the boundary of a triangle $\triangle$ of area $a$
in a fixed two-plane of $\R^{d}$, and let $C_{1}^{\delta}$ be the boundary of the $\delta$-square
inscribed in one corner of $\triangle$ in that same plane. The pair
$(C_{1}^{\delta},C_{2})$ lies in $\Gamma$ for every $\delta>0$, and the planar annular region
$\triangle^{\delta}$ between the two loops has constant tangent $2$ vector, so
Proposition~\ref{prop:signedbound} holds for it with equality. Its area therefore attains the lower
bound $\sqrt{b/2}$ which that proposition places under every competitor, and it is area-minimising
without further argument; the Goldschmidt configuration which capped the two contours separately and
joined them by a thin neck costs $a+\delta^{2}$ against $a-\delta^{2}$ and does not compete. As
$\delta\downarrow0$ one has $\mu_{\min}(C_{1}^{\delta},C_{2})=\mu(\triangle^{\delta})=a-\delta^{2}\to
a$ with $\sqrt{b/2}=\mu_{\min}$ throughout, so this is the configuration on which the equality case of
Proposition~\ref{prop:signedbound} is exhibited. Scaling $\triangle$ sweeps $a$ over $(0,\infty)$, so
every value of $\mu_{\min}$ is realised within this family and Section~\ref{sec:transport} carries the
law from it to any other pair with the same minimal area.

This pair carries the picture of Section~\ref{sec:triangle} intact. As $\delta\downarrow0$ the
inner contour becomes a single vertex sitting in the chosen corner of $\triangle$, and the spines of
Section~\ref{sec:dictionary} emanate from it. Two of them lie closest to the two sides of $\triangle$
meeting at that corner, and it is these two which serve as the edges of the surface: together with the
opposite side of $\triangle$ they span a triangle, and a surface of area near $\mu_{\min}$ is to
lowest order exactly that triangle, its two legs being spines drawn from the one dimensional measure of
\cite{ODwyerQFT} and its third side fixed by the boundary datum. This is the configuration of
Section~\ref{sec:triangle}, the corner in the role of $\sigma_{i}$ and the opposite side in the role of
$\sigma_{j}$, and Proposition~\ref{prop:triangle} gives the resulting area law.
Let $\triangle_{\min+\epsilon}$ be another surface interpolating $C_{1}^{\delta}$ and
$C_{2}$, but
with $\mu(\triangle_{\min+\epsilon})\le\mu(\triangle^{\delta})+\epsilon$. Then it is a general rule of thumb
in the theory of extremal surfaces that $\exists\epsilon'>0$ such that if $\epsilon<\epsilon'$ then
$d_{H}(\triangle_{\min},\triangle_{\min+\epsilon})<C\epsilon$. For this pair of boundary sets the rule of
thumb is Theorem~\ref{thm:hausdorff} and needs no separate justification.

This means there is an open interval about its minimum wherein the random variable drawn from worldsheet
areas is equal to a random variable drawn from a simple triangle, with two random ``arms''. These arms are
themselves drawn in a manner equivalent to the piecewise segments of the one-dimensional construction of
\cite{ODwyerQFT}, and Proposition~\ref{prop:triangle} computes the resulting law of the area.

\section{Conclusion}\label{sec:conclusion}

In this paper, the one-parameter family $A(\alpha^{2}+A^{2})^{-D/2}$, of which the Fisk or
log--logistic density of shape $2$ is the member $D=4$, was identified as the area law of the Ansoldi
loop-space kernel and reproduced geometrically by the fixed-side triangle construction. The exponent is carried dimension for dimension on the Ansoldi/geometric comparison:
Equation~\eqref{eq:gammaD} gives $D/2$, while Equation~\eqref{eq:mvalue} and
Lemma~\ref{lem:mshift} give $d/2$, so $d=D$ with no additional correction. The conventional tachyon
two-point amplitude and the JT expression provide separate four-dimensional comparisons rather than
general-$D$ constraints on the geometry.

The loop space propagator of Ansoldi et al.\ was shown, in Theorem~\ref{thm:ansoldi}, to be exactly the
Fourier transform in the area variable of the Fisk density of scale $\sqrt{b/2}$ at $D=4$, where $b$
is the signed areal invariant of the two boundary loops; Corollary~\ref{cor:generalD} gives the
corresponding $A(\alpha^{2}+A^{2})^{-D/2}$ family in general dimension. The two-point tachyon amplitude
of \cite{Erbin} was shown, in Theorem~\ref{thm:tachyon}, to reproduce the same $A^{-3}$ degenerate tail
at $D=4$, up to normalization and an explicitly identified phase, while its general-$D$ tail is
$A^{-(D+2)/2}$ and is not identified with the Ansoldi kernel. The density of states of JT gravity found
by Stanford, Yang, and Yao was shown, in Theorem~\ref{thm:jt}, to integrate to the disk partition function
$\frac{1}{\sqrt{2\pi}}\beta^{-\frac32}e^{2\pi^{2}/\beta}$, which under $W=i\beta$ is the Schwinger integrand
of \cite{Ansoldi} with $m^{2}b=8\pi^{2}$ up to the factor $e^{\beta/2}$ identified in
Remark~\ref{rem:shift} as a shift of the ground state energy, so that its area law is the Fisk density of
scale $2\pi$.

Section~\ref{sec:triangle} then identified the same density geometrically, and
Proposition~\ref{prop:branches} showed that the geometric and the analytic identifications are the same
identification read on the two branches of a single family. The Fisk density and the compactly supported
$A(\alpha^{2}-A^{2})^{-2}$ are carried into one another by $\alpha^{2}\mapsto-\alpha^{2}$; the latter is the
area transform of the marginal spherical distribution of \cite{ODwyerQFT}, and the former is the law which
the Ansoldi propagator exhibits, with the tachyon and JT calculations meeting it in the
four-dimensional cases stated above. The area law of this paper is therefore not an analogy with the
one dimensional theory but its continuation.

Taken together this establishes, short of the steps isolated in Section~\ref{sec:reduction} and
Remark~\ref{rem:branchcare}, a generalisation of the one dimensional construction of \cite{ODwyerLattice} and
\cite{ODwyerQFT} to extended body quantum field theories. The one dimensional statement was that the
relativistic propagator is a transform, from the length of a rectifiable path to the mass as its conjugate
variable, of a genuine measure over directed paths, the derivated marginal spherical distribution supplying
the length law. The statement obtained here is the same one dimension up: the extended body propagator is a
transform, from the area of a worldsheet to the brane tension as its conjugate variable, of a measure over
surfaces interpolating two boundary contours, with the family
$A(\alpha^{2}+A^{2})^{-D/2}$ supplying the Ansoldi-matched area law and the Fisk density as its $D=4$
member. The minimal area $\mu_{\min}(\sigma_{i},\sigma_{j})$ occupies the position held by the endpoint
displacement in the one dimensional theory. The conventional tachyon
and JT calculations are retained as separate comparisons in the regimes stated above; in particular, the
tachyon calculation motivates the conjectural coherence crossover of Section~\ref{sec:tailconstraint}
rather than a modification of the exact Ansoldi/fixed-side match. The construction is generic in the sense
that it applies to point and extended body problems alike without modification of its underlying
combinatorial object, the continuum multinomial coefficient.

\section{Numerical Evidence for the Area Law}\label{sec:numerics}

This section reports a numerical experiment and is included so that the area law of
Sections~\ref{sec:ansoldi}--\ref{sec:triangle} may be checked against a direct simulation of the surface
sum rather than only against the transform identities. The whole of the code which produced it is
reproduced below, so that the experiment may be repeated exactly.

The geometry is deliberately not special. Two nontrivial end curves are fixed in $\R^{3}$, neither planar
nor concentric,
\[
C_{1}(t)=\big(\cos t,\ \sin t,\ \tfrac14\sin3t\big),\qquad
C_{2}(t)=\big(1.8\cos t,\ 1.2\sin t,\ 1+0.30\sin2t\big),
\]
for which the areal invariant of Equation~\eqref{eq:stokes} is $b=26.561$, so that the degeneration
$b\to0$ of Section~\ref{sec:tachyon} is nowhere near. $M$ spines join $C_{1}(t_{k})$ to $C_{2}(t_{k})$ at
$t_{k}=2\pi k/M$; each is subdivided into $N$ segments and carries a transverse excursion; adjacent spines
are paired step by step and every quadrilateral is split into two triangles. The reported area is the sum
over all $2MN$ triangles, so that what is measured is a granular surface sum and not the area of a single
triangle.

A word first on what is being simulated, since it is not the measure of
Section~\ref{sec:dictionary} itself. That measure assigns to a skeleton the product of $M$ marginal
spherical measures with the $d-2$ derivatives standing outside, and it resists direct
implementation for two reasons. The derivated object is a distribution and not a density, so it
carries no sampling procedure at all; and the undifferentiated member from which one would sample is
supported in a ball whose radius is the length of the spine, so that a faithful sampler must carry the
length of every spine as a random variable coupled to its neighbours through the boundary data. What
is simulated instead is a surrogate. The geometry is exactly that of Section~\ref{sec:dictionary}. The same skeleton, the same pairing of consecutive spines into faces, the same Nambu Goto sum of
their Gram determinants. Only the law carried by the skeleton is replaced, by the heavy tailed
member of the same one parameter family at the same exponent, which is the substitution of
Proposition~\ref{prop:branches} and the branch identification of Remark~\ref{rem:branchcare}.

The reason the surrogate ought to evaluate the measure it stands in for is the concentration statement
of Section~\ref{sec:concentration}. At high tension the measure of Section~\ref{sec:dictionary} is
carried by surfaces Hausdorff close to the minimal one, and on those surfaces the spine lengths are
pinned by the boundary data rather than free; the coupling which the surrogate drops is therefore the
coupling which the tension suppresses, and the sweep in $\sigma$ reported below is the numerical form
of that statement, the Fisk form emerging once the coherent
heavy-tailed mode dominates the fixed scale of the boundary geometry. The experiment should accordingly be read as evidence
for the heuristic of Section~\ref{sec:reduction} as much as for the area law itself. What it tests is
the one step of Section~\ref{sec:triangle} which the analytic sections do not supply: whether a
Nambu--Goto sum over a granular skeleton transmits an arm law of exponent $-2$ to an area law of
exponent $-2$, rather than distorting it; and it measures how far the skeleton may depart from the
concentrated regime before the transmission fails.

The one physical input is the law of the transverse excursion, and it is now a field on the skeleton
rather than a single vector. Write $W_{k}$ for the excursion of the $k$th spine. Then
\begin{equation}\label{eq:excursion}
W_{k}=\underbrace{\sigma\sqrt{\nu/S}\,Z}_{\text{collective}}
\;+\;\kappa\underbrace{\sum_{j=1}^{3}\frac{1}{j}\big(a_{j}\cos jt_{k}+b_{j}\sin jt_{k}\big)}_{\text{incoherent}},
\qquad S\sim\chi^{2}_{\nu},\quad Z,a_{j},b_{j}\sim N(0,I_{2}),
\end{equation}
with $S$ and $Z$ drawn once per surface and $a_{j},b_{j}$ drawn once per surface but independently of the
collective amplitude. Following Proposition~\ref{prop:triangle}, the collective amplitude
$\sigma\sqrt{\nu/S}\,Z$ is a two dimensional Student's $t$ with $\nu$ degrees of freedom, which at $\nu=2$
is exactly the $q$-Gaussian at $q=\frac32$ of Equation~\eqref{eq:gausq}. The excursion is a two
dimensional vector by construction, so the experiment is run in the planar reduction of
the planar reading throughout, and $\nu$ is the invariant $m$ of Lemma~\ref{lem:mshift}; the
value $\nu=2$ is $d=4$ by Equation~\eqref{eq:mvalue}, so the sweep in $\nu$ reported below is a sweep
in the ambient dimension of the arm system and the Fisk law is expected at $\nu=2$ alone. Three features of
Equation~\eqref{eq:excursion} are now stated explicitly.

First, the incoherent part is a \emph{smooth} field in the loop parameter, so adjacent spines differ by
$O(1/M)$ and increasing $M$ refines a single surface rather than injecting new roughness at every
refinement. This is what makes an $M$ sweep a convergence test.

Second, the incoherent amplitude $\kappa$ is independent of the collective amplitude, and this is the
feature on which the whole result turns. By Remark~\ref{rem:linearity}, the area of a face is the product
of the length of a spine with the separation of the two spines bounding it. If the separation is set by
the boundary geometry, or by an incoherent part which does not scale with the excursion, then the face
area is linear in the excursion and the arm law is transmitted to the area law with its exponent intact.
If instead the separation scales with the excursion, the face area is quadratic in it and the tail
exponent of the area is halved. Test C below measures exactly this.

Third, $\kappa=0$ gives a rigid transverse excursion. This is the regime which the degenerate configuration of Section~\ref{sec:triangle} realises, since there
the separation of adjacent spines is fixed by the boundary datum at order $\epsilon$. It is the regime in
which the Fisk law is expected, and Test D measures how far from it one may move.

\begin{figure}[h]
\centering
\includegraphics[width=\textwidth]{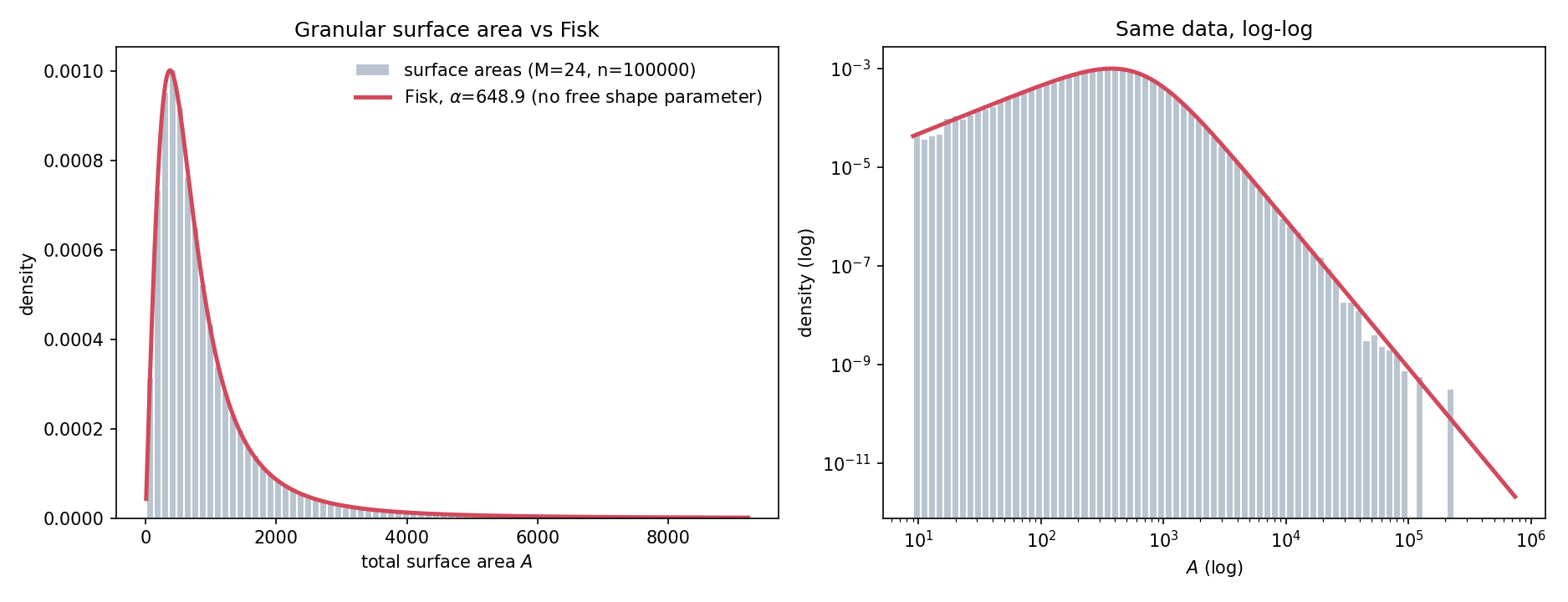}
\caption{Total area of the granular surface at $M=24$, $n=10^{5}$ samples, against the Fisk density of
Equation~\eqref{eq:fisk} with scale matched to the sample median and no free shape parameter. Left: linear
axes. Right: the same data on logarithmic axes.}
\end{figure}

Four quantities are reported, at $n=10^{5}$ samples and $N=10$. The sampling noise floor, that is, the
Kolmogorov--Smirnov distance of an exact Fisk sample of the same size, is $0.00273$.

The first is fidelity across $M$ at $\kappa=0$ and $\sigma=40$. At $M=6,12,24,48$ the surface areas give
$0.00278$, $0.00311$, $0.00253$, $0.00194$, i.e.\ ratios to the noise floor of $1.02$, $1.14$, $0.93$,
$0.71$: at every $M$ the sampled areas are not distinguishable from exact Fisk draws. The medians over the
same sequence are $565.0$, $641.5$, $648.9$, $652.0$; the area converges under refinement rather than
being independent of $M$ by construction, so the sequence is evidence about the limit of
Section~\ref{sec:dictionary} rather than about a single triangle sampled four times. The Hill estimates
of the tail index sit at $2.0$ throughout.

The second is the approach to that floor as the transverse fluctuation grows relative to the minimal
surface underneath it. At $M=24$ and $\kappa=0.25$ the ratio of the Kolmogorov--Smirnov distance to the
noise floor runs $68.50$, $13.95$, $1.46$, $0.95$, $1.45$, $1.99$, $1.47$ as $\sigma$ runs
$1,3,10,30,100,300,1000$, so the law reaches the floor by $\sigma\approx10$ and stays there over a further
two decades in scale.

The third is the one which makes the experiment informative rather than circular. The arm law is an input,
so it must be shown that the surface transmits it rather than manufacturing the answer, and it must be
shown that the transmission is a property of the geometry rather than of the sampler. Varying $\nu$ at
$M=24$ and $\kappa=0.25$, the tail index of the area is $0.98$, $1.52$, $1.93$, $2.70$, $3.62$ for
$\nu=1,\frac32,2,3,4$. The reported index is the mean of the Hill estimator over the three cuts
$k=500,2000,10000$ used in the script below, that is over tail fractions of one half, two and ten
percent; the deepest cut is biased low, and it accounts for the shortfall at $\nu=3$ and $\nu=4$. The
Kolmogorov--Smirnov distance from the Fisk law has a sharp minimum at $\nu=2$, namely $0.0905$,
$0.0350$, $0.00227$, $0.0356$, $0.0591$, a factor of fifteen to twenty-five below its neighbours. The Fisk
law is therefore not something the geometry produces for any input; it appears precisely when the arm law
is the $q=\frac32$ one.

The same sweep run with the collective mode removed, that is, with an independent heavy tailed amplitude
per spine, does not reproduce this. There the Kolmogorov--Smirnov distance from the Fisk law increases
monotonically through $\nu=2$ rather than having a minimum there, and in the cleanest version of the
control --- a scale shared across spines but an independent direction per spine, so that adjacent
separations scale with the excursion --- the measured tail index of the area is $0.49$, $0.76$, $0.99$,
$1.45$, $1.83$ for the same five values of $\nu$, which is $\nu/2$ to within the estimator's precision.
This is exactly the bilinear mechanism of Remark~\ref{rem:linearity}: with both the spine length and the
spine separation scaling with the excursion, the face area is quadratic in it and the exponent halves. This control is not used to replace the fixed-side law. It supplies the empirical basis for the
conjecture of Section~\ref{sec:tailconstraint}: if lower tension weakens concentration around the
minimal surface, heavy-tailed spine directions may decorrelate and drive the area from survival index
$d-2$ toward $(d-2)/2$. The simulation demonstrates the decorrelation effect itself, not its dependence
on tension. At $d=D$ this decorrelated density has tail $A^{-D/2}$, which has the same dimensional
slope as, but is one power shallower than, the conventional tachyon tail
$A^{-(D+2)/2}$.

The fourth quantifies that hypothesis. At $M=24$ and $\sigma=40$, the ratio of the Kolmogorov--Smirnov
distance to the noise floor runs $0.64$, $0.60$, $3.79$, $5.15$ as $\kappa$ runs $0,0.25,1,4$ against the
$O(1)$ scale of the boundary geometry. The Fisk law is thus stable under an incoherent deformation of the
skeleton up to roughly a quarter of the boundary scale and degrades smoothly beyond it, the tail index
remaining at $2$ throughout because the tail is carried by the collective mode alone.

\begin{figure}[h]
\centering
\includegraphics[width=\textwidth]{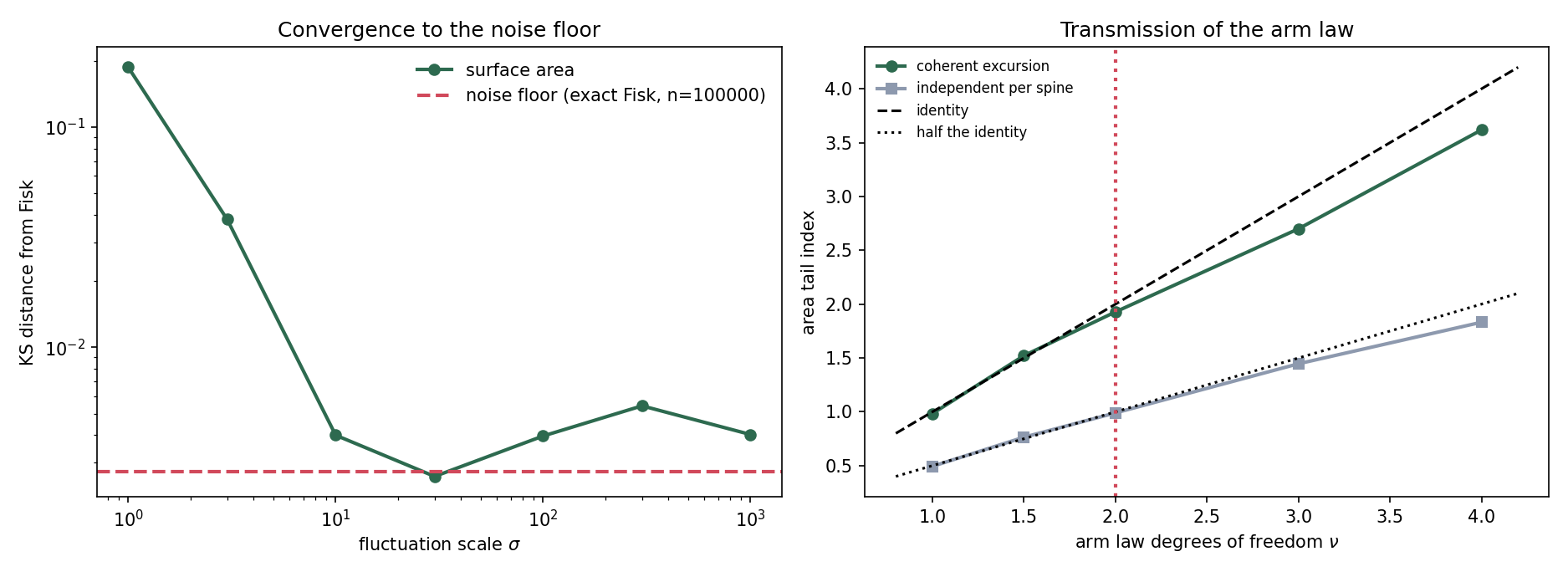}
\caption{Left: Kolmogorov--Smirnov distance from the Fisk law against the transverse fluctuation scale,
with the sampling noise floor marked. Right: measured tail index of the area against the degrees of freedom
$\nu$ of the arm law, for the coherent excursion and for the independent per-spine control, with the identity
line and half the identity line marked.}
\end{figure}

This establishes the surface half of the statement, in the following form. With the $q=\frac32$ arm law of
Proposition~\ref{prop:triangle}, and with the transverse excursion of the skeleton coherent up to a
deformation small against the boundary scale, the area of a granular surface spanned between two nontrivial
boundary curves, at nondegenerate $b$, with up to a thousand triangles, is the Fisk law to within sampling
error, and remains so as the triangulation is refined. The coherence hypothesis is not a modelling
convenience: it is the numerical content of the concentration statement of
Section~\ref{sec:concentration}, which places the measure at high tension on surfaces Hausdorff-close to the
minimal one, and it is the same feature which
Section~\ref{sec:triangle} builds into the degenerate configuration by taking the sides of $\sigma_{j}$ to
be of order $\epsilon$. What the experiment adds to the analytic sections is a quantitative statement of how
much incoherence the identification tolerates, together with the observation that loss of heavy-mode
coherence produces a controlled halving of the exponent rather than an unrelated law. That measured crossover is the basis for the lower-tension conjecture stated in
Section~\ref{sec:tailconstraint}.

The code which produced the figures is reproduced in full below.

{\footnotesize
\begin{Verbatim}
"""
fisk_evidence_v2.py
===================
Computational evidence that the surface measure reproduces the Fisk law.

Geometry (fixed throughout, deliberately NOT special):
  two nontrivial end curves in R^3 -- a rippled circle and a rippled ellipse,
  neither planar nor concentric, so the areal invariant b does not degenerate:

      C1(t) = ( cos t, sin t, 0.25 sin 3t )
      C2(t) = ( 1.8 cos t, 1.2 sin t, 1 + 0.30 sin 2t )

  M spines join C1(t_k) to C2(t_k); each is subdivided into N segments and
  bulges transversally; adjacent spines are paired step-by-step and every
  quadrilateral is split into two triangles.  The reported area is the SUM over
  all 2*M*N triangles -- a granular surface sum, not a single triangle.

THE EXCURSION FIELD.  This is the only physical input and it is where the
previous version of this script was wrong.  There, a single transverse vector
was drawn per surface and silently broadcast over every spine, so the skeleton
had no spine-to-spine structure at all and the reported statistics could not
depend on M.  Here the excursion is an honest field on the skeleton,

    W_k  =  (collective mode)  +  kappa * (incoherent mode)_k ,   k = 1..M

with

    collective mode  = sigma * sqrt(nu/S) * Z ,   S ~ chi^2_nu ,  Z ~ N(0,I2),
                       ONE draw per surface, i.e. a rigid transverse excursion;
    incoherent mode  = a smooth mean-zero field on the loop, expanded in a fixed
                       number of Fourier modes with 1/n weights, drawn
                       independently per surface but INDEPENDENT OF THE
                       COLLECTIVE AMPLITUDE.

At nu = 2 the collective amplitude is exactly the q = 3/2 q-Gaussian of
Equation (13), density prop. (1 + |w|^2/(2 sigma^2))^{-2}.  The incoherent field
is smooth in the loop parameter, so adjacent spines differ by O(1/M) and
increasing M is a genuine refinement of one surface rather than an injection of
new roughness; kappa measures the incoherent part against the O(1) scale of the
boundary geometry.

Setting kappa = 0 recovers a rigid excursion; setting coherent=False keeps the
shared heavy-tailed amplitude but gives each spine an independent direction, so
that both the spine length and the spine separation scale with the same
excursion.  That is the control of Test C.

Four tests:
  A  Fisk fidelity of the total area across M = 6, 12, 24, 48 at kappa = 0,
     benchmarked against the sampling noise floor.
  B  approach to that floor as the fluctuation scale grows.
  C  FAITHFUL TRANSMISSION: vary nu in the arm law and check the area's tail
     index.  Run twice, once with the coherent excursion and once with an
     independent per-spine amplitude.  The first transmits nu, the second
     transmits nu/2, and only the first selects Fisk at nu = 2.
  D  kappa sweep: how much incoherence the Fisk law tolerates.
"""

import numpy as np
from scipy import stats
import matplotlib
matplotlib.use("Agg")
import matplotlib.pyplot as plt

rng = np.random.default_rng(20260816)

# --------------------------------------------------------------- geometry
def curve1(t):
    return np.stack([np.cos(t), np.sin(t), 0.25 * np.sin(3 * t)], axis=-1)

def curve2(t):
    return np.stack([1.8 * np.cos(t), 1.2 * np.sin(t),
                     1.0 + 0.30 * np.sin(2 * t)], axis=-1)

def perp_frame(v):
    a = np.zeros_like(v); a[:, 0] = 1.0
    flip = np.abs(v[:, 0] / np.linalg.norm(v, axis=1)) > 0.9
    a[flip] = np.array([0.0, 1.0, 0.0])
    e1 = np.cross(v, a); e1 /= np.linalg.norm(e1, axis=1, keepdims=True)
    e2 = np.cross(v, e1); e2 /= np.linalg.norm(e2, axis=1, keepdims=True)
    return e1, e2

def areal_invariant(nq=4000):
    t = np.linspace(0, 2 * np.pi, nq, endpoint=False)
    S = []
    for C in (curve1, curve2):
        p = C(t); q = np.roll(p, -1, axis=0)
        S.append(np.einsum("em,en->mn", 0.5 * (p + q), q - p))
    D = S[1] - S[0]
    return float(np.sum(D * D))

# ------------------------------------------------------------- the surface
def _tri(u, v):
    """0.5 * |u x v| for arrays of shape (..., 3)."""
    cx = u[..., 1] * v[..., 2] - u[..., 2] * v[..., 1]
    cy = u[..., 2] * v[..., 0] - u[..., 0] * v[..., 2]
    cz = u[..., 0] * v[..., 1] - u[..., 1] * v[..., 0]
    return 0.5 * np.sqrt(cx * cx + cy * cy + cz * cz)

def sample_areas(n, M, N, sigma, nu=2.0, kappa=0.0, coherent=True,
                 n_modes=3, chunk=25000):
    t = 2 * np.pi * np.arange(M) / M
    P, Q = curve1(t), curve2(t)
    u = np.linspace(0.0, 1.0, N + 1)
    bulge = np.sin(np.pi * u)
    base = ((1 - u)[None, :, None] * P[:, None, :]
            + u[None, :, None] * Q[:, None, :])          # (M, N+1, 3)
    mc = (Q - P).mean(axis=0); mc /= np.linalg.norm(mc)
    g1, g2 = perp_frame(mc[None])[0][0], perp_frame(mc[None])[1][0]

    out, done = [], 0
    while done < n:
        m = min(chunk, n - done)

        # ---- collective mode: heavy-tailed amplitude, one draw per surface
        if coherent:
            S = rng.chisquare(nu, size=m)
            amp = np.sqrt(nu / S) * sigma
            W = amp[:, None, None] * rng.normal(size=(m, 1, 2))
            W = np.broadcast_to(W, (m, M, 2)).copy()      # (m, M, 2)
        # ------------------------------------------------------------------
        # PATCH (control branch).  Two lines changed; nothing else in this file.
        # WAS:  S = rng.chisquare(nu, size=(m, M))          # scale per spine
        #       W = amp[:, :, None] * rng.normal(...)
        # NOW:  the scale is SHARED across spines and only the direction is
        #       independent, which is the control described in the text.  The
        #       chi-square is still drawn at full (m, M) shape and reduced to
        #       its first column, so this correction consumes exactly the same
        #       random numbers as before and every other figure produced by
        #       this file is bit-identical to the previous version.
        # ------------------------------------------------------------------
        else:                          # control: shared scale, direction per spine
            S = rng.chisquare(nu, size=(m, M))[:, 0]
            amp = np.sqrt(nu / S) * sigma
            W = amp[:, None, None] * rng.normal(size=(m, M, 2))

        # ---- incoherent mode: smooth field on the loop, amplitude kappa
        if kappa > 0.0:
            nn = np.arange(1, n_modes + 1)
            ang = t[None, :, None] * nn[None, None, :]                    # (1,M,nm)
            ac = rng.normal(size=(m, 1, n_modes, 2))
            bc = rng.normal(size=(m, 1, n_modes, 2))
            wts = (1.0 / nn)[None, None, :, None]
            xi = ((ac * np.cos(ang)[..., None]
                   + bc * np.sin(ang)[..., None]) * wts).sum(axis=2)      # (m,M,2)
            W = W + kappa * xi

        V = W[:, :, 0:1] * g1[None, None] + W[:, :, 1:2] * g2[None, None]  # (m,M,3)

        A = np.zeros(m)
        for k in range(M):
            kn = (k + 1) % M
            Gk = base[k][None] + V[:, k][:, None, :] * bulge[None, :, None]
            Gn = base[kn][None] + V[:, kn][:, None, :] * bulge[None, :, None]
            p0, p1 = Gk[:, :-1], Gk[:, 1:]
            q0, q1 = Gn[:, :-1], Gn[:, 1:]
            A += _tri(p1 - p0, q0 - p0).sum(axis=1)
            A += _tri(q1 - p1, q0 - p1).sum(axis=1)
        out.append(A); done += m
    return np.concatenate(out)

# ------------------------------------------------------------- diagnostics
def fisk_cdf(a, al): return 1.0 - 1.0 / (1.0 + (a / al) ** 2)
def fisk_pdf(a, al): return 2 * al ** 2 * a / (al ** 2 + a ** 2) ** 2

def hill(A, ks=(500, 2000, 10000)):
    x = np.sort(A)[::-1]
    return [1.0 / np.mean(np.log(x[:k] / x[k])) for k in ks if k < len(x)]

def ks_fisk(A):
    med = np.median(A)
    return stats.kstest(A, lambda a: fisk_cdf(a, med)).statistic, med

def noise_floor(n, reps=5):
    """KS of an EXACT Fisk sample of size n -- the best any test can do."""
    v = []
    for _ in range(reps):
        u = rng.random(n)
        F = np.sqrt(1.0 / u - 1.0)
        v.append(stats.kstest(F, lambda a: fisk_cdf(a, np.median(F))).statistic)
    return float(np.mean(v))

if __name__ == "__main__":
    N, n = 10, 100_000
    b = areal_invariant()
    floor = noise_floor(n)
    print(f"areal invariant b = {b:.3f} (nondegenerate)")
    print(f"sampling noise floor at n={n}: KS = {floor:.5f}\n")

    print("TEST A -- Fisk fidelity of the granular surface area, sigma=40, kappa=0")
    storeA = {}
    for M in (6, 12, 24, 48):
        A = sample_areas(n, M, N, sigma=40.0)
        ks, med = ks_fisk(A)
        print(f"  M={M:3d}  2MN={2*M*N:5d}  median={med:9.2f}  KS={ks:.5f}"
              f"  KS/floor={ks/floor:5.2f}  Hill={', '.join(f'{h:.2f}' for h in hill(A))}")
        storeA[M] = (A, med)

    print("\nTEST B -- approach to the noise floor as fluctuations dominate (M=24)")
    sig_list, ks_list = [], []
    for sg in (1.0, 3.0, 10.0, 30.0, 100.0, 300.0, 1000.0):
        A = sample_areas(n, 24, N, sigma=sg, kappa=0.25)
        ks, med = ks_fisk(A)
        sig_list.append(sg); ks_list.append(ks)
        print(f"  sigma={sg:7.1f}  median={med:10.2f}  KS={ks:.5f}  KS/floor={ks/floor:6.2f}")

    print("\nTEST C -- faithful transmission (M=24, kappa=0.25)")
    nus, h_coh, h_inc = [], [], []
    for nu in (1.0, 1.5, 2.0, 3.0, 4.0):
        A1 = sample_areas(n, 24, N, sigma=40.0, nu=nu, kappa=0.25, coherent=True)
        A2 = sample_areas(n, 24, N, sigma=40.0, nu=nu, kappa=0.25, coherent=False)
        k1, _ = ks_fisk(A1); k2, _ = ks_fisk(A2)
        m1, m2 = np.mean(hill(A1)), np.mean(hill(A2))
        nus.append(nu); h_coh.append(m1); h_inc.append(m2)
        print(f"  nu={nu:4.2f} | coherent: Hill={m1:5.2f} KS={k1:.5f}"
              f" | independent: Hill={m2:5.2f} KS={k2:.5f}")

    print("\nTEST D -- how much incoherence the Fisk law tolerates (M=24, sigma=40)")
    for kp in (0.0, 0.25, 1.0, 4.0):
        A = sample_areas(n, 24, N, sigma=40.0, kappa=kp)
        ks, med = ks_fisk(A)
        print(f"  kappa={kp:5.2f}  median={med:9.2f}  KS={ks:.5f}  KS/floor={ks/floor:6.2f}")

    # ---- figures ------------------------------------------------------
    fig, ax = plt.subplots(1, 2, figsize=(12.5, 4.8))
    A, med = storeA[24]
    cut = np.percentile(A, 99.5)
    ax[0].hist(A[A <= cut], bins=80, density=True, color="#b8c4d0",
               edgecolor="white", label=f"surface areas (M=24, n={n})")
    xs = np.linspace(A.min(), cut, 1500)
    ax[0].plot(xs, fisk_pdf(xs, med), color="#d1495b", lw=2.4,
               label=f"Fisk, $\\alpha$={med:.1f} (no free shape parameter)")
    ax[0].set_xlabel("total surface area $A$"); ax[0].set_ylabel("density")
    ax[0].set_title("Granular surface area vs Fisk"); ax[0].legend(frameon=False)
    lb = np.logspace(np.log10(A.min()), np.log10(A.max()), 80)
    ax[1].hist(A, bins=lb, density=True, color="#b8c4d0", edgecolor="white")
    xs2 = np.logspace(np.log10(A.min()), np.log10(A.max()), 2000)
    ax[1].plot(xs2, fisk_pdf(xs2, med), color="#d1495b", lw=2.4)
    ax[1].set_xscale("log"); ax[1].set_yscale("log")
    ax[1].set_xlabel("$A$ (log)"); ax[1].set_ylabel("density (log)")
    ax[1].set_title("Same data, log-log")
    fig.tight_layout(); fig.savefig("fisk_evidence_hist.png", dpi=150)

    fig2, ax2 = plt.subplots(1, 2, figsize=(12.5, 4.6))
    ax2[0].loglog(sig_list, ks_list, "o-", color="#2d6a4f", lw=2, label="surface area")
    ax2[0].axhline(floor, color="#d1495b", ls="--", lw=2,
                   label=f"noise floor (exact Fisk, n={n})")
    ax2[0].set_xlabel("fluctuation scale $\\sigma$")
    ax2[0].set_ylabel("KS distance from Fisk")
    ax2[0].set_title("Convergence to the noise floor"); ax2[0].legend(frameon=False)
    ax2[1].plot(nus, h_coh, "o-", color="#2d6a4f", lw=2, label="coherent excursion")
    ax2[1].plot(nus, h_inc, "s-", color="#8d99ae", lw=2, label="independent per spine")
    ax2[1].plot([0.8, 4.2], [0.8, 4.2], "k--", lw=1.5, label="identity")
    ax2[1].plot([0.8, 4.2], [0.4, 2.1], "k:", lw=1.5, label="half the identity")
    ax2[1].axvline(2.0, color="#d1495b", ls=":", lw=2)
    ax2[1].set_xlabel("arm law degrees of freedom $\\nu$")
    ax2[1].set_ylabel("area tail index")
    ax2[1].set_title("Transmission of the arm law"); ax2[1].legend(frameon=False, fontsize=8)
    fig2.tight_layout(); fig2.savefig("fisk_evidence_tests.png", dpi=150)
    print("\nsaved fisk_evidence_hist.png, fisk_evidence_tests.png")
\end{Verbatim}
}

\section{\texorpdfstring{An Algebraic Proof of Theorem C}{An Algebraic Proof of Theorem C}}\label{sec:thmCproof}

This section proves Theorem C. It is placed here rather than in Section~\ref{sec:cm} because
nothing in the body of the paper depends on the argument, only on the statement. The proof is
self contained given Theorem A and the definitions of Section~\ref{sec:cm}; it does not use
Theorem B, and it involves no sampling or geometric argument at any point.

Throughout, the transform carrying $\vx$ to $\vk$ is
$\hat{f}(\vk)=\int_{\R^{d}}e^{i\vk*\vx}f(\vx)\,d\vx$, and $\Four(\cdot)|_{\vk\to\vx}$ denotes its
inverse; with this convention $\va_{j}*\nabla$ has symbol $-i\vk*\va_{j}$. For a Smirnov word
$c\in D(n,l)$ we write $\{\eta_{j}\}_{j=1}^{l}$ for the multiplicities of its letters, so that
$\sum_{j}\eta_{j}=n$.

\subsection*{The Laplace transform of the coefficient}

Under the convention of Equation~\eqref{eq:polytope} the path polytope $P(\vq,c)$ is the product,
over the $l$ letters, of the full simplices
$\{\vec{\lambda}\in\R^{\eta_{j}}_{+}\mid\sum_{k}\lambda_{k}\le q_{j}\}$, whence
\begin{equation}\label{eq:appexpansion}
\mu\big(P(\vq,c)\big)=\prod_{j=1}^{l}\frac{q_{j}^{\eta_{j}}}{\eta_{j}!}.
\end{equation}
This is Equation~(19) of \cite{WVLR}. Under the terminal convention
$\sum_{k}\lambda_{k}e_{c_{k}}=\vq$ the factors are instead slanted simplices of dimension
$\eta_{j}-1$, each carrying a further factor $\sqrt{\eta_{j}}$; Equation~\eqref{eq:polytope} is
adopted precisely so that those factors do not appear.

Every term of Equation~\eqref{eq:cm} is nonnegative on $\R^{l}_{+}$, so for real $\vec{s}$
Tonelli's theorem \cite{Folland} carries the $l$ dimensional Laplace transform through the double
sum, and Equation~\eqref{eq:appexpansion} evaluates each factor:
\begin{equation}\label{eq:applaplace}
\Lap\Big|^{\vec{s}}_{\vq}\binom{\sum q_{i}}{q_{1},\dots,q_{l}}
=\sum_{n=0}^{\infty}\sum_{c\in D(n,l)}\prod_{j=1}^{l}
\Lap\Big|^{s_{j}}_{q_{j}}\Big(\frac{q_{j}^{\eta_{j}}}{\eta_{j}!}\Big)
=\sum_{n=0}^{\infty}\sum_{c\in D(n,l)}\prod_{j=1}^{l}s_{j}^{-(\eta_{j}+1)}.
\end{equation}
Setting $z_{j}=s_{j}^{-1}$, the right hand side is $\prod_{j}z_{j}$ times
$\sum_{n}\sum_{c\in D(n,l)}\prod_{j}z_{j}^{\eta_{j}}$, which is the generating function of
Smirnov words in the convention fixed after Equation~\eqref{eq:cm}, namely words which are not
required to use every letter \cite{WVLR}:
\begin{equation}\label{eq:appsmirnov}
\sum_{n=0}^{\infty}\sum_{c\in D(n,l)}\prod_{j=1}^{l}z_{j}^{\eta_{j}}
=\frac{1}{1-\sum_{j=1}^{l}\frac{z_{j}}{1+z_{j}}}.
\end{equation}
Since $z_{j}/(1+z_{j})=(s_{j}+1)^{-1}$, this gives
\begin{equation}\label{eq:apptransform}
\Lap\Big|^{\vec{s}}_{\vq}\binom{\sum q_{i}}{q_{1},\dots,q_{l}}
=\frac{1}{\prod_{j=1}^{l}s_{j}}\cdot\frac{1}{1-\sum_{j=1}^{l}\frac{1}{s_{j}+1}},
\end{equation}
valid whenever $\mathrm{Re}\,s_{j}>0$ for every $j$ and
\[
\sum_{j=1}^{l}\frac{1}{|s_{j}+1|}<1,
\]
this being the condition under which the geometric series expanding
Equation~\eqref{eq:appsmirnov} converges absolutely. The identity extends from real to complex
$\vec{s}$ by that absolute convergence, and Theorem A is what makes the left hand side finite.

\subsection*{Specialisation to a direction system}

Fix $p$ and an isotropic direction system $A_{p}=\{\va_{j}\}_{j=1}^{p}$ in the sense of
Equation~\eqref{eq:isotropy}, take $l=p$, and evaluate Equation~\eqref{eq:apptransform} at
\[
s_{j}=p-1+\frac{m^{2}}{dp}-i\vk*\va_{j},\qquad\vk\in\R^{d}.
\]
Write $P=p+\frac{m^{2}}{dp}$, so that $s_{j}+1=P-i\vk*\va_{j}$ and $|s_{j}+1|\ge P$. Hence
$\sum_{j}|s_{j}+1|^{-1}\le p/P<1$ for every real $\vk$, and the hypothesis of
Equation~\eqref{eq:apptransform} is met without further restriction on $\vk$.

Because $\sum_{j}(\vk*\va_{j})v_{j}=\vk*\big(\sum_{j}v_{j}\va_{j}\big)$, the left hand side of
Equation~\eqref{eq:apptransform} at this $\vec{s}$ is the transform in $\vx$ of
\begin{equation}\label{eq:appF}
F_{p}(\vx)=\int_{\R^{p}_{+}}e^{-(p-1+\frac{m^{2}}{dp})\sum_{j}v_{j}}
\binom{\sum v_{i}}{v_{1},\dots,v_{p}}
\delta\Big(\vx-\sum_{j=1}^{p}v_{j}\va_{j}\Big)\,d\vv,
\end{equation}
which is a function and not merely a distribution, by the convergent expansion of Theorem A.
Multiplying Equation~\eqref{eq:apptransform} by $\prod_{j}s_{j}$ and inverting, and noting that
$\prod_{j}\big(p-1+\frac{m^{2}}{dp}+\va_{j}*\nabla\big)$ carries symbol $\prod_{j}s_{j}$ and so
cancels that prefactor exactly,
\begin{equation}\label{eq:appimpl}
\prod_{j=1}^{p}\Big(p-1+\frac{m^{2}}{dp}+\va_{j}*\nabla\Big)F_{p}(\vx)
=\Four\Big(\frac{1}{1-\sum_{j=1}^{p}\frac{1}{P-i\vk*\va_{j}}}\Big)\Big|_{\vk\to\vx}.
\end{equation}

\subsection*{The limit in \texorpdfstring{$p$}{p}}

For $|\vk|<P$,
\[
\sum_{j=1}^{p}\frac{1}{P-i\vk*\va_{j}}
=\frac{1}{P}\sum_{n\ge0}\frac{i^{n}}{P^{n}}\sum_{j=1}^{p}(\vk*\va_{j})^{n}.
\]
By the first condition of Equation~\eqref{eq:isotropy} every odd $n$ contributes nothing; the
$n=1$ term in particular vanishes because it is a multiple of $\sum_{j}\va_{j}$, which is why
equidistribution alone does not suffice. By the second condition the $n=2$ term is exactly
$-p|\vk|^{2}/(dP^{3})$. Every remaining even term obeys
$|\sum_{j}(\vk*\va_{j})^{n}|\le p|\vk|^{n}$ because $|\va_{j}|=1$, and so contributes at order
$O(p^{-4})$. Using $1-p/P=\frac{m^{2}}{dpP}$,
\[
1-\sum_{j=1}^{p}\frac{1}{P-i\vk*\va_{j}}
=\frac{m^{2}}{dpP}+\frac{p|\vk|^{2}}{dP^{3}}+O(p^{-4})
=\frac{m^{2}+|\vk|^{2}}{dp^{2}}\big(1+O(p^{-2})\big).
\]
This is the reason for the particular choice $p-1+\frac{m^{2}}{dp}$: it is what places the
constant term and the $|\vk|^{2}$ term over the common denominator $dp^{2}$. Hence, pointwise in
$\vk$,
\begin{equation}\label{eq:appptwise}
\frac{1}{dp^{2}}\cdot\frac{1}{1-\sum_{j=1}^{p}\frac{1}{P-i\vk*\va_{j}}}
\longrightarrow\frac{1}{m^{2}+|\vk|^{2}}.
\end{equation}

That estimate is not uniform, since the expansion requires $|\vk|<P$. A bound valid for every
real $\vk$ is available separately: from $|P-i\vk*\va_{j}|\ge P$,
\[
\Big|1-\sum_{j=1}^{p}\frac{1}{P-i\vk*\va_{j}}\Big|\ge1-\frac{p}{P}=\frac{m^{2}}{dpP},
\]
so the left hand side of Equation~\eqref{eq:appptwise} has modulus at most
$\frac{1}{m^{2}}\big(1+\frac{m^{2}}{dp^{2}}\big)$, and hence at most $2/m^{2}$ once
$dp^{2}\ge m^{2}$. A constant bound is not integrable on $\R^{d}$, so the limit is taken against
a test function rather than inside the integral defining $\Four$: for $\varphi$ Schwartz the
integrands obtained by multiplying Equation~\eqref{eq:appptwise} by $\varphi(\vk)$ converge
pointwise and are dominated by $\frac{2}{m^{2}}|\varphi|\in L^{1}(\R^{d})$, so the convergence
holds in $\mathcal{S}'(\R^{d})$. Since $\Four$ is continuous on $\mathcal{S}'(\R^{d})$ and the
limiting transform is the function $G(m,\vx)$ away from the origin,
\begin{equation}\label{eq:appfinal}
\lim_{p\to\infty}\frac{1}{dp^{2}}\prod_{j=1}^{p}
\Big(p-1+\frac{m^{2}}{dp}+\va_{j}*\nabla\Big)F_{p}(\vx)
=\Four\Big(\frac{1}{m^{2}+|\vk|^{2}}\Big)\Big|_{\vk\to\vx}=G(m,\vx)
\end{equation}
pointwise for $\vx\ne0$.

\subsection*{The rescaled form}

Equation~\eqref{eq:appfinal} holds for every $m>0$, and at $m=1$ the exponent
$p-1+\frac{m^{2}}{dp}$ becomes $p-1+\frac{1}{dp}$ and carries no $m$. Rescaling the integration
variable of $\Four\big(\frac{1}{m^{2}+|\vk|^{2}}\big)\big|_{\vk\to\vx}$ by $m$ gives
$G(m,\vx)=m^{d-2}G(1,m\vx)$. Evaluate Equation~\eqref{eq:appfinal} at $m=1$ and at the point
$m\vx$, and then replace the integration variable $\vv$ by $m\vv$. Three factors of $m$ appear:
$d\vv$ contributes $m^{p}$; the delta contributes
$\delta(m\vx-m\sum_{j}v_{j}\va_{j})=m^{-d}\delta(\vx-\sum_{j}v_{j}\va_{j})$; and each $\nabla$
acting on a function of $m\vx$ contributes $m^{-1}$, so that
$\prod_{j}\big((p-1+\frac{1}{dp})+\va_{j}*\nabla\big)$ becomes
$m^{-p}\prod_{j}\big(m(p-1+\frac{1}{dp})+\va_{j}*\nabla\big)$. The two factors of $m^{p}$ cancel,
$e^{-(p-1+\frac{1}{dp})\sum_{j}v_{j}}$ becomes $e^{-m(p-1+\frac{1}{dp})\sum_{j}v_{j}}$, the
coefficient is evaluated at $m\vv$, and only $m^{d-2}m^{-d}=m^{-2}$ survives. This is
Equation~\eqref{eq:thmC}, and Theorem C is proved.

By the homogeneity of Equation~\eqref{eq:appexpansion} the rescaled coefficient is graded by the
number of linear segments,
\[
\binom{m\sum_{j}v_{j}}{mv_{1},\dots,mv_{p}}
=\sum_{n=0}^{\infty}m^{n}\sum_{c\in D(n,p)}\prod_{j=1}^{p}\frac{v_{j}^{\eta_{j}}}{\eta_{j}!},
\]
since $\sum_{j}\eta_{j}=n$, so that each path polytope with $n$ linear segments carries exactly
one factor of $m$ per segment. This is the weight Feynman adopted for the checkerboard measure
\cite{FeynmanHibbs}.

\section{Statements}

\subsection*{Conflict of Interest}
The author declares that there is no conflict of interest.

\subsection*{Funding Information}
This research did not receive any specific grant from funding agencies in the public, commercial, or not-for-profit sectors.

\subsection*{Ethics Statement}
Not applicable. This work involved no human participants, animal subjects, or data requiring ethical approval.

\subsection*{AI Disclosure Statement}
AI was used in the preparation of this document. All ideas and proofs are the original work of the author; AI was primarily involved in proofreading and clarification. AI was also used in the creation of the experimental evidence for Section 8 (the experiment contained in Section 10).

\subsection*{Data Availability}
No external data sets were used. The numerical data in Section~\ref{sec:numerics} are generated by the script \texttt{fisk\_evidence\_v2.py}, which is included with the manuscript.

\end{document}